\newtheorem{asmtn}{Assumption}
\newtheorem{theorem}{Theorem}
\newtheorem{lemma}{Lemma}
\newtheorem{corollary}{corollary}
\theoremstyle{definition}
\def\mb{\mathbf}
\title{Large-scale adaptive multiple testing for sequential data controlling false discovery and nondiscovery rates}
\author{Rahul Roy, Shyamal K. De, Subir Kumar Bhandari}
\date{}
\begin{document}
	\maketitle
	\begin{abstract}
   In modern scientific experiments, we frequently encounter data that have large dimensions, and in some experiments, such high dimensional data arrive sequentially or in stages rather than full data being available all at a time. We develop multiple testing procedures with simultaneous control of false discovery and nondiscovery rates when $m$-variate data vectors $\mb{X}_1, \mb{X}_2, \dots$ are observed sequentially or in groups and each coordinate of these vectors leads to a hypothesis testing. Existing multiple testing methods for sequential data uses fixed stopping boundaries that do not depend on sample size, and hence, are quite conservative when the number of hypotheses $m$ is large. We propose sequential tests based on adaptive stopping boundaries that ensure shrinkage of the continue sampling region as the sample size increases. Under minimal assumptions on the data sequence, we first develop a test, i.e., a stopping and a decision rule, based on an oracle test statistic such that both false discovery rate (FDR) and false nondiscovery rate (FNR) are nearly equal to some prefixed levels with strong control at these levels. Under a two-group mixture model assumption, we propose a data-driven stopping and decision rule based on local false discovery rate statistic that mimics the oracle rule and guarantees simultaneous control of FDR and FNR asymptotically as $m$ tends to infinity. Both the oracle and the data-driven stopping times are shown to be finite (i.e., proper) with probability 1 for all finite $m$ and converge to a finite constant as $m$ grows to infinity. Further, we compare the data-driven test with the existing ``gap'' rule proposed in \cite{HB21} and show that the ratio of the expected sample sizes of our method and the ``gap'' rule tends to zero as $m$ goes to infinity. Extensive analysis of simulated datasets as well as some real datasets illustrate the superiority of the proposed tests over some existing methods.
	\end{abstract}
 
		\vspace{9pt}
		\noindent {\it Key words and phrases:}
Average sample number, Compound decision rule, False discovery and nondiscovery proportions, Local false discovery rate, Multiple comparison, Sequential sampling, Stopping rule.

				\section{Introduction} 
    
			Technological advancements in the past few decades have challenged statisticians with the task of drawing inferences for high-dimensional or ultrahigh-dimensional data which led to the development of a myriad of innovative statistical methodologies suited for those settings. Among many high-dimensional statistical problems, large-scale multiple testing got much attention due to its vast applications in many areas including genetics, medical imaging, and astrophysics, to name a few. For instance,``high throughput'' devices such as microarrays produce gene expression levels and a statistician needs to compare sick and healthy subjects for thousands or even millions of genes simultaneously (\cite{dudoit2008multiple}). Multiple testing of a large number of hypotheses also arises in many spatial settings such as disease mapping (\cite{green2002hidden}), astronomical surveys (\cite{miller2001controlling}), and public health surveillance (\cite{caldas2006controlling}), among others. 
A widely popular error metric for large-scale multiple testing is the false discovery rate (FDR), proposed in the seminar paper by \cite{BH95}, which is the expected value of false discovery proportion (FDP) defined as the ratio of the number of false rejections divided by the total number of rejections.
Substantially large number of statistical methods have been developed to control FDR or some variations of it such as marginal FDR (mFDR, \cite{GL02}), positive FDR (pFDR, \cite{Storey02}) and false discovery exceedance (FDX, \cite{perone2004false}) and some of these methodologies are developed in the context of large-scale multiple testing when $m$ grows to infinity. To review the literature on large-scale multiple testing and FDR controlling methodologies, we refer to \cite{cai2011compound}, \cite{cao2022optimal}, \cite{SC07}, \cite{Sea15} and the references therein.  

 The above multiple testing procedures are applicable when full data is available at the time of testing and these are known as fixed-sample-size tests. However, there are a number of applications when, instead of full data being available at the time of analysis, data arrive sequentially one at a time or in groups or stages. A statistician may need to draw inferences each time new data arrives based on the data available till that time point. For such sequentially observed data, sometimes referred to as streaming data, one may need to test multiple hypotheses simultaneously. A very natural application of sequential multiple testing is in clinical trials with multiple endpoints where patients are collected sequentially or in groups. At each interim stage of such trials, a decision is made whether to accept or reject or to collect more samples. 
 
			Sequential multiple testing has evolved into its current form in the last decade. Three approaches of sequential testing have been adopted by statisticians. In the first approach, decisions for all the hypotheses are made simultaneously. This approach is favorable when the data arrive as units or vectors of size $m$ and therefore the cost of observing individual coordinates corresponding to different data-streams for each unit is insignificant in respect to that of engaging a new unit. The authors who followed this approach include \cite{DB12a,DB12b,DB15,SF17,SF19,HB21}. 
			
			In the second approach, each data-stream is considered to be sequential results of conducting a different experiment. Therefore it is permissible for data-streams to be stopped at different time points. This approach is helpful when the cost of conducting different experiments is higher than involving a new unit. Related works include \cite{BS14,BS20,Bar18}.
			
			The third approach is known as online multiple testing. Here, instead of the sequential arrival of new observations, new hypotheses are taken into consideration and the objective is to control some measure of false positives like $FDR$ (\cite{FS08,AR14,JM18,RYWJ17,RZWJ18,GSW21}) and $FWER$ (\cite{TR21}). 
			
    In this article, we are following the first approach. i.e., the number of hypotheses($m$) is fixed and new observations appear as vectors of size $m$. In the literature, among the first and second approaches, the following pairs of error matrices were considered to be controlled: $FWER$-$I$ and  $FWER$-$II$ (\cite{DB12a,DB12b,BS14,SF17}); $FDR$ and $FNR$ (\cite{BS20,HB21}); and generalized $FWER$ pair (\cite{DB15,SF19}). These existing methods perform well when the number of hypotheses is small. But in a large-scale setting, the controls become conservative. i.e., as $m$ grows larger, the attained error matrices become much smaller than the supposed control which implies a large stopping sample size. In the non-sequential version of multiple testing, the articles \cite{SC07,SC09,CS09,Sea15} provide a line of optimal methods that work well in large-scale scenarios. In this article, we have followed their methods for obtaining a sequential multiple testing method appropriate for a large number of hypotheses. The main contributions of this paper are the following. I. We provide an oracle rule for a general setup with adaptive boundaries that is proper and achieves $FDR$ and $FNR$ control under minimal conditions. II. Under the two-group mixture model, we develop an adaptive data-driven rule that achieves asymptotic control of $FDR$ and $FNR$ as the number of hypotheses $m$ grows to infinity. III. The stopping time for the oracle and the data-driven rule under the two-group mixture model converges weakly to a finite natural number as $m$ tends to infinity. IV. Finally, we show that the Asymptotic relative efficiency of the `GAP' rule compared to our oracle rule under the two-group mixture model tends to zero. The article is organized in the following way: in section 2 we have described the basic setup and required assumptions. In section 3 the oracle rule for general setup has been proposed and the Theorem \ref{thm_1} is expressed. Section 4 considers a more specialized two-group mixture model set up and the updated oracle rule for this setup is formulated. Also in section 4 a data-driven rule has been proposed and Theorems \ref{thm_2} and \ref{thm_3} have been stated. Section 5 establishes the asymptotic dominance of stopping times of the asymptotically optimal rule of \cite{HB21} on our oracle rule. Section 6 verifies the results stated in this article numerically with simulation studies. In section 7 we have applied our method on two real data sets. Proofs of the theorems and the lemmas are included in the appendices \ref{appendixA}-\ref{appendixD}.

				\section{Model Description} 
			
			Suppose, with respect to some measurable space $(\Omega,\sigma,\mu)$, we are required to test $m$ pairs of hypotheses simultaneously. A vector of binary ($0,1$) random variables $\boldsymbol{\theta}=(\theta_1,\theta_2,\cdots,\theta_m)$ determines the true states of the hypotheses such that, if $\theta_i$ is $0$($1$), then the $i$-th null (alternative) hypotheses is true. A sequence of random $m$-vectors $\mathbf{S_n}= (S_n^1,S_n^2,\cdots,S_n^m)$ are defined on a sequence of monotone increasing sigma fields $\{\sigma_n| n \in \mathbb{N} \}$ such that $\sigma(\cup_{n=1}^\infty \sigma_n)\subseteq\sigma $. $\{S_n^i\}$ is called the $i$-th test statistic at `time' $n$ and is used to test for the $i$-th pair of hypotheses. 
			
			A sequential test for testing the $m$ hypotheses is defined as the pair $(T,\boldsymbol{\delta})$; Where $T$ is a $\{\sigma_n\}$-stopping time (i.e., if $\mathbf{S_n}$ is observed up to time point $T\in \mathbb{N}$, then $T$ is defined on the $\sigma$-field $\sigma_n$). $\boldsymbol{\delta}=(\delta^1,\delta^2,\cdots,\delta^m)\in \{0,1\}^m$ is a decision vector defined on $\sigma$-field $\sigma_T$. We say we reject $i$-th null hypothesis (or simply reject) if $\delta^i=1$ and accept otherwise. The dependency of $\boldsymbol{\delta}$ on $T$ is suppressed for simplicity of notation. Any sequential test $(T,\boldsymbol{\delta})$ defined in such a way makes $V=\sum_{i=1}^m(1-\theta^i)\delta^i$ false rejections among $R=\sum_{i=1}^m \delta^i$ rejections and $W=\sum_{i=1}^m\theta^i(1-\delta^i)$ false acceptance among $m-R=\sum_{i=1}^m(1-\delta^i)$ acceptance when the true states of the hypotheses are given by $\mathbf{\theta}$. We follow a Bayesian path by considering $\mathbf{\theta}$ to be random with some distributional assumption on it. In light of the above discussion, $FDR$ and $FNR$ for the sequential multiple test $(T,\boldsymbol{\delta})$ are defined as follows:
			
			\begin{align}
				\label{fdrd}FDR = &E\bigg( \frac{\sum_{i=1}^m(1-\theta^i)\delta^i}{(\sum_{i=1}^m \delta^i)\vee1}\bigg) \\
				\label{fnrd}	FNR = & E \bigg(\frac{\sum_{i=1}^m(1-\delta^i)\theta^i}{(\sum_{i=1}^m (1-\delta^i))\vee1} \bigg) 
			\end{align}
			
			For simplicity, we have omitted the dependence of the test $(T,\boldsymbol{\delta})$ in the definition of $FDR$ and $FNR$. 
			Fix $\alpha,\beta \in (0,1)$. Our objective is to find a test $(T,\boldsymbol{\delta})$ which controls $FDR$ and $FNR$ at levels $\alpha$ and $\beta$ respectively. Keeping this objective in mind we define a class of tests:
			
			\begin{equation*}
				\Delta(\alpha,\beta)= \{(T,\boldsymbol{\delta}): FDR\leq\alpha  \ \&  \ FNR\leq\beta  \}
			\end{equation*}
			
			A desirable criterion for any sequential simultaneous test $(T,\boldsymbol{\delta})$ is to belong in the class $\Delta(\alpha,\beta)$. In literature, among the existing sequential multiple testing methods, \cite{HB21} and \cite{BS20} provide sequential multiple testing rules $(T,\mathbf{\delta}\in\Delta(\alpha,\beta))$. The premise in \cite{BS20} is different from us as different coordinates are allowed to have different stopping times. The asymptotically optimal decision rule in \cite{HB21} is more suitable for $FWER$ type $I$ and $II$ control and therefore results in a conservative control for $FDR$ and $FNR$. For large-scale multiple testing problems, such rules are not ideal. Along the lines of studies by Sun et al. for multiple testing problems using local fdrs  
			
			\cite{SC09} defined Local Index of Significance(LIS) for $i$ as:
			\begin{equation}
				\label{tni}
				t_n^{*i}=\mathbb{P}(\theta^i=0|\mathbf{S_n}) ; \ i=1(1)m
			\end{equation} 
			Suppose, given $\theta_i=j$, the joint distribution of $\mathbf{S_n}$ be $f^*_n(\mathbf{S_n}|\theta_i=j); \ j=0,1$. So, 
			\begin{equation}
				\label{tndef}
				t_n^{*i}=\frac{\mathbb{P}(\theta_i=0)f^*_n(\mathbf{S_n}|\theta_i=0)}{\mathbb{P}(\theta_i=0)f^*_n(\mathbf{S_n}|\theta_i=0)+\mathbb{P}(\theta_i=1)f^*_n(\mathbf{S_n}|\theta_i=1)}
			\end{equation}
			Although, $LIS$ statistics are generally defined for continuous random variables, this definition can also be used for discrete test statistics. In that case, $f^*_n(\mathbf{S_n}|\theta_i=j)$ is discrete. 
			
			Jointly $t_n^{*1},t_n^{*2},\cdots,t_n^{*m}$ will be used to test the hypotheses
			\begin{equation*}
				H_{0i}: \theta_i = 0 \quad \text{vs. } \quad  H_{1i}: \theta_i = 1 
			\end{equation*}

			A large number of tests can be generalized in this form. For example, suppose, we want to compare the means of some variable of two homogeneous groups where the observations from new units of each group are obtained sequentially. A sequential two-sample t-test is appropriate in such a situation. \cite{CG84} provides a brief review on this topic. Here, if the size of the observations from the $j$-th group is $n_j$, $j\in\{1,2\}$, we can consider the sequential $t$ statistics with total sample size $n (= n_1 + n_2)$ to be $S_n$ and thus our setup applies here.   
			
			In the multiple testing literature, authors have mostly used the p values as the test statistics. For the sequential cases too, we can compute and upgrade the p values at each sample size($n$). Similarly, we can compute the $LIS$ statistic given the null distribution and alternative distributions are known. If the observations are continuous, we can also compute a simple transformation $S_{ni}= \Phi^{-1}(p_{ni})$, where, $p_{ni}$ is the p value corresponding to the $i$-th coordinate with sample size $n$ and $\Phi()$ is the standard normal CDF. Such transformations are very useful for testing purposes. For continuous random variables, \cite{E04} proposed the use of z-score.  \cite{SC07} showed that, when the both-sided alternatives are asymmetric, the z- score approach has some advantages over the p value based methods. In our analysis, we will take shelter of the z score based method whenever possible. 
			
			\section{Oracle rule using LIS statistics}
			In this section we propose the oracle adaptive rule for general set up. This rule is based on the perspective of a wise oracle who knows the exact joint distribution of the data streams at each time point. Fix $\alpha$, $\beta\in(0,1)$. The rule is described in Algorithm 1.
			\begin{algorithm} 
				\caption{Algorithm for oracle rule in the general set up.}
				
				\begin{algorithmic}[]
					\State Set $n\gets k$ and $\tau\gets0$. \Comment i.e., start with pilot sample size $k$.
					\While {$\tau=0$}
					\State Rearrange  ($t_n^{*1},t_n^{*2},\cdots,$ $t_n^{*m}$) as ($t_{n}^{*(1)} \leq t_{n}^{*(2)} \leq \cdots \leq t_{n}^{*(m)}$).
					\If{$t_n^{*(1)}\leq \alpha$}
					\State $\mathrm{r}_n\gets
					\max\{q \in [m]:\sum_{l=1}^{q}t_n^{*(l)}/q\leq \alpha\} $
					\Else 
					\State $\mathrm{r}_n\gets0$	
					\EndIf
					\If{$t_n^{*(m)}\geq 1-\beta$}
					\State $\mathrm{a}_n \gets
					\max\{q\in[m]:\sum_{l=1}^q(1-t_n^{*(m-l+1)})/q\leq \beta\}$
					\Else 
					\State $\mathrm{a}_n \gets 0$	
					\EndIf
					\If{$\mathrm{r}_n+\mathrm{a}_n<m$}
					\State $n\gets n+1$
					\Else
					\State $\tau\gets n$
					\EndIf
					\EndWhile
					\State For some $s = (m-\mathrm{a}_{\tau}) (1) \mathrm{r}_\tau $ define $\mathfrak{R}^*= \{ i : t_{\tau}^{*i} \leq t_{\tau}^{*(s)} \} $ 
					\State Let,  $\delta_i^*= 1(0)$ if $i \in \mathfrak{R}^* (i\notin \mathfrak{R}^*)$. And let $\boldsymbol{\delta}^* = (\delta_1^*,\delta_2^*\cdots,\delta_m^*)$.
					\State Oracle Rule: $(\tau,\boldsymbol{\delta}^*)$.
				\end{algorithmic}
				
			\end{algorithm}

			\paragraph{Discussion 1:}
			If  $\mathrm{r}_\tau+\mathrm{a}_\tau = m$, we have only one choice of $s$ given by $s=\mathrm{r}_\tau= m - \mathrm{a}_\tau$. i.e., we have a unique decision. But otherwise $s$ can take any one of the values in $\{\mathrm{r}_\tau,\cdots,m-\mathrm{a}_\tau\}$. For a proper choice of $p_1$ we can define $s$ to be \begin{equation*}
				s = \lfloor (m - \mathrm{a}_\tau+1 - \mathrm{r}_\tau ) p_1 \rfloor
			\end{equation*}
			Where $\lfloor.\rfloor$ is the greatest integer function. Now if the probability of a hypothesis to be null is $\pi_0$, the expected number of null hypotheses is $m\pi_0$. So if the probability of a null hypothesis to be rejected is $\alpha_1$ and the probability of an alternative hypothesis to be falsely accepted is $\beta_1$, then the expected total number of rejections is $m(\pi_0\alpha_1+(1-\pi_0)(1-\beta_1))$ and expected total number of acceptance is $m(\pi_0(1-\alpha_1)+(1-\pi_0)\beta_1)$. We equate the mFDR i.e. the proportion of expected false rejections and expected rejection with $\alpha$ and  mFNR i.e. the proportion of expected false acceptance and expected acceptance with $\beta$ to get estimates of $\alpha_1$ and $\beta_1$ as 
			\begin{align*}
				\alpha_1 & = \frac{(1-\pi_0)/ \pi_0 - \beta/(1-\beta)}{(1-\alpha)/ \alpha - \beta/(1-\beta)}\\
				\beta_1 & = \frac{ \pi_0/(1-\pi_0) - \alpha/(1-\alpha)}{(1-\beta)/ \beta - \alpha/(1-\alpha)}
			\end{align*}
			The idea is that those undecided hypotheses potentially have the highest $LIS$ values among those rejected (if they are rejected) or potentially have the lowest $LIS$ values among those accepted. Hence they are most likely to be false positives or false negative decisions. Therefore we want to divide the undecided region into two parts proportional to expected false rejection and expected false acceptance. i.e., we choose $p_1 = \pi_0 \alpha_1 / (1-\pi_0) \beta_1$, where the values of $\alpha_1$ and $\beta_1$ are given as above.
			
			\paragraph{Discussion 2:}
			From algorithm 1, we can define for stage $n$ the  adaptive lower cutoff, denoted by $t_n^{*l}$ as,
			\begin{equation}
				\label{adaporgenlower}
				t_n^{*l}= \begin{cases}
					t_n^{*(\mathrm{r}_n+1)} & \text{ if } \mathrm{r}_n < m \\
					1 & \text{ if } \mathrm{r}_n = m
				\end{cases}
			\end{equation}
			Any hypothesis $i$ with $LIS$ value ($t_n^{*i}$) less than $t_n^{*l}$ is considered to be a potential alternative. Similarly we can define the adaptive upper cutoff for stage $n$ denoted by $t_n^{*u}$ as,
			\begin{equation}
				\label{adaporgenupper}
				t_n^{*u}= \begin{cases}
					t_n^{*(m-\mathrm{a}_n)} & \text{ if } \mathrm{a}_n < m \\
					1 & \text{ if } \mathrm{a}_n = m
				\end{cases}
			\end{equation}
			Any hypothesis $i$ with $LIS$ value ($t_n^{*i}$) greater than $t_n^{*l}$ is considered to be a potential null.
			
			\begin{lemma}
				\label{lma1}
				Suppose all the components of $\mathbf{S_n}$ are continuous. For some $n\in \mathbb{N}$, $\mathrm{r}_n+\mathrm{a}_n\geq m$ if and only if $t_{n}^{*l}>t_{n}^{*u}$ almost surely. Further, if there are some discrete variables present in the mix then, $t_{n}^{*l}>t_{n}^{*u}$ implies $\mathrm{r}_n+\mathrm{a}_n\geq m$. 
			\end{lemma}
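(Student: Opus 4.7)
The plan is to recast both sides of the equivalence into a comparison of indices of the order statistics of $(t_n^{*1},\dots,t_n^{*m})$, which is the only structural object the algorithm uses. In the generic range $\mathrm{r}_n<m$ and $\mathrm{a}_n<m$, the definitions in \eqref{adaporgenlower}--\eqref{adaporgenupper} give $t_n^{*l}=t_n^{*(\mathrm{r}_n+1)}$ and $t_n^{*u}=t_n^{*(m-\mathrm{a}_n)}$. The key arithmetic observation is that $\mathrm{r}_n+\mathrm{a}_n\ge m$ is equivalent to $\mathrm{r}_n+1>m-\mathrm{a}_n$; that is, the lower-cutoff index strictly exceeds the upper-cutoff index. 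So the whole lemma reduces to translating an inequality between the two ranks into an inequality between the corresponding order statistics, which is the place where the continuous-versus-discrete distinction enters.

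I would first dispose of the ``only if'' direction, which requires no distributional hypothesis and therefore also handles the discrete half of the lemma. Assume $t_n^{*l}>t_n^{*u}$. If either $\mathrm{r}_n=m$ or $\mathrm{a}_n=m$, then trivially $\mathrm{r}_n+\mathrm{a}_n\ge m$. Otherwise suppose for contradiction that $\mathrm{r}_n+\mathrm{a}_n<m$, so $\mathrm{r}_n+1\le m-\mathrm{a}_n$. By the monotonicity of order statistics we then have $t_n^{*(\mathrm{r}_n+1)}\le t_n^{*(m-\mathrm{a}_n)}$, i.e.\ $t_n^{*l}\le t_n^{*u}$, which contradicts the hypothesis. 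Hence $\mathrm{r}_n+\mathrm{a}_n\ge m$.

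For the ``if'' direction under the continuity assumption, I would invoke the fact that, when every component of $\mathbf{S_n}$ is continuous with positive conditional densities $f^*_n(\cdot\mid\theta_i=j)$, the LIS values $\{t_n^{*i}\}_{i=1}^m$ defined in \eqref{tndef} have a joint absolutely continuous distribution supported in $(0,1)^m$. In particular the events $\{t_n^{*i}=t_n^{*j}\}$ for $i\ne j$ and $\{t_n^{*i}=1\}$ each have probability zero, so almost surely the order statistics are strictly ordered and strictly less than $1$. Now assume $\mathrm{r}_n+\mathrm{a}_n\ge m$. In the generic sub-case $\mathrm{r}_n,\mathrm{a}_n<m$ we have $\mathrm{r}_n+1>m-\mathrm{a}_n$, and strict order gives $t_n^{*l}=t_n^{*(\mathrm{r}_n+1)}>t_n^{*(m-\mathrm{a}_n)}=t_n^{*u}$ almost surely. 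In the boundary sub-case $\mathrm{r}_n=m$ we have $t_n^{*l}=1$ while $t_n^{*u}\le t_n^{*(m)}<1$ almost surely, so again $t_n^{*l}>t_n^{*u}$; the sub-case $\mathrm{a}_n=m$ is symmetric.

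The main obstacle I expect is not the algebra but the justification of the ``almost sure distinctness'' claim: one needs to be explicit that the continuity of the components of $\mathbf{S_n}$ together with the representation \eqref{tndef} transfers absolute continuity to the derived statistics $t_n^{*i}$, so that ties and the value $1$ are null events. Once that lemma is in hand, the proof is a clean two-case comparison of ranks, and in the discrete case one simply drops the ``if'' implication because ties among LIS values can occur with positive probability, which is precisely why only the one-sided conclusion survives.
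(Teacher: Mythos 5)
Your proposal is correct and follows essentially the same route as the paper's proof: the direction $t_n^{*l}>t_n^{*u}\Rightarrow \mathrm{r}_n+\mathrm{a}_n\geq m$ by comparing the ranks $\mathrm{r}_n+1$ and $m-\mathrm{a}_n$ (valid without continuity, hence covering the discrete case), and the converse by using that under continuity the $t_n^{*i}$ are almost surely distinct and lie strictly inside $(0,1)$, with the boundary cases $\mathrm{r}_n=m$ or $\mathrm{a}_n=m$ treated separately. The only point worth noting is that you explicitly flag the almost-sure distinctness of the LIS values as the step needing justification, whereas the paper simply asserts it; your labeling of the two directions is swapped relative to the standard ``if''/``only if'' convention, but that is cosmetic.
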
 
			Proof of lemma \ref{lma1} is in appendix \ref{appendixD}. From lemma \ref{lma1}, we see that, the time taken for arriving at the criteria $t_{n}^{*l}>t_{n}^{*u}$ is equal to $\tau$ almost surely if all the components in $\mathbf{S_n}$ are continuous, whereas, for the presence of discrete random variables, the corresponding time is almost surely greater than or equal to $\tau$. Due to the analytic advantage of this new criteria, we redefine our stopping rule as: 
			
			\begin{equation}
				\label{eqn:newstoppingtime}
				\tau = \inf\{n \in \mathbb{N}: t_{n}^{*l}>t_{n}^{*u}\}
			\end{equation}
			
			To ensure the termination of the procedure at a finite time, we make the following assumption: 
			
			\begin{asmtn}
				\label{asm1}
				Under $H_{0i}$, $t_n^{*i} \stackrel{\text{p}}{\rightarrow}1$ and under  $H_{1i}$, $t_n^{*i} \stackrel{\text{p}}{\rightarrow}0$ $\forall i=1(1)m$.
				
			\end{asmtn}
			This assumption makes sure that as we observe more data, evidence towards the truth increases weakly. 
			
			For some $n\in\mathbb{N}$, $[n]$ denotes the set $\{1,2,\cdots,n\}$
			
			The following theorem proves that the oracle rule for general setup is proper and controls both $FDR$ and $FNR$. 
			
			\begin{theorem}
				\label{thm_1}
				Fix $\alpha,\beta\in (0,1)$. Suppose \textbf{Assumption  \ref{asm1}}  holds and $m\in \mathbb{N}$. Let ($\tau,\boldsymbol{\delta}^*$) be the oracle rule characterized by algorithm 1.  
				then 
				\begin{equation}
					\label{stptm}
					\mathbb{P}(\tau<\infty)=1
				\end{equation} 
				And further, 
				\begin{equation}
					(\tau,\boldsymbol{\delta}^*) \in \Delta(\alpha,\beta)
				\end{equation}
			\end{theorem}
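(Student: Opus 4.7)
The theorem splits into properness, $\mathbb{P}(\tau<\infty)=1$, and membership in $\Delta(\alpha,\beta)$; I would prove them in that order.

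\textit{Properness.} Fix $\boldsymbol{\theta}$ and let $m_0,m_1$ denote the numbers of true nulls and alternatives. By Assumption~\ref{asm1}, for any $\epsilon>0$ the event
\[
A_n=\bigl\{t_n^{*i}<\epsilon\ \forall\,i\text{ alternative},\ t_n^{*i}>1-\epsilon\ \forall\,i\text{ null}\bigr\}
\]
has probability tending to one (by union bound over the $m$ coordinates). Pick $\epsilon<\min(\alpha,\beta)$. On $A_n$ the ordered LIS values segregate: the $m_1$ smallest are the alternatives (all $<\epsilon$) and the $m_0$ largest are the nulls (all $>1-\epsilon$). Hence $\sum_{l=1}^{m_1}t_n^{*(l)}/m_1$ and $\sum_{l=1}^{m_0}(1-t_n^{*(m-l+1)})/m_0$ are each below $\epsilon$, so $\mathrm{r}_n\geq m_1$ and $\mathrm{a}_n\geq m_0$; this in turn forces $t_n^{*l}\geq 1-\epsilon>\epsilon\geq t_n^{*u}$, after handling the corner cases $\mathrm{r}_n=m$ or $\mathrm{a}_n=m$ via definitions \eqref{adaporgenlower}--\eqref{adaporgenupper}. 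So $\tau\leq n$ on $A_n$, and letting $n\to\infty$ and integrating over $\boldsymbol{\theta}$ yields $\mathbb{P}(\tau<\infty)=1$.

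\textit{Error control.} The key deterministic fact is the monotonicity of cumulative means of sorted data: for $x_{(1)}\leq\cdots\leq x_{(m)}$, the partial mean $\bar{x}_s=s^{-1}\sum_{l\leq s}x_{(l)}$ satisfies $\bar{x}_{s+1}-\bar{x}_s=(x_{(s+1)}-\bar{x}_s)/(s+1)\geq 0$. Applied to $(t_\tau^{*(l)})$ together with the definition of $\mathrm{r}_\tau$, this yields $\bar{x}_s\leq\alpha$ for every $s\leq\mathrm{r}_\tau$; applied to $y_l:=1-t_\tau^{*(m-l+1)}$ and the definition of $\mathrm{a}_\tau$, it yields $\bar{y}_q\leq\beta$ for every $q\leq\mathrm{a}_\tau$. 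Because the algorithm chooses $s\in[m-\mathrm{a}_\tau,\mathrm{r}_\tau]$ (nonempty since the stopping condition gives $\mathrm{r}_\tau+\mathrm{a}_\tau\geq m$), both inequalities apply at the selected $s$, with $q=m-s$. Conditioning on $\sigma_\tau$, the decision vector $\boldsymbol{\delta}^*$ and $s$ are $\sigma_\tau$-measurable, and the LIS identity $E[1-\theta^i\mid\sigma_\tau]=t_\tau^{*i}$ gives
\[
E\!\left[\frac{V}{R\vee 1}\,\Big|\,\sigma_\tau\right]=\bar{x}_s\,\mathbf{1}\{s\geq 1\}\leq\alpha,
\]
so $FDR\leq\alpha$ on taking expectations. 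The symmetric calculation with $W=\sum_i\theta^i(1-\delta^i)$, $m-R$, and $\bar{y}_{m-s}$ produces $FNR\leq\beta$.

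\textit{Main obstacle.} The delicate step is the conditional-expectation identity $E[1-\theta^i\mid\sigma_\tau]=t_\tau^{*i}$ at the random stopping time $\tau$: the LIS is defined as conditional probability given $\mathbf{S}_n$, whereas the tower step wants conditioning on $\sigma_\tau$. This requires either an implicit sufficiency assumption that $\sigma(\mathbf{S}_n)$ carries all the information about $\boldsymbol{\theta}$ that sits in $\sigma_n$, or a standard optional-sampling argument for the conditional-probability martingale $(t_n^{*i})_{n\geq 1}$ adapted to $\{\sigma_n\}$. I would state the operative convention precisely before invoking it; once this is in place, the rest is the sorting lemma plus routine bookkeeping, including the corner cases flagged in the properness step.
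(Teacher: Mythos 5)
Your proposal is correct and follows essentially the same route as the paper: properness via a union bound over coordinates combined with Assumption~\ref{asm1}, and error control via the monotonicity of partial means of the sorted LIS values together with the tower property $E[1-\theta^i\mid\sigma_\tau]=t_\tau^{*i}$ (the content of the paper's Lemmas~\ref{lma5} and~\ref{lma6}). The one place you go beyond the paper is in explicitly flagging that the conditional-expectation identity at the random time $\tau$ needs an optional-sampling or sufficiency justification — the paper's proof applies it at $\tau$ without comment.
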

			Proof of Theorem \ref{thm_1} is included in appendix \ref{appendixA}.
			
			In practice, it is hard to compute the $LIS$ values under a general setup. Assumption \ref{asm2} refers to a two-grouped mixture model, where, $LIS$ for each coordinate depends on observation in that coordinate only. Under such premise, $LIS$ are proved to be equal to local FDR ($lfdr$) almost surely. \cite{EfB04} explored the idea of $lfdr$ and its application in multiple testing using a two-component mixture model. In the next section, we assume a similar model and develop an oracle (data-driven) rule using $lfdr$ as the test statistic.
			
			\section{Oracle and data-driven rules under two group mixture model}
			\subsection{Oracle Rule}
			In this section, we consider the following assumptions on the statistics $\mathbf{S}_n$: 
			\begin{asmtn}
				\label{asm2}
				\begin{align}
					\label{eqn:theta.dis}\theta_i \stackrel{iid}{\sim} Benoulli(1-\pi_0)\\
					S_n^i|\theta_i \sim f_n \text{ independently}
				\end{align}
			\end{asmtn}
			
			where, $f_n(.)=\theta_i f_{1n}(.)+ (1-\theta_i)f_{0n}(.)$. The $lfdr$ for $i$ at time $n$ is defined as
			
			\begin{equation}
				\label{lfdr}
				t_n^i=\mathbb{P}(\theta_i=0|Z_n^i)
			\end{equation}
			
			\cite{EfB04} showed that \textbf{Assumption \ref{asm2}} implies $t_n^i=\frac{\pi_0f_{0n}(S_n^i)}{f_n(Z_n^i)}$.
			Now we introduce the following lemma :
			
			\begin{lemma} \label{lma2}
				If \textbf{Assumption \ref{asm2}} is true, then
				\begin{equation*}
					\mathbb{P}(t_n^{*i}= t_n^i)=1 \ \quad \forall i=1(1)m
				\end{equation*}
			\end{lemma}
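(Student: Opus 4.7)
The plan is to exploit the product structure implied by Assumption \ref{asm2}. Under that assumption, the pairs $(\theta_i, S_n^i)$ for $i=1,\ldots,m$ form an i.i.d.\ sequence: the $\theta_i$ are i.i.d.\ Bernoulli$(1-\pi_0)$, and conditional on $\boldsymbol\theta$, the $S_n^i$ are mutually independent with $S_n^i$ depending only on $\theta_i$ through the density $f_n$. Hence for each fixed $i$, the pair $(\theta_i, S_n^i)$ is stochastically independent of the collection $\mathbf{S}_n^{-i}:=(S_n^j)_{j\neq i}$.

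From here I would invoke the standard fact that if a random vector $(U,V)$ is independent of a random element $W$, then $U$ is conditionally independent of $W$ given $V$. Applied with $U=\mathbf{1}\{\theta_i=0\}$, $V=S_n^i$, and $W=\mathbf{S}_n^{-i}$, this yields
\begin{equation*}
\mathbb{P}(\theta_i=0\mid \mathbf{S}_n)\;=\;\mathbb{P}(\theta_i=0\mid S_n^i,\mathbf{S}_n^{-i})\;=\;\mathbb{P}(\theta_i=0\mid S_n^i)\quad\text{a.s.}
\end{equation*}
The left-hand side is $t_n^{*i}$ by \eqref{tni} and the right-hand side is $t_n^i$ by \eqref{lfdr}, giving the desired almost sure equality for each $i$.

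To make the conditional-independence step fully explicit one may simply write the joint density. Under Assumption \ref{asm2}, the joint law of $(\boldsymbol\theta,\mathbf{S}_n)$ has density $\prod_{j=1}^m \pi(\theta_j)\,f_n(S_n^j\mid\theta_j)$ with $\pi(0)=\pi_0$, $\pi(1)=1-\pi_0$. Summing out $\theta_j$ for $j\neq i$ and using the definition of conditional probability, the factor corresponding to indices $j\neq i$ cancels between numerator and denominator, leaving exactly the ratio that defines $t_n^i$ as given by \cite{EfB04}, namely $\pi_0 f_{0n}(S_n^i)/f_n(S_n^i)$.

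There is no real obstacle here — the entire content of the lemma is that the mixture/independence structure of Assumption \ref{asm2} makes the multivariate LIS coordinatewise, so the only care needed is to keep track of the null set on which a conditional-probability version is being chosen and to state the argument in a form that covers both the continuous and discrete cases (which is automatic once one works with the joint density/mass function against the appropriate dominating measure).
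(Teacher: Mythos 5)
Your argument is correct and is essentially the paper's own proof: the explicit density computation in your third paragraph (write the joint law as $\prod_j \pi(\theta_j) f_n(S_n^j\mid\theta_j)$, marginalize over $\theta_j$ for $j\neq i$, and cancel the common factor $\prod_{j\neq i} f_n(S_n^j)$ in numerator and denominator) is exactly the calculation in Appendix D. The conditional-independence framing you lead with is just a more abstract wrapper around the same cancellation, so there is nothing to reconcile.
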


			This equality occurs as a result of \textbf{Assumption \ref{asm2}}. The proof is in appendix \ref{appendixD}. So, using the $lfdr$ statistics an adaptive oracle rule similar to that of the oracle rule for the general setup may be formulated. Fix $\alpha,\beta\in (0,1)$. The rule is obtained by replacing $t_n^{*i}$ with $t_n^i$ in algorithm 1.

			\paragraph{Discussion:}
			Like before we can define for stage $n$ the  adaptive lower cutoff, denoted by $t_n^{l}$ as,
			\begin{equation}
				\label{adaporindlower}
				t_n^{l}= \begin{cases}
					t_n^{(r_n+1)} & \text{ if } r_n < m \\
					1 & \text{ if } r_n = m
				\end{cases}
			\end{equation}
			Any hypothesis $i$ with $LFDR$ ($t_n^{i}$) less than $t_n^{l}$ is considered to be a potential alternative. Similarly we can define the adaptive upper cutoff for stage $n$ denoted by $t_n^{u}$ as,
			\begin{equation}
				\label{adaporindupper}
				t_n^{u}= \begin{cases}
					t_n^{(m-a_n)} & \text{ if } a_n < m \\
					1 & \text{ if } a_n = m
				\end{cases}
			\end{equation}
			Any hypothesis $i$ with $LFDR$ ($t_n^{i})$ greater than $t_n^{l}$ is considered to be a potential null.
			We can define the stopping time to be 
			\begin{equation}
				T = \inf \{n\in \mathbb{N} : t_n^l > t_n^u\}
			\end{equation}
			This new stopping time, like before, is almost surely equal to that of the original stopping time for the continuous case, but is greater than equal to the original stopping time for the discrete case.
			

			The following theorem shows that the oracle rule under the independent mixture model is proper and controls $FDR$ and $FNR$ at desired levels.
			\begin{theorem}
				\label{thm_2}
				Fix $\alpha,\beta\in (0,1)$. Suppose Assumption \ref{asm1} and \ref{asm2}  holds and $m\in \mathbb{N}$. Let ($T,\boldsymbol{\delta}$) be the oracle rule for the independent mixture model as obtained by replacing $t_n^{*i}$ with $t_n^i$ in algorithm 1. Then, 
				\begin{equation}
					\label{stptm1}
					\mathbb{P}_\theta(T<\infty)=1
				\end{equation} 
				Further, 
				\begin{equation}
					(T,\boldsymbol{\delta}) \in \Delta(\alpha,\beta)
				\end{equation}
			\end{theorem}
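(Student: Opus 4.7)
The plan is to reduce Theorem 2 to Theorem 1 by exploiting Lemma 2, which says that under \textbf{Assumption \ref{asm2}}, the $lfdr$ statistic $t_n^i$ equals the $LIS$ statistic $t_n^{*i}$ almost surely for every $i \in [m]$. Since $m$ is finite and the time index ranges over the countable set $\mathbb{N}$, a double union bound upgrades this to
\begin{equation*}
\mathbb{P}\bigl(t_n^i = t_n^{*i} \text{ for every } i \in [m] \text{ and every } n \in \mathbb{N}\bigr) = 1.
\end{equation*}
Call this full-measure event $\mathcal{E}$. The first step is to observe that on $\mathcal{E}$, Algorithm~1 applied to $(t_n^i)_{i,n}$ produces exactly the same sorted order, the same values of $r_n$ and $a_n$, the same adaptive cutoffs $t_n^l, t_n^u$, and therefore the same stopping time $T$ and the same decision vector $\boldsymbol{\delta}$ as it does when applied to $(t_n^{*i})_{i,n}$. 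In symbols, $T = \tau$ and $\boldsymbol{\delta} = \boldsymbol{\delta}^*$ almost surely.

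Next I would verify that \textbf{Assumption \ref{asm1}} is inherited by the $lfdr$ statistics on $\mathcal{E}$: under $H_{0i}$, $t_n^i \stackrel{\text{p}}{\to} 1$ and under $H_{1i}$, $t_n^i \stackrel{\text{p}}{\to} 0$, because convergence in probability is preserved under almost sure equality. This lets me invoke Theorem~\ref{thm_1} for the oracle rule $(\tau, \boldsymbol{\delta}^*)$, yielding $\mathbb{P}(\tau<\infty) = 1$ and $(\tau, \boldsymbol{\delta}^*) \in \Delta(\alpha,\beta)$. Combined with the almost-sure identification $T = \tau$ and $\boldsymbol{\delta} = \boldsymbol{\delta}^*$, the properness claim $\mathbb{P}_\theta(T<\infty) = 1$ follows immediately.

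For the FDR and FNR bounds, I would note that both quantities in~(\ref{fdrd}) and~(\ref{fnrd}) are expectations of measurable functions of $\boldsymbol{\theta}$ and $\boldsymbol{\delta}$ (respectively $\boldsymbol{\delta}^*$). Since $\boldsymbol{\delta} = \boldsymbol{\delta}^*$ almost surely, the two sets of expectations coincide, and therefore
\begin{equation*}
FDR(T,\boldsymbol{\delta}) = FDR(\tau, \boldsymbol{\delta}^*) \leq \alpha, \qquad FNR(T,\boldsymbol{\delta}) = FNR(\tau,\boldsymbol{\delta}^*) \leq \beta,
\end{equation*}
so that $(T, \boldsymbol{\delta}) \in \Delta(\alpha,\beta)$.

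The only subtle point, and the place I would write most carefully, is the propagation of the almost-sure equality through the algorithm: one must check that ties in the sorted order of $(t_n^{*i})$ versus $(t_n^i)$ do not cause the two runs of Algorithm~1 to diverge on $\mathcal{E}$. Because the equality $t_n^i = t_n^{*i}$ is coordinate-wise, any permutation that sorts one sequence also sorts the other, and both versions of the cumulative-average comparison produce identical $r_n$ and $a_n$. There is no genuine obstacle here beyond bookkeeping; the substantive content of Theorem~\ref{thm_2} is already carried by Theorem~\ref{thm_1} and Lemma~\ref{lma2}.
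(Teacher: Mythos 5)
Your proposal is correct and follows essentially the same route as the paper: the paper's Appendix B reduces Theorem \ref{thm_2} to Theorem \ref{thm_1} via its Lemma \ref{lma7}, which upgrades the pointwise almost-sure equality $t_n^{*i}=t_n^i$ of Lemma \ref{lma2} to equality at the stopping time by exactly the countable union bound over $n$ that you use. Your version is marginally more careful in that you build the full-measure event over all $(i,n)$ first and then conclude $T=\tau$ and $\boldsymbol{\delta}=\boldsymbol{\delta}^*$ almost surely, whereas the paper states its lemma for an already-finite stopping time, but the substance is identical.
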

			Proof of Theorem \ref{thm_2} is included in appendix \ref{appendixB}.
			
			\subsection{Data-driven Rule:}
			From our earlier discussion, it is established that, $t_n^i= \pi_0f_{0n}(S_n^i)/f(S_n^i)$, but in practice $\pi_0$, $f_{0n}(.)$ and $f_n(.)$ are not known. Therefore, we cannot use the oracle rule directly. In the case of $\mathbf{S_n}$ being a vector of independent continuous random variables, estimates of all of these quantities are readily available in the literature. For example, when the sample size is $n$, if the null distribution of each of the statistics is $F_{0n}$, we can construct the z scores $\mathbf{Z_n}=(Z_n^1,Z_n^2,\cdots,Z_n^m)$ as 
			
			\begin{equation*}
				Z_n^i = \Phi^{-1} (F_{0n}(S_{n^i}))
			\end{equation*}
			
			Then the null distribution for these z scores is standard normal. We can obtain the estimate of the null proportion $\pi_0$ following \cite{CJ10}. The denominator can be obtained using a kernel density estimator. Once all the unknown quantities are estimated, we get the estimated local fdr values as $\hat{t}_n^i, \ i\in \{1,2,\cdots,m\}$ and we can replace $t_n^{*i}$ with $\hat{t}_n^i$ in algorithm 1 to get the data-driven rule for independent mixture model set up for fixed $\alpha,\beta\in(0,1)$.

			\begin{figure}
				\centering
				\includegraphics[scale=0.5]{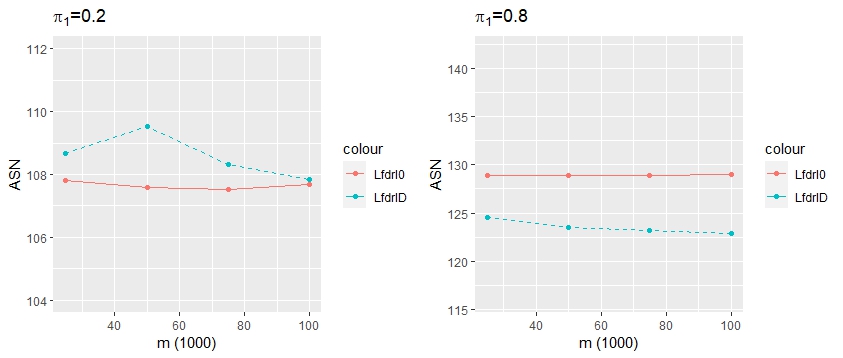}
				\caption{Comparison of average sample size with number of hypotheses }
				\label{fig:1}
			\end{figure}

			\paragraph{Discussion:}
			Like earlier, we can define for stage $n$ the  adaptive lower cutoff, denoted by $\hat{t}_n^{l}$ as,
			\begin{equation}
				\label{adaporindeslower}
				\hat{t}_n^{l}= \begin{cases}
					\hat{t}_n^{(\hat{r}_n+1)} & \text{ if } \hat{r}_n < m \\
					1 & \text{ if } \hat{r}_n = m
				\end{cases}
			\end{equation}
			Any hypothesis $i$ with estimated $LFDR$ ($\hat{t}_n^{i}$) less than $\hat{t}_n^{l}$ is considered to be a potential alternative. Similarly  we can define the adaptive upper cutoff for stage $n$ denoted by $\hat{t}_n^{u}$ as,
			\begin{equation}
				\label{adaporindesupper}
				\hat{t}_n^{u}= \begin{cases}
					\hat{t}_n^{(m-\hat{a}_n)} & \text{ if } \hat{a}_n < m \\
					1 & \text{ if } \hat{a}_n = m
				\end{cases}
			\end{equation}
			Any hypothesis $i$ with estimated $LFDR$ ($\hat{t}_n^{i}$) greater than $\hat{t}_n^{l}$ is considered to be a potential null.
			\begin{lemma}
				\label{lma3}
				Suppose each co-ordinates of $\mathbf{S}_n$ are continuous random variables. For some $n\in \mathbb{N}$, $\hat{r}_n+\hat{a}_n\geq m$ if and only if $\hat{t}_{n}^{l}>\hat{t}_{n}^{u}$ almost surely.
			\end{lemma}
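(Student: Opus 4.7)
The plan is to follow the proof strategy of Lemma \ref{lma1} verbatim, replacing the oracle local index $t_n^{*i}$ throughout by its data-driven estimate $\hat{t}_n^i$. The algorithmic definitions of $\hat{r}_n$, $\hat{a}_n$, $\hat{t}_n^l$, and $\hat{t}_n^u$ are structurally identical to those of their oracle counterparts, so the combinatorial bookkeeping linking the index positions $\hat{r}_n+1$ and $m-\hat{a}_n$ to the corresponding sorted values transfers without modification. The phrase ``almost surely'' is needed only to rule out ties among the sorted values $\hat{t}_n^{(1)} < \cdots < \hat{t}_n^{(m)}$.

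For the direction $\hat{t}_n^l > \hat{t}_n^u \Rightarrow \hat{r}_n + \hat{a}_n \ge m$, first dispose of the boundary cases $\hat{r}_n = m$ and $\hat{a}_n = m$, in which $\hat{r}_n + \hat{a}_n \ge m$ holds trivially. Otherwise $\hat{t}_n^l = \hat{t}_n^{(\hat{r}_n+1)}$ and $\hat{t}_n^u = \hat{t}_n^{(m-\hat{a}_n)}$, and the strict inequality combined with the monotonicity of the order statistics forces $\hat{r}_n + 1 > m - \hat{a}_n$, i.e.\ $\hat{r}_n + \hat{a}_n \ge m$.

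For the converse, assume $\hat{r}_n + \hat{a}_n \ge m$. In the boundary cases $\hat{r}_n = m$ or $\hat{a}_n = m$, the conclusion follows from the algorithmic convention for the undefined cutoff together with the fact that each $\hat{t}_n^i \in (0,1)$ almost surely when $\mathbf{S}_n$ has continuous coordinates. In the generic case $\hat{r}_n, \hat{a}_n < m$, the inequality $\hat{r}_n + 1 > m - \hat{a}_n$ combined with the almost-sure strict ordering of the sorted values yields $\hat{t}_n^{(\hat{r}_n+1)} > \hat{t}_n^{(m-\hat{a}_n)}$ almost surely, which is exactly $\hat{t}_n^l > \hat{t}_n^u$.

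The main obstacle is justifying the almost-sure strict ordering of the sorted values. In the oracle case, distinctness of $(t_n^{*1},\ldots,t_n^{*m})$ followed directly from continuity of $\mathbf{S}_n$. Here, the estimator $S_n^i \mapsto \hat{t}_n^i = \hat{\pi}_0 \hat{f}_{0n}(S_n^i)/\hat{f}_n(S_n^i)$ is a random function of the entire sample, and one must argue it is almost surely injective on the realized data. Under standard regularity of the kernel-density estimator and the $\pi_0$-estimator (each a continuous function of $\mathbf{S}_n$), the configurations producing a tie form a Lebesgue-negligible subset of the sample space, and the absolutely continuous joint law of $\mathbf{S}_n$ assigns it probability zero. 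Once this distinctness is in hand, the case analysis above completes the proof.
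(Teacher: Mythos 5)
Your proof is correct and takes essentially the same approach as the paper: the authors omit the proof of this lemma entirely, stating only that it is ``similar to that of Lemma \ref{lma1},'' and your argument is precisely that adaptation with $t_n^{*i}$ replaced by $\hat{t}_n^i$ throughout. Your additional care in justifying why the estimated local fdr values are almost surely distinct---since $\hat{t}_n^i$ is a random function of the entire sample rather than of $S_n^i$ alone---addresses a point the paper silently glosses over and is a worthwhile strengthening.
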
 
			Proof of \textbf{Lemma \ref{lma3}} is similar to that of \textbf{Lemma \ref{lma1}} and therefore omitted. Due to \textbf{Lemma \ref{lma3}}, $\hat{t}_n^{u} < \hat{t}_n^{l}$ occurs for the first time at $n=T_d$. So we can restate the data-driven rule by changing the stopping criteria as:
			
			``\textit{Stop sampling as soon as $\hat{t}_n^{l} >\hat{t}_n^{u}$ }"
			
			For the data-driven rule to work properly, we require the estimated local fdrs $\hat{t}_n^i$ to be in close proximity to the actual local fdr values $t_n^i$. We can ensure an asymptotic accuracy of the method i.e. the error rates are asymptotically contained at a prefixed value (Theorem \ref{thm_3}) if the following consistency property for the estimates of the local fdrs holds for each coordinate.
			\begin{asmtn}
				\label{asm3}
				As $m\uparrow \infty$, for all $j \in \{1,2,\cdots,m \}$,
				\begin{equation}
					\hat{t}_n^j\stackrel{p}{\rightarrow} 	t_n^j.
				\end{equation}
			\end{asmtn}
			The following theorem proves that the above procedure controls both $FDR$ and $FNR$ asymptotically. 
			
			\begin{theorem}
				\label{thm_3}
				Fix $\alpha,\beta\in (0,1)$. Define the class of sequential multiple tests that controls $FDR$ and $FNR$ respectively at levels $\alpha$ and $\beta$ asymptotically as:
				\begin{equation*}
					\Delta'(\alpha,\beta) = \{(T,\mathbf{d}) | \lim_{m\uparrow \infty} FDR \leq \alpha \text{ and } \lim_{m\uparrow \infty} FNR \leq \beta \}
				\end{equation*}
				Now suppose assumptions \ref{asm1}, \ref{asm2}, \ref{asm3}  holds and $m\in \mathbb{N}$. Let ($T_d,\hat{\boldsymbol{\delta}}$) be the data-driven rule obtained by replacing $t_n^{*i}$ with $\hat{t}_n^i$ in algorithm 1. Then,  
				\begin{equation}
					(T_d,\hat{\boldsymbol{\delta}}) \in \Delta'(\alpha,\beta)
				\end{equation}
			\end{theorem}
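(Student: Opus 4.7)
The plan is to reduce Theorem \ref{thm_3} to Theorem \ref{thm_2}. Since the oracle rule $(T,\boldsymbol{\delta})$ already satisfies $FDR\leq\alpha$ and $FNR\leq\beta$ exactly for every finite $m$, it suffices to show that the data-driven rule $(T_d,\hat{\boldsymbol{\delta}})$ coincides with $(T,\boldsymbol{\delta})$ with probability tending to $1$ as $m\uparrow\infty$, and then pass to the limit via bounded convergence.

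First, for each fixed $n$ I would use Assumption \ref{asm3} to show that $\hat{r}_n\stackrel{p}{\to} r_n$ and $\hat{a}_n\stackrel{p}{\to} a_n$. Under Assumption \ref{asm2} with continuous test statistics, ties among the $t_n^i$ have probability zero, so with probability tending to $1$ the permutation that sorts the $\hat{t}_n^i$'s matches the one that sorts the $t_n^i$'s. The maximizers in the definitions of $r_n$ and $a_n$ are stable under small perturbations because the partial-sum inequalities $\sum_{l=1}^q t_n^{(l)}/q\leq \alpha$ are strict at the optimum with a positive probability margin. This also yields $\hat{t}_n^l\stackrel{p}{\to} t_n^l$ and $\hat{t}_n^u\stackrel{p}{\to} t_n^u$.

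Second, I would lift this pointwise-in-$n$ convergence to convergence at the random stopping times. Exploiting the fact (stated as one of the paper's contributions and used in Theorem \ref{thm_2}) that $T$ converges in distribution to a finite natural number as $m\uparrow\infty$, I can truncate the time horizon: for any $\varepsilon>0$ there exists $N_0$ such that $\mathbb{P}(T\leq N_0)\geq 1-\varepsilon$ for all sufficiently large $m$. A union bound over $n\in\{1,\ldots,N_0\}$ and the previous step then gives $\mathbb{P}(T_d=T)\to 1$, and on this event the decision vectors $\hat{\boldsymbol{\delta}}$ and $\boldsymbol{\delta}$ coincide, since both are determined by the same ordering of (estimated or true) lfdrs at a common stopping time.

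Finally, the ratios
\begin{equation*}
\frac{\sum_{i=1}^m(1-\theta^i)\hat{\delta}^i}{(\sum_{i=1}^m\hat{\delta}^i)\vee 1},\qquad \frac{\sum_{i=1}^m(1-\hat{\delta}^i)\theta^i}{(\sum_{i=1}^m(1-\hat{\delta}^i))\vee 1}
\end{equation*}
are uniformly bounded by $1$ and converge in probability to their oracle counterparts, so the bounded convergence theorem gives that the $FDR$ and $FNR$ of $(T_d,\hat{\boldsymbol{\delta}})$ differ from those of $(T,\boldsymbol{\delta})$ by quantities vanishing as $m\uparrow\infty$. Combined with Theorem \ref{thm_2} this yields $\lim_{m\uparrow\infty}FDR\leq \alpha$ and $\lim_{m\uparrow\infty}FNR\leq \beta$ for the data-driven rule, establishing $(T_d,\hat{\boldsymbol{\delta}})\in\Delta'(\alpha,\beta)$. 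The main obstacle is the interaction between the union bound over $n\leq N_0$ and the $m$ coordinates with the convergence rate in Assumption \ref{asm3}: as stated, the assumption only gives coordinatewise convergence, so to make the union bound sharp one must either invoke a uniform-in-$i$ rate (typically available from the standard kernel-density and null-proportion estimators referenced in the data-driven construction) or exploit the exchangeability of coordinates under the i.i.d.\ mixture model in Assumption \ref{asm2} to reduce control of the whole vector to control of a single coordinate.
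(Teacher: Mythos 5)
Your overall strategy---reduce to Theorem \ref{thm_2} by showing the data-driven rule coincides with the oracle rule with probability tending to one---is not the paper's route, and it contains a gap that I do not think can be repaired under the stated assumptions. The crux is your claim that, with probability tending to $1$, the permutation sorting the $\hat{t}_n^i$'s matches the one sorting the $t_n^i$'s, that $\hat{r}_n\stackrel{p}{\to}r_n$ and $\hat{a}_n\stackrel{p}{\to}a_n$ (which, for integer-valued quantities, means $\mathbb{P}(\hat{r}_n\neq r_n)\to 0$), and hence that $\hat{\boldsymbol{\delta}}=\boldsymbol{\delta}$ on $\{T_d=T\}$. As $m\uparrow\infty$ the gaps between consecutive order statistics $t_n^{(i+1)}-t_n^{(i)}$ are of order $1/m$, and the margin by which the running mean $q^{-1}\sum_{l\le q}t_n^{(l)}$ clears $\alpha$ at the maximizing $q$ also collapses; so matching the sort order and the cutoff index exactly would require estimation error $o(1/m)$ uniformly over coordinates. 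Assumption \ref{asm3} gives only coordinatewise consistency with no rate, and neither a uniform-in-$i$ rate nor exchangeability rescues an event that is intrinsically a joint statement about all $m$ coordinates. Your own closing caveat identifies the union-bound difficulty but underestimates it: the problem is not merely summing $m$ small probabilities, it is that the target event (exact agreement of rankings and counts) has no reason to be asymptotically certain at all.

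The paper avoids this entirely by never comparing the data-driven rule to the oracle rule decision-by-decision. Instead it works with empirical proportions: it introduces the step functions $\hat{Q}_n(t)$ and $\hat{Q}'_n(t)$, shows the adaptive cutoffs $\hat{t}_n^l,\hat{t}_n^u$ converge in probability to \emph{non-stochastic} limits $\mathcal{T}_n^l,\mathcal{T}_n^u$ (Lemmas \ref{lma9} and \ref{lma11}, imported from \cite{SC07}), deduces that $T_d$ (and $T$) converge weakly to the deterministic time $n_0=\inf\{n:\mathcal{T}_n^l>\mathcal{T}_n^u\}$ (Lemma \ref{lma12}, Corollary \ref{cor3}), and then bounds $\lim_m FDR$ directly by writing the conditional FDP as $\sum_i t_{n_0}^{i}\hat{D}'^i/(\sum_i\hat{D}'^i\vee 1)$ and showing via a weak law of large numbers that it differs from $\hat{Q}_{n_0}(\hat{t}_{n_0}^{(s)})\le\alpha$ by $o_p(1)$ (Lemma \ref{lma13}); the denominator is handled by showing $m^{-1}\sum_i\hat{D}'^i$ has a strictly positive limit. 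This averaging argument is robust to a vanishing fraction of individual decisions disagreeing with the oracle, which is exactly the slack your exact-coincidence reduction lacks. The parts of your outline that survive are the truncation of the time horizon using tightness of the stopping times and the final bounded-convergence step; the middle of the argument needs to be replaced by the proportion-level analysis.
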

			
			The proof is included in appendix \ref{appendixC}.

			As an intermediate step of proving theorem \ref{thm_3}, we stated and proved lemma \ref{lma12}. This and corollary \ref{cor3} are important results of this paper. Lemma \ref{lma12} says that if the statistics $\mathbf{S}_n$ are continuous and assumptions \ref{asm1},\ref{asm2} and \ref{asm3} are satisfied, then, the stopping time $T_d$ for the data-driven rule converges weakly to a finite non-stochastic positive integer $n_0$ as the number of hypotheses $m$ grows to infinity. corollary \ref{cor3} yields the same result for the oracle stopping time too. This is very important because, in the existing literature, there are no such sequential multiple testing methods that stop in a finite time even when the number of hypotheses gets large indefinitely. As we shall see in the next section, the stopping time of the only competitor of our method, \cite{HB21}, converges to infinity. This result helps our method carry out large-scale multiple testing problems at a very low cost. This phenomenon is illustrated in figure \ref{fig:1}.
			
			\begin{figure}
				\centering
				\includegraphics[scale=0.42]{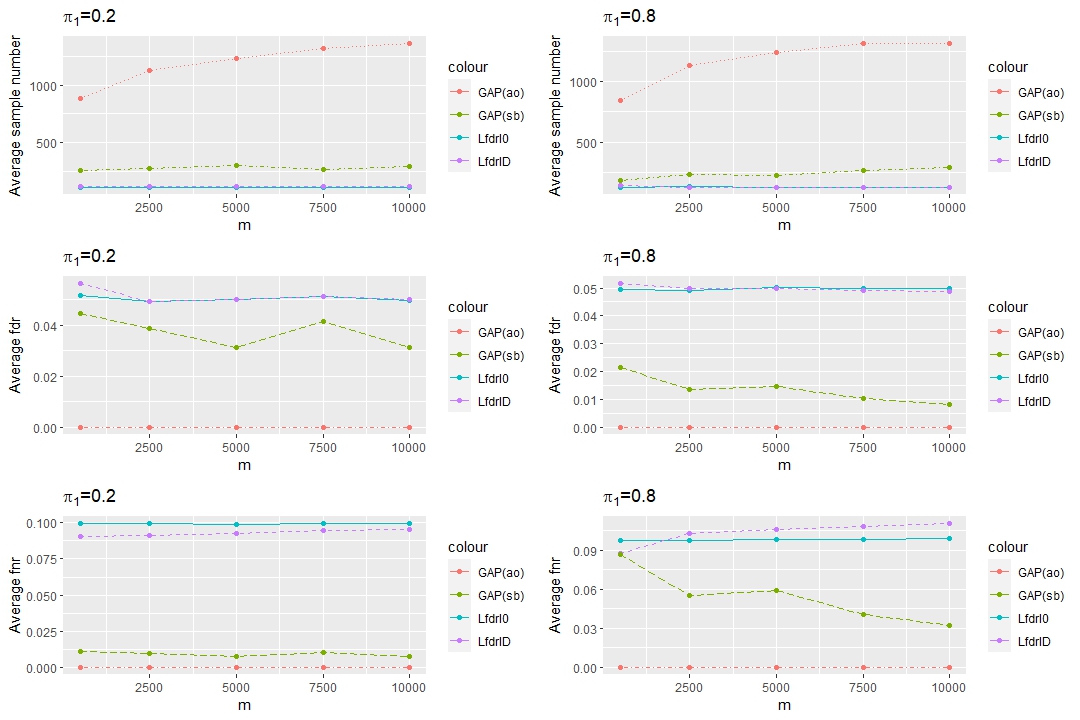}
				\caption{Comparison of average sample size with the number of hypotheses }
				\label{fig:3}
			\end{figure}
			
			\section{Comparison with existing competitor}
			
			\cite{HB21} proved that if the number of alternatives is known ($K$) then among all tests that reject exactly $K$ hypotheses, the Gap Rule defined in \cite{SF17} is asymptotically optimal for simultaneously controlling $FDR$ and $FNR$ as $\alpha$ and $\beta$ tends to $0$, given for each coordinate, a simple null is tested against a simple alternative and observations against each coordinate are iid. They also study the asymptotics of $m$ as functions of $\alpha\wedge\beta$ and provide some restrictions so that the optimality holds. However, the asymptotic behavior of the expected stopping time as $m\uparrow\infty$ has not been studied. In this section we shall show that, for the independence setup described in  \textbf{Assumption} \ref{asm2} stopping time for Gap rule goes to $\infty$ in a weak sense. 
			\begin{lemma}
				\label{lma4}
				Suppose, assumptions  \ref{asm1}, \ref{asm2} hold. Let $m_1$ be the number of alternatives and $T_g^*$ be the stopping time corresponding to the asymptotically optimal Gap rule for controlling $FDR$ and $FNR$ respectively at level $\alpha$ and $\beta$ as mentioned in \textbf{Theorem 3.1} in \cite{HB21}. Then for any $L\in \mathbb{N}$,
				\begin{equation}
					\lim_{m\uparrow \infty} \mathbb{P}(T_g^*<L) = 0
				\end{equation}
			\end{lemma}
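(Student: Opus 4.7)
The plan is to show that, for each fixed $n$, the gap statistic underlying the stopping rule of \cite{HB21} vanishes in probability as $m \uparrow \infty$, while the threshold it must cross stays bounded away from $0$; a union bound over $n \in \{1,\dots,L-1\}$ then gives the claim.

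Recall from \cite{HB21,SF17} that, with $m_1$ known, the Gap rule cumulates per-coordinate log-likelihood ratios $\Lambda_n^i = \log(f_{1n}(S_n^i)/f_{0n}(S_n^i))$, orders them as $\Lambda_n^{(1)} \geq \cdots \geq \Lambda_n^{(m)}$, and stops at
\[
T_g^* = \inf\{n \in \mathbb{N} : \Lambda_n^{(m_1)} - \Lambda_n^{(m_1+1)} \geq \log A + \log B\},
\]
with $A = A(\alpha,m_1,m)$ and $B = B(\beta,m_1,m)$ calibrated per Theorem 3.1 of \cite{HB21} to yield $\mathrm{FDR}\leq\alpha$ and $\mathrm{FNR}\leq\beta$. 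A per-coordinate accounting via Wald's identity (each falsely rejected null contributes at most $1/A$ to $\mathbb{E}[V]$; each falsely accepted alternative at most $1/B$ to $\mathbb{E}[W]$) yields admissible choices $A \geq (m-m_1)/(m_1\alpha)$ and $B \geq m_1/((m-m_1)\beta)$, hence $\log A + \log B \geq \log(1/(\alpha\beta))$, a strictly positive constant independent of $m$.

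For the second ingredient, fix $n$. Under Assumption \ref{asm2}, conditional on $\boldsymbol{\theta}$ the $\{\Lambda_n^i\}_{i=1}^m$ are independent, and unconditionally they are iid from the mixture $F_n(x) = \pi_0 F_{0n}(x) + (1-\pi_0) F_{1n}(x)$, with density $f_n = \pi_0 f_{0n} + (1-\pi_0) f_{1n}$ positive on the support under mild continuity. Since $m_1/m \to 1-\pi_0 \in (0,1)$ almost surely by the law of large numbers, classical spacings asymptotics for iid samples from a continuous distribution give
\[
\Lambda_n^{(m_1)} - \Lambda_n^{(m_1+1)} = O_p(1/m),
\]
so the spacing converges in probability to $0$ as $m \uparrow \infty$. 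Combined with the first ingredient, $\mathbb{P}(\Lambda_n^{(m_1)} - \Lambda_n^{(m_1+1)} \geq \log(1/(\alpha\beta))) \to 0$ for each $n$, and the union bound $\mathbb{P}(T_g^* < L) \leq \sum_{n=1}^{L-1} \mathbb{P}(\Lambda_n^{(m_1)} - \Lambda_n^{(m_1+1)} \geq \log A + \log B) \to 0$ completes the argument.

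The main obstacle is verifying the precise form of the \cite{HB21} thresholds; what the argument actually needs is only that $\log A + \log B$ is uniformly bounded below by a strictly positive constant as $m \uparrow \infty$, which is consistent with $\mathrm{FDR}/\mathrm{FNR}$ being relative error rates (unlike the familywise boundaries of \cite{SF17}, which scale like $\log m$). A secondary technical point is the density lower bound needed for the order-statistic spacing at the $(1-\pi_0)$-quantile of $F_n$, which follows from mild regularity of $f_{0n}, f_{1n}$ consistent with the rest of the paper.
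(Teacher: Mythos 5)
Your proposal is correct in its overall architecture and matches the paper's: a union bound over $n<L$ reduces the claim to showing that, for each fixed $n$, the gap statistic $\Lambda_n^{(K)}-\Lambda_n^{(K+1)}$ cannot reach the stopping boundary with probability tending to one as $m\uparrow\infty$. Where you diverge is in which side of that inequality does the work. The paper takes the boundary to be $\log\bigl(K(m-K)/(\alpha\wedge\beta)\bigr)$, which diverges like $2\log m$, and therefore only needs the two order statistics to be \emph{bounded} together: by Bahadur's quantile lemma both $\Lambda_n^{(K)}$ and $\Lambda_n^{(K+1)}$ lie within $\epsilon$ of the population quantile $H_n^{-1}(\pi_1)$ uniformly over indices in a $\sqrt{m}\log m$ window, and Bernstein's inequality puts the random $K$ in that window. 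You instead lower-bound the boundary by the constant $\log(1/(\alpha\beta))$ and compensate by requiring the spacing itself to vanish, $O_p(1/m)$, which demands a density lower bound near the $\pi_1$-quantile of the LLR distribution. Both routes close the argument, but the paper's is more robust: it survives even if the spacing does not shrink (e.g., if $H_n$ were flat near that quantile), because the boundary escapes to infinity regardless. Your version is also fine under the regularity you flag, and it has the virtue of isolating exactly what is needed (a positive constant lower bound on the boundary), which holds a fortiori for the $\log\bigl(K(m-K)/(\alpha\wedge\beta)\bigr)$ boundary since $K(m-K)\geq m-1$.

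Two caveats you should tighten. First, your stated form of the He--Bartroff threshold ($\log A+\log B$ with $AB=1/(\alpha\beta)$) does not match the boundary the paper attributes to Theorem 3.1 of \cite{HB21}; your argument survives because you only use it as a lower bound, but as written it misdescribes the rule being analyzed. Second, $m_1=K=\sum_i\theta_i$ is random and is not independent of the sample $\{\Lambda_n^i\}$ (each $\Lambda_n^i$ depends on $\theta_i$), so "classical spacings asymptotics" at the index $m_1$ cannot be invoked directly; you need the spacing (or the oscillation of the order statistics) to be controlled \emph{uniformly} over all indices in a concentration window for $K$, which is precisely the uniformity the paper extracts from Bahadur's lemma combined with Bernstein's inequality. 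Neither issue is fatal, but both need to be addressed for the proof to be complete.
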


            Proof of lemma \ref{lma4} is included in appendix \ref{appendixD}.
			\begin{corollary}\label{cor1}
				If assumptions  \ref{asm1}, \ref{asm2} holds and if $T$ and $T_g^*$ be the stopping times of the oracle rule and asymptotically optimal Gap rule then,
				\begin{equation}
					E(T)=o(E(T_g^*))
			\end{equation} 	\end{corollary}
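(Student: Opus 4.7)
The plan is to reduce the corollary to two separate asymptotic statements about the numerator and denominator of $E(T)/E(T_g^*)$: boundedness of $E(T)$ in $m$, and divergence of $E(T_g^*)$ in $m$.

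For the denominator, the key input is Lemma \ref{lma4}, which says $\mathbb{P}(T_g^* < L) \to 0$ for every fixed $L \in \mathbb{N}$. The stopping time $T_g^*$ is a positive integer-valued random variable, so
\begin{equation*}
E(T_g^*) \;\geq\; L \cdot \mathbb{P}(T_g^* \geq L) \;=\; L \cdot \bigl(1 - \mathbb{P}(T_g^* < L)\bigr) \;\longrightarrow\; L
\end{equation*}
as $m \to \infty$. Since $L$ is arbitrary, $E(T_g^*) \to \infty$. (Equivalently, one invokes Fatou's lemma on $T_g^* \to \infty$ in probability.) This step is routine once Lemma \ref{lma4} is in hand.

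For the numerator, I would invoke the convergence result advertised after Theorem \ref{thm_3} (Lemma \ref{lma12} and especially Corollary \ref{cor3}), which states that the oracle stopping time $T$ converges weakly to a finite deterministic positive integer $n_0$ as $m \to \infty$. Weak convergence to a constant is in fact convergence in probability, so $T = n_0 + o_p(1)$. The goal is to upgrade this to $E(T) = O(1)$; for this it suffices to show that the family $\{T\}_{m \geq 1}$ is uniformly integrable. Since $T$ is integer valued, this reduces to producing a summable (in $k$), $m$-uniform tail bound $\mathbb{P}(T > k) \leq \gamma_k$. Under Assumptions \ref{asm1} and \ref{asm2}, the LFDR coordinates $t_k^i$ concentrate at $0$ or $1$ according to the truth of $H_{0i}$, so the event $\{T > k\} = \{t_k^l \leq t_k^u\}$ becomes rare as $k$ grows, and by the IID structure across $i$ this rate does not worsen with $m$ — indeed the limiting $n_0$ is constructed precisely as the first $k$ at which the deterministic comparison of the concentrated LFDR quantiles already forces stopping.

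Combining the two bounds yields $E(T)/E(T_g^*) \to 0$, which is exactly $E(T) = o(E(T_g^*))$. The main obstacle is the second paragraph: turning weak convergence of $T$ to $n_0$ into a bound on $E(T)$. One either borrows a uniform tail bound from the machinery behind Lemma \ref{lma12} or argues by dominated convergence against an explicit envelope derived from Assumption \ref{asm1}; either way, care is needed because $T$ is defined as a stopping time over $m$ coordinates and a priori could have tails that grow with $m$, so the IID mixture structure of Assumption \ref{asm2} (ensuring the empirical distribution of $t_k^i$ stabilizes in $m$) is doing essential work.
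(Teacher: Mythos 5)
Your decomposition is exactly the paper's: the published proof of this corollary is the single sentence that it ``follows from Corollary \ref{cor3} and Lemma \ref{lma4},'' i.e., divergence of $E(T_g^*)$ from Lemma \ref{lma4} together with the weak convergence of $T$ to the finite constant $n_0$ from Corollary \ref{cor3}. Your treatment of the denominator, $E(T_g^*)\geq L\,\mathbb{P}(T_g^*\geq L)\to L$ for every fixed $L$, is complete and is the intended use of Lemma \ref{lma4}. The obstacle you flag in the numerator is genuine: convergence in probability of $T$ to $n_0$ does not by itself give $E(T)=O(1)$ --- a stopping time equal to $n_0$ with probability $1-1/m$ and to $m^2$ with probability $1/m$ converges in probability to $n_0$ yet has divergent mean --- so an $m$-uniform, summable tail bound (equivalently, uniform integrability of the family $\{T\}_{m\geq 1}$) is indeed required. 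Be aware that the paper does not supply this step either: Corollary \ref{cor3} and Lemma \ref{lma12} only yield $\mathbb{P}(T\neq n_0)\to 0$, not a bound on $\mathbb{P}(T>n)$ uniform in $m$ and summable in $n$. So your proposal reproduces the published argument, and the caveat you raise identifies a hole in the paper's own proof rather than a defect peculiar to your write-up; to close it one would have to extract such a uniform tail bound from the concentration of the LFDR statistics under Assumptions \ref{asm1} and \ref{asm2}, along the lines you sketch.
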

			
			\begin{proof}
				Proof of \textbf{corollary \ref{cor1}} follows from \textbf{corollary \ref{cor3}} and \textbf{Lemma \ref{lma4}}
			\end{proof}
			
			\section{Simulation Study:}

			
			\begin{table}[H]
				\centering
				\caption{Comparing LfdrI$_0$, LfdrI$_D$ and Gap rule }
				\label{tab:e1comp}
				\resizebox{\columnwidth}{!}{%
					\begin{tabular}{|c|c|c|ccc|ccc|ccc|c|}
						
						\hline
						&\multirow{3}{*}{$m$} & \multirow{3}{*}{$\pi_1$} & \multicolumn{3}{c|}{ $LfdrI_0$} & \multicolumn{3}{c|}{ $LfdrI_D$} & \multicolumn{3}{c|}{$Gapsb$} & $Gapao$\\
						& & & $ASN$ & $\hat{f}dr$(\%) & $\hat{f}nr$(\%)   & $ASN$ & $\hat{f}dr$(\%) & $\hat{f}nr$(\%) & $ASN$ (s\%) & $\hat{f}dr$(\%) & $\hat{f}nr$(\%) & $ASN$ (s\%)  \\
						
						\hline
						\hline
						\multirow{6}{*}{\textbf{E1}} &	\multirow{2}{*}{100}& 0.2 &	109.5 & 4.07 & 10.16 & 107.3 & 5.74 & 10.76 &  227.9 (53) & 4.80 & 1.18 & 563.7 (81) \\
						&& 0.8 &	132.9 & 4.67 & 10.2 & 146.1 & 5.95 & 8.55 &  169.7 (14) & 2.35 & 9.60 & 554.8 (74) \\
						&\multirow{2}{*}{500}& 0.2 &	109.7 & 5.12 & 10.19 & 122.7 & 5.37 & 8.88 &  254.0 (52) & 4.43 & 1.11 & 823.1 (85) \\
						&& 0.8 &	130.6 & 4.89 & 9.83 & 141.5 & 5.42 & 7.96 & 183.8 (23) & 2.15 & 8.64 & 841.4 (83)\\
						&\multirow{2}{*}{1000}& 0.2 &	110.2 & 4.82 & 10.03 & 118.6 & 4.65 & 9.09 & 252.4 (53) & 4.57 &1.13 & 957.4 (88) \\
						&& 0.8 &	130.4 & 4.76 & 10.00 & 138.4 & 5.14 & 8.19 &  176.9 (22) & 2.30& 9.24 & 972.2 (86)\\
						
						\hline
						\hline
						\multirow{6}{*}{\textbf{E2}}& \multirow{2}{*}{100}& 0.2 &	185.04 & 5.03 & 10.14 & 187.55 & 5.89 & 10.77 & 412.2 (55) & 5.06 & 1.23 & 1067.5 (82)\\
						& & 0.8 &	246.9 & 4.39 & 10.06 & 258.24 & 5.39 & 10.94 & 309.6 (17) & 2.49 & 10.83 & 1070.6 (76) \\
						&\multirow{2}{*}{500}& 0.2 &	192.28 & 4.92 & 9.89 & 210.04 & 5.11 & 9.11 &  454.9 (54) & 4.84 & 1.19 & 1584.2 (87) \\
						&& 0.8 &	246.22 & 4.97 & 10.22 & 246.97 & 5.14 & 11.19 &  317.7 (22) & 2.56 & 10.34 & 1594.6 (85)\\
						&\multirow{2}{*}{1000}& 0.2 &	193.13 & 4.86 & 9.93 & 199.3 & 5.03 & 9.73 & 452.9 (56) & 4.89 & 1.23 &  1808.8 (89)\\
						&& 0.8 &	248.5 & 4.98 & 10.19 & 242.6 & 4.88 & 11.9 & 333.4 (27) & 2.36 & 9.44 & 1821.9 (87) \\
						
						\hline
						\hline
						
						\multirow{6}{*}{\textbf{E3}} & \multirow{2}{*}{100} & 0.2 &	90.3 & 4.25 & 10.21 & 101.6 & 5.63 & 9.62 & 192.9 (47) & 4.28 & 1.03 & 296.2 (66)\\
						& & 0.8 &	124.6 & 4.61 & 11.0 & 122.8 & 4.86 & 14.25 &155.1 (21) & 2.39 & 9.66 & 297.8 (59)\\
						& \multirow{2}{*}{500} & 0.2 &	91.5 & 5.22 & 9.88 & 93.1 & 5.61 & 9.87 &  229.9 (60) & 4.97 & 1.26 & 504.5 (82)\\
						& & 0.8 &	128.1 & 4.93 & 9.87 & 120.5 & 4.63 & 13.99 & 165.4 (27) & 2.37 & 9.85 & 507.9 (76)\\
						&\multirow{2}{*}{1000}& 0.2 &	91.3 & 5.21 & 9.89 & 93.1 & 5.14 & 9.94 &  231.16 (60) & 4.89 & 1.20 &  590.7 (84) \\
						& & 0.8 &	128.0 & 4.95 & 9.90 & 118.9 & 4.63 & 13.96 &  168.4 (30) & 2.31 & 9.54 & 612.2 (81)\\
						
						\hline
						\hline
						
						\multirow{6}{*}{\textbf{E4}}&\multirow{2}{*}{100}& 0.2 &	41.7 & 4.30 & 9.63 & 36.7 & 6.05 & 11.02 & 75.88 (52) & 4.09 & 0.96 & 170.7 (79)\\
						& & 0.8 &	53.3 & 4.79 & 8.92 & 53.4 & 4.75 & 9.50 & 65.13 (18) & 1.77 & 7.07 & 170.1 (69)\\
						&\multirow{2}{*}{500}& 0.2 &	41.5 & 4.91 & 9.82 & 41.4 & 5.55 & 9.00 & 104.9 (61) & 3.15 & 0.8 & 275.2 (85)\\
						& & 0.8 &	53.1 & 5.01 & 8.68 & 53.1 & 5.22 & 8.31 & 103.68 (49) & 0.84 & 3.44 & 278.7  (63)\\
						&\multirow{2}{*}{1000} & 0.2 &	41.6 & 4.96 & 9.32 & 43.0 & 5.82 & 8.36 & 134.21 (68) & 1.64 & 0.4 & 320.7 (87)\\
						& & 0.8 &	53.2 & 5.04 & 9.07 & 53.3 & 5.21 & 8.42 & 133.75 (60)  & 0.4 & 1.63 & 324.8 (84)\\
						\hline
					\end{tabular}
				}
			\end{table}

			\begin{table}[H]
				\centering
				\caption{Comparing LfdrI$_D$ with BH rule }
				\label{tab:e2comp}
				\resizebox{\columnwidth}{!}{%
					\begin{tabular}{|c|c|c|ccc|ccc|}
						\hline
						&\multirow{2}{*}{$m$} & \multirow{2}{*}{$\pi_1$} & \multicolumn{3}{c|}{ LfdrI$_D$} & \multicolumn{3}{c|}{BH} \\
						&&& $ASN$ & $\hat{f}dr$ (\%) & $\hat{f}nr$ (\%) & $\hat{n}$ (s(\%)) & $\hat{f}dr$ (\%) & $\hat{f}nr$ (\%) \\
						
						\hline
						\hline
						\multirow{6}{*}{\textbf{Ex 5}}&\multirow{2}{*}{2500}& 0.2 & 267.76 & 4.97 & 12.7 & 332.0 (19) & 4.46  & 12.2  \\
						&& 0.8 & 476.92 & 5.23 & 9.82 & 962.0 (50) & 0.98 & 9.8  \\
						&\multirow{2}{*}{7500} & 0.2 & 328.92 & 4.51 & 10.70 & 414.0 (21) & 4.07  & 10.51   \\ 
						&& 0.8 & 469.52 & 	5.16 & 10.49 & 954.0 (51) & 0.98 & 10.51 \\
						&\multirow{2}{*}{10000}& 0.2 & 357.48 & 5.05 & 9.70 & 428.0 (17) & 3.93   & 9.79 \\
						&& 0.8 & 467.36 & 5.06 & 10.6 & 944.0 (51) & 0.98 & 10.51 \\
						\hline
						\hline
						\multirow{6}{*}{\textbf{Ex 6}}&\multirow{2}{*}{2500} & 0.2  & 121.72 & 5.03 & 8.8 & 151 (20) & 4.28 & 8.81  \\
						&& 0.8 &	 133.12 & 5.12 & 9.19 & 261 (49) & 0.99 & 9.37  \\
						&\multirow{2}{*}{7500}& 0.2  & 115.28 & 4.81 & 9.48 &  143 (19) & 4.07 & 9.51 \\
						&& 0.8 & 130.12 & 5.02 & 9.84 & 258 (50) & 1.00 & 9.88 \\
						&\multirow{2}{*}{10000} & 0.2  & 113.7 & 4.94 & 9.65 &  142 (20) & 3.98 & 9.66 \\
						&& 0.8 & 129.32& 5.00 & 10.04 &  257 (50) & 1.04 & 9.96 \\

						\hline
						\hline
						
						\multirow{6}{*}{\textbf{Ex 7}}&\multirow{2}{*}{2500} & 0.2  & 209.78 & 5.78 & 9.99 & 234 (10) & 3.91 & 9.86 \\
						
						&& 0.8  & 261.22 & 4.96 & 10.33 & 443 (41)& 0.91 & 10.26  \\
						
						&\multirow{2}{*}{7500} & 0.2  & 206.64 & 5.66 & 10.12 &  231 (11) & 3.65 & 10.29  \\
						
						&& 0.8  & 260.08 & 4.91 & 10.81 &  435 (40) & 0.90 & 10.97 \\
						
						&\multirow{2}{*}{10000} & 0.2  & 207.56 & 5.73 & 9.92 & 234 (11) & 3.82 & 9.91  \\
						
						&& 0.8  & 275.06 & 5.14 & 8.38 & 460 (40) & 1.00 & 8.35  \\

						\hline

					\end{tabular}
				}
			\end{table}
			
			In this section, our goal is to numerically validate the theoretical claims made earlier. The foremost of them was that the performance of the data-driven rule is asymptotically equivalent to that of the oracle rule. We also ensured theoretically that the expected stopping time of the Gap rule as proposed by \cite{HB21} is much higher than that of the oracle rule. Numerical validation for the same is provided here. Also, the adaptive Gap rule, where the cutoff boundary for making decisions is obtained using monte carlo simulations, can not be compared theoretically and therefore the comparison is made using numerical examples. Finally, we measured the performance of our sequential multiple testing rule with the Benjamini-Hochberg rule. Each numerical example is considered for two cases, one sparse case, where $\pi_1$ is 0.2 and one dense case with $\pi_1$ equal to 0.8.
			
			We assume that, $\boldsymbol{\theta}_0 = (\theta_{01},\theta_{02},\cdots,\theta_{0m})\in \{0,1\}^m$ is a realised value on the unobservable random binary indices $\boldsymbol{\theta}$ according to (\ref{eqn:theta.dis}). Given $\boldsymbol{\theta}_0$, the data $\{X_n^i| n \in \mathbb{N} , \ i \in [m]\}$ is generated from the mixture model 
			\begin{equation}
				\label{eqn:datageneration}
				X_n^i \stackrel{iid}{\sim} (1-\theta_{0i}) f_0(x) + \theta_{0i} f_1(x)
			\end{equation}
			
			For the oracle rule the computation of local fdr is straightforward since the parameters and the distributions are assumed to be known. For the data-driven rule, null proportion $1-\pi_1$ is estimated using the method described in \cite{CJ10} for the continuous case. Here, we have computed the z scores since the method requires the null distribution to be Gaussian. For the continuous case, the mixture density is estimated using the kernel density estimation method. The null distribution is assumed to be standard normal. The estimation method by \cite{CJ10} requires a particular choice of the tuning parameter $\lambda$ which was fixed at 0.1. The only case when a discrete distribution was assumed was to test whether the observation follow Bernoulli(7,0.1) ($f_0$) or a Bernoulli(7,0.3)($f_1$) distribution. In that case, the null proportion $1-\pi_1$ was estimated using the method described by \cite{STS04}. With that established, the other parts in the expression of local fdr are known since the pairs of hypotheses in question are both simple. Once we got the local fdr values, at each sample size, we follow the steps described in Algorithm 1. The corresponding code is very easy to implement and will be uploaded soon. Here, throughout this section, we have fixed $\alpha$ at 5\% and $\beta$ at 10\%.
			
			\paragraph{Numerical study 1:} In this example, we compare the performance of the oracle rule ($LfdrI_0$) with the data-driven rule ($LfdrI_D$). As a data generating process, we consider the \textbf{example 1} with $\mu_0 = 0 $ and $\mu_1= 0.25$. The average sample numbers ($ASN$) were computed based on 50 Monte Carlo runs. Figure \ref{fig:1} shows an $ASN$ versus $m$ plot. The plots for both sparse and dense cases support our theoretical claim.
			
			\paragraph{Numerical study 2:} The performance of $LfdrI_0$ and $LfdrI_D$ are measured against that of the GAP rules as described in \cite{HB21}. To implement the GAP rule, we need to be informed about the number of rejections. To make it comparable with our method, we generate the data using (\ref{eqn:datageneration}), we note the number of rejections for each case and use it for implementing the GAP rule. We also note that the implementation of GAP rule really works if the hypotheses to be tested are simple. We carefully choose such cases. The authors in \cite{HB21} provide one asymptotically optimal rule ($GAPao$), where a theoretical value of the cutoff is provided. For practical purposes, another adaptive rule is also mentioned, where the boundary is obtained by simulations. Here, whenever this rule ($GAPsb$) has been implemented, we have used 50 Monte Carlo runs and computed the average false discovery proportions ($\hat{f}dr$) and the average false nondiscovery proportions ($\hat{f}nr$). We have repeated this process for different values of the cutoff starting from 0.1 with a gap of 0.1 until we get a cutoff $\hat{c}$ for which $\hat{f}dr\leq \alpha$ and $\hat{f}nr\leq \beta$. Then for $\hat{c}$, we again compute the $ASN$, $\hat{f}dr$ and $\hat{f}nr$ with 200 Monte Carlo runs.
			
			
			We consider \textbf{example 1} with $\mu_0 = 0$ and $\mu_1=0.25$. Then for $LfdrI_0$, $LfdrI_D$, $GAPao$ and $GAPsb$, we generate plots of $ASN$, $\hat{f}dr$ and $\hat{f}nr$ (each computed for 200 Monte Carlo runs) against $m$ in figure \ref{fig:3}.
			
			
			In table \ref{tab:e1comp}, we list the $ASN$, $\hat{f}dr$ and $\hat{f}nr$ of $LfdrI_0$, $LfdrI_D$, $GAPao$ and $GAPsb$ in different testing problems as will be discussed latter. In each case, we used 200 Monte Carlo runs to compute different components. We write down the savings in $ASN$ for using $LfdrI_D$ as compared to $GAPao$ and $GAPsb$. For the $GAPao$ rule, the $\hat{f}dr$ and $\hat{f}nr$ has been omitted owing to the fact that, in each case, both of them have an average value of 0. 
			
			The exaples considered in table \ref{tab:e1comp} are listed below.
			
			\textbf{Example 1:} Here, $X_n^i|(\theta_i=\theta_{0i}) \stackrel{iid}{\sim} (1-\theta_{0i}) N(\mu_0,1) + \theta_{0i} N(\mu_1,1)$. For table \ref{tab:e1comp}, we consider $\mu_0 =0$ and $\mu_1 = 0.25$.
			
			\textbf{Example 2:} Here, $X_n^i|(\theta_i=\theta_{0i}) \stackrel{iid}{\sim} (1-\theta_{0i}) exp (\lambda_0)+ \theta_{0i} exp (\lambda_1)$. For table \ref{tab:e1comp}, we consider $\lambda_0 =1$ and $\lambda_1 = 1.2$.
			
			\textbf{Example 3:} Here, $X_n^i|(\theta_i=\theta_{0i}) \stackrel{iid}{\sim} (1-\theta_{0i}) N(0,\sigma_0) + \theta_{0i} N(0,\sigma_1)$. For table \ref{tab:e1comp}, we consider $\sigma_0 =1$ and $\sigma_1 = 1.2$.
			
			\textbf{Example 4:} Here, $X_n^i|(\theta_i=\theta_{0i}) \stackrel{iid}{\sim} (1-\theta_{0i}) Bernoulli(7,p_0) + \theta_{0i} Bernoulli(7,p_1)$. For table \ref{tab:e1comp}, we consider $p_0 =0.1$ and $p_1 = 0.3$.
			
			For each of these examples, we observe that, the performance of $LfdrI_0$ and $LfdrI_D$ are equivalent. The most interesting observation here is that, on both of the cases, $\hat{f}dr$ and $\hat{f}nr$ are almost equal to desired levels $\alpha$ and $\beta$ respectively. This proximity is more prominant and stable in case of $LfdrI_0$ in comparison with $LfdrI_D$ which supports the etsablished theory that, $FDR$ and $FNR$ control is exact in case of the oracle rule and is assymptotic in case of the data-driven rule. We can therefore hope to prove our method to be optimal in some sense in a future work. In each of the cases, $GAPao$ results in 0 $\hat{f}dr$ and $\hat{f}nr$ with 0 standard error and therefore omitted in the table. Savings with respect to $LfdrI_D$ is reported for $GAPao$ and $GAPsb$. This savings is much higher in case of $GAPao$ due to its conservative property. It is somewhat less but still significant in case of $GAPsb$. Therefore table \ref{tab:e1comp} shows absolute dominance of $LfdrI_D$ over the $GAP$ rule. Here the difference in savings for the sparse case and the dense case results from the asymmetry in the values of $\alpha$ and $\beta$ chosen here. Similar resrults are obtained in case of large values of $m$ with respect to the example 1 as shown in figures \ref{fig:1} and \ref{fig:3}.
			
			\paragraph{Numerical study 3:} This study is so designed that, we can compare the performance of the sequential multiple testing method $LfdrI_D$ with some existing fixed sample rules- among which, that (BH) devised by \cite{BH95} is the most popular. Also, being an $FDR$ controlling rule, this is a matching competetor for us. For comparison purpose, we consider different hypotheses testing problems. In each cases, first, we run $LfdrI_D$ for 200 Monte Carlo runs with $\alpha$ and $\beta$ fixed at level mentioned earlier and note the $ASN$, $\hat{f}dr$ and $\hat{f}nr$. Then, we continue run BH with same $\alpha$ but different sample size $n$ and note $\hat{f}nr$ untill we get a $\hat{n}$ that yields a $\hat{f}nr$ which is very close to that obtained from $LfdrI_D$. We report $\hat{n}$, $\hat{f}dr$ and $\hat{f}nr$. We also report savings in the $ASN$ of $LfdrI_D$ as compared with $\hat{n}$  of BH. We use the BH function in the library "Mutoss" in the Bioconductor repository for this purpose. The results are reported in table \ref{tab:e2comp}. Here the following examples have been considered.
			
			\textbf{Example 5:} At each stage $n\in\mathbb{N}$,  for each coordinate $i\in[m]$, 2 observation are made. $X_n^i$ is invariably coming from the null distribution (control) and $Y_n^i$ is coming from a mixture density (case). i.e., this is a sequential case-control or 2 sample study. For single hypotheses framework, such problem has a wide literature. One can refer to \cite{CG84} for some insights. Define $n_X(n_Y)$ to be the number of new observations of the control (case) variable to be made at each stage. Here we assumed $n_X=n_Y=1$. The alternative density ($f_1(x)$) of the mixture distribution is considered to be bimodal. i.e., 
			\begin{equation*}
				\begin{split}
					X_n^i &\stackrel{iid}{\sim} N(\mu_0,\sigma_0)\\
					Y_n^i | (\theta_i=\theta_{0i}) \stackrel{iid}{\sim} (1-\theta_{0i}) N(\mu_0,\sigma_1) &+ \theta_{0i} (\eta_i N(\mu_1,\sigma_1) + (1-\eta_i) N(\mu_2,\sigma_1))
				\end{split}
			\end{equation*}
			with $\eta_i\stackrel{iid}{\sim} Bernoulli (p_1)$. Here, we considered $\mu_0 = 0$, $\mu_1=0.25$, $\mu_2 = -0.5$, $\sigma_0=\sigma_0=1$ and $p_1=0.75$. We performed a 2 sample Welch test for each of the coordidnates. This Welch statistics becomes our intended statistics and we preform $LfdrI_D$ based on these test statistics. We use the p values from these tests to perform BH.

			\textbf{Example 6:} We assume 
			\begin{equation*}
				X_n^i|(\theta_i=\theta_{0i}) \stackrel{iid}{\sim} (1-\theta_{0i})N(\mu_0,\sigma^2)+\theta_{0i} N(\mu_1,\sigma^2)
			\end{equation*}
			
			Here, we considered $\mu_0 = 0$, $\mu_1 = 0.25$, $\sigma=1$. We performed the Student's t test for this example. the $LfdrI_D$ and BH was performed as before based on the test statistics.
			
			\textbf{Example 7:} We assume 
			\begin{equation*}
				X_n^i|(\theta_i=\theta_{0i}) \stackrel{iid}{\sim} (1-\theta_{0i})\text{Cauchy}(\mu_0,1)+\theta_{0i} \text{Cauchy}(\mu_1,1)
			\end{equation*}
			
			Here, we considered $\mu_0 = 0$, $\mu_1 = 0.25$. As the test statistics, we calculated the log likelihood ratio. The standard distribution of the statistic is not known. Therefore, at each stage, we simulated 100000 null values of the statistics and we estimated the null CDF values using these simulated values and we obtained the z score and we proceeded as before. For computing the p values, we used the same simulated null log-likelihood ratios and the fact that a high likelihood ratio provides evidence against the null hypotheses. i.e., one-sided test was appropriate.
			
			For each of these examples, we observe that the savings for using $LfdrI_D$ with respect to $BH$ are large. The discrepancy in savings in sparse and the dense case is due to the difference in the proportion of null ($1-\pi_1$), since it plays a major role in control of $FDR$ for $BH$ method.
			
			\section{Data Application}
			
			In this section, we apply our method in real datasets and compared them against the Benjamini Hochberg method and the local fdr based method proposed by \cite{SC07}. We have used 2 datasets for this purpose. The first of them is the Prostate Data (\cite{Singhea02}) obtained from  \href{https://efron.ckirby.su.domains/LSI/datasets-and-programs/}{https://efron.ckirby.su.domains/LSI/datasets-and-programs/}. The dataset contains gene expression data on 6033 genes from 102 prostate cells. Among them, 52 were from tumor cells and the rest were from nontumor cells. Therefore, this constitutes a case-control study and the goal is to identify genes whose expressions are associated with prostate tumors.
			
			Suppose, $X_{ij}$ represents the $j$ th gene expression associated with the $i$ th prostate cell with tumor and $Y_{kj}$ represents the $j$ th gene expression associated with the $k$ th prostate cell without tumor.  Here , $j = 1(1)6033$, $i = 1(1)52$ and $k = 1(1)50$. We have performed 2 sample t test against each of the genes. For the $i$-th gene we have computed the statistic 
			
			\begin{equation*}
				S_{n}^i = \frac{\bar{X}_{n1}^i - \bar{Y}_{n2}^i}{\hat{\sigma}_{n1,n2}^i\sqrt{\frac{1}{n1}+\frac{1}{n2}}}
			\end{equation*}
			
			Where, $n1$ is the number of observations from the prostate tumor cells, $n2$ is the number of samples from the prostate cells without tumor, and $n=n1+n2$. $\bar{X}_{n1}^i$, $\bar{Y}_{n2}^i$ and $\hat{\sigma}_{n1,n2}^i$ are respectively the mean of the $i$ the gene expression for $n1$ tumor cells, that of the $n2$ non-tumor cells and the pulled standard deviation (with denominator $n-2$) fot the full data with $n1$ and $n2$ samples for the $i$ th gene. This is the two-sample t statistic corresponding to the sample of sizes $n1$ and $n2$. Under the assumption that 
			\begin{equation*}
				X_{ij} \stackrel{iid}{\sim} N(\mu_1,\sigma^2) 
			\end{equation*}
			and 
			\begin{equation*}
				Y_{ij} \stackrel{iid}{\sim} N(\mu_2,\sigma^2) 
			\end{equation*}
			for all $i=1(1)6033$, $S_n^i$ follows t distributions with $n-2$ degrees of freedom independently. 
			
			After standardizing and quantile normalizing the dataset, we compute for the full data, i.e., for $n1=52$ and $n2=50$, the statistic $S_{102}^i$ corresponding to the $i$-th gene,  $i=1(1)6033$. Our objective is to test 
			
			\begin{equation*}
				H_{0i}: \mu_1 = \mu_2 \quad \text{versus} \quad  H_{1i}: \mu_1 \neq \mu_2
			\end{equation*}
			
			Therefore,  a both-side test is appropriate here. We compute the p values accordingly. i.e.,
			
			\begin{equation*}
				p_n^i = 2 \min \{F_{100}(S_n^i),(1-F_{100}(S_n^i)) \}
			\end{equation*}
			
			Here, $F_n$ corresponds to the cdf of a t distribution with $n$ degrees of freedom.
			
			Similarly, we calculate the z scores as 
			\begin{equation*}
				z_n^i =  \Phi^{-1}(F_{100}(S_n^i))
			\end{equation*}
			
			The following picture shows the histogram of the p values and z scores for the full data.
			
			\begin{figure}[H]
				\label{fig:4}
				\centering
				\includegraphics[scale=0.48]{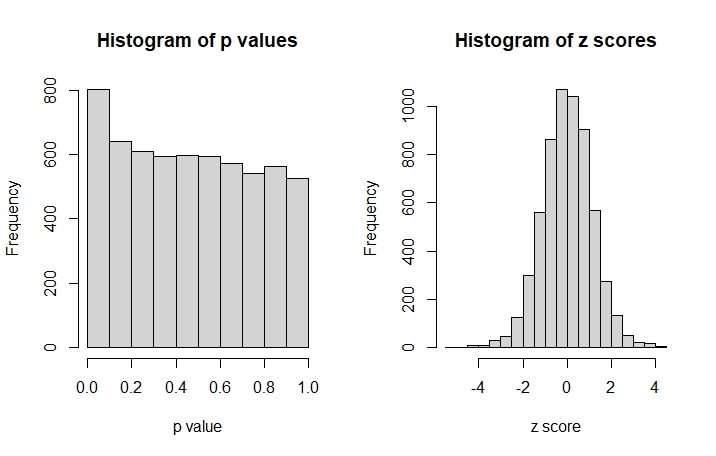}
				\caption{Histogram of P values and Z scores for the full Prostate dataset.}
			\end{figure}
			
			The histograms show that the z scores follow a normal distribution and the p values follow a uniform distribution (except perhaps for a larger frequency near 0), which justifies our assumption of normality for the data. The graphs indicate that the proportion of alternatives is relatively very small (i.e., this data is sparse).
			
			First, we apply the Benjamini Hochberg rule (\cite{BH95}) for the full data. For this, we set $\alpha=0.05$. The number of positive genes we thus obtained is 21. We follow this by the AdaptZ method described by \cite{SC07} with the same value of $\alpha$. Here we obtained a total of 29 genes responsible for prostate cancer.
			
			Finally, we apply our sequential multiple testing method to the dataset. We start with a pilot sample of 50 cells, among which, 25 were from the tumor cells (case) and 25 were from the normal cells (control). We then computed the z scores corresponding to the two-sample t-test statistics (under null which follow independent t distributions with parameter 48) against each gene expression value from the 50 cases and 50 controls. These z scores are used to compute the local fdr estimates for each of the 6033 genes. We then follow the second step of the algorithm of $LfdiI_D$ to obtain the number of potential nulls ($\hat{a}_{50}$) and potential alternatives ($\hat{r}_{50}$) at levels $\alpha=0.05$ and $\beta=0.1$. We note that, here  $\hat{r}_{50} = 14$ and $\hat{a}_{50}=5982$. i.e., $\hat{r}_{50}+\hat{a}_{50} = 5996$ which is less than 6033. So we proceed to the next stage. For each consecutive step, we obtain a new sample from either the case or the control with an equal probability. With this updated sample of size 51, we repeat the same process. We continue this until we have for some $n$, $\hat{r}_{n}+\hat{a}_{n} \geq 6033$. This $n$ is called the stopping time.
			
			For $\beta =0.1$, the stopping time is 69 and we can discover 12 genes to be positive. Here, the savings of the sample size is high (almost 32.4 \%) But the number of discoveries is much less than both Benjamini Hochberg and AdaptZ methods. If we consider $\beta=0.07$ however, The stopping time is 87 (with a sample size savings of about 15 \%), but our method discovers 23 positive genes. Which is higher than the number of discoveries made by the BH method.
			
			The second dataset we have considered, is the dataset collected and used by \cite{Gea99}. The dataset consists of gene expression levels on 7129 genes from the bone marrow tissues of 72 acute leukemia patients. Among them, 47 were suffering from acute lymphoblastic leukemia,
			(ALL) and the rest of the 25 were acute myeloid
			leukemia (AML) patients. Our goal is to discover genes that discriminate between these two types of leukemia.
			
			For this, we consider a mixture of gaussian distribution as before for each gene expression value for each patient. Therefore, a two-sample t-test is applicable here as well. The following picture shows the histograms of the z scores and p values for the full dataset.
			
			\begin{figure}[H]
				\centering
				\label{fig:5}
				\includegraphics[scale=0.7]{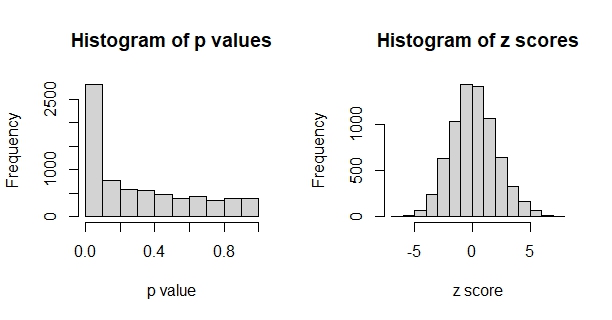}
				\caption{Histograms of p value and z scores of Golub data}
			\end{figure}
			
			The histogram of p value has a peak near 0, otherwise, it seems to be uniformly distributed. The Histogram of the z scores resembles a normal distribution with slightly heavy frequencies at the tail. This tells us that the distribution is medium sparse.
			
			The BH method for $\alpha$ 0.05 discovers 1280 positive genes. AdaptZ method for the same level of $\alpha$ can identify 1300 genes that can identify the difference between these two types of leukemia.
			
			The $LfdrI_D$ method for $\alpha=0.05$ and $\beta=0.1$ does not stop for the given dataset and it also identifies 1300 positive genes.
			
			So, from the above data applications, we can conclude that, the data-driven method can either save sample size or performs as well as the optimal AdaptZ rule for the nonsequential case. 
			
			\appendix
			\section{Proof of Theorem 1} \label{appendixA}
			\begin{proof}
				Let $\mathcal{A}=\{
				i \in [m] : \theta^i=1
				\}$ be the set of signals. Note that, if at stage $n$, $t_n^{*i}\leq\alpha \ \forall i \in \mathcal{A}$ and $t_n^{*i}\geq(1-\beta)\ \forall i \in \mathcal{A}^c$, then we must have, $\tau\leq n$. i.e.,
				
				\begin{equation*}
					\begin{split}\mathbb{P}(\tau\leq n) \geq  & \mathbb{P}(\{\cap_{i \in \mathcal{A}}\{ t_n^{*i}\leq\alpha \}\} \cap \{\cap_{i \in \mathcal{A}^c}\{ t_n^{*i}\geq(1-\beta) \}\} )\\
						\geq & \sum_{i \in \mathcal{A}} \mathbb{P}^i_1(t_n^{*i}\leq\alpha) + \sum_{i \in \mathcal{A}^c}\mathbb{P}^i_0 (t_n^{*i}\geq(1-\beta))-m+1 \\
						\geq & 1- \sum_{i \in \mathcal{A}} \mathbb{P}^i_1(t_n^{*i}>\alpha) - \sum_{i \in \mathcal{A}^c}\mathbb{P}^i_0 (t_n^{*i}<(1-\beta)) \\
					\end{split}
				\end{equation*}
				The second line is due to Boole's inequality. Here, $\mathbb{P}^i_j(A)=\mathbb{P}(A|\theta_i = j)$ $\forall i = 1(1)m$ and $j=0,1$. The proof is complete if we take limit $n\uparrow\infty$ to both side of the inequality and use assumption \ref{asm1}.
				
				To prove the second part, first note that for timepoint $n$, both $ \sum_lt_n^{*(l)}/q$ and $1/q\sum_{l=1}^q(1-t_n^{*(m-l+1)})$ are increasing in $q\in[m]$ almost surely. Now we introduce the following 2 lemmas:
				\begin{lemma}
					\label{lma5}
					If $l \in [\mathrm{r}_{\tau}]$ almost surely, the test $(\tau,\mathbf{D}')$ with $\mathbf{D}'=(D'^1,D'^2,\cdots$, $D'^m)$ given by
					\begin{equation*}
						D'^i = \mathbbm{1} (t_{\tau}^{*i} \leq t_{\tau}^{*(l)}) 
					\end{equation*}
					controls $FDR$ at level $\alpha$ for all $\theta\in\{0,1\}^m$.
				\end{lemma}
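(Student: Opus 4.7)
The plan is to bound the FDR of $(\tau,\mathbf{D}')$ by conditioning on the information available at the stopping time and exploiting the defining property that the LIS statistic equals the posterior null probability. Write $R' = \sum_i D'^i$ and $V' = \sum_i (1-\theta^i) D'^i$. Since $D'^i$ is determined by $\mathbf{S}_\tau$ (and hence is $\sigma_\tau$-measurable), we will compute $E[V'/(R'\vee 1)]$ by first taking a conditional expectation with respect to $\sigma_\tau$ and pulling $D'^i$ and $R'$ outside.

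The first step is the key identity
\[
E\bigl[(1-\theta^i)\,\big|\,\mathbf{S}_\tau\bigr] \;=\; \mathbb{P}(\theta^i=0\mid \mathbf{S}_\tau) \;=\; t_\tau^{*i},
\]
which is the definition of the LIS in \eqref{tni} evaluated at the stopping time $\tau$ (justified by optional sampling applied to the posterior-probability martingale $\{t_n^{*i}\}_n$). Combining this with $R' = l$ and the fact that $D'^i = 1$ precisely for the $l$ indices achieving the smallest LIS values (almost surely unique for the continuous case) gives
\[
E[V'\mid \sigma_\tau] \;=\; \sum_{i=1}^{m} t_\tau^{*i}\, D'^i \;=\; \sum_{k=1}^{l} t_\tau^{*(k)}.
\]
Since $R' = l$ is $\sigma_\tau$-measurable and $l\ge 1$ on the event $\mathrm{r}_\tau\ge 1$ (and $V'=0$ when $l=0$, so the bound is trivial there), we obtain
\[
FDR \;=\; E\!\left[\frac{E[V'\mid \sigma_\tau]}{R'\vee 1}\right] \;=\; E\!\left[\frac{1}{l}\sum_{k=1}^{l} t_\tau^{*(k)}\right].
\]

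The second step is to use the monotonicity of the running averages $q\mapsto \frac{1}{q}\sum_{k=1}^{q} t_\tau^{*(k)}$, noted immediately before the lemma: because the order statistics $t_\tau^{*(k)}$ are non-decreasing in $k$, adding a new term cannot decrease the average. Hence, since $l \le \mathrm{r}_\tau$ almost surely,
\[
\frac{1}{l}\sum_{k=1}^{l} t_\tau^{*(k)} \;\le\; \frac{1}{\mathrm{r}_\tau}\sum_{k=1}^{\mathrm{r}_\tau} t_\tau^{*(k)} \;\le\; \alpha,
\]
where the final inequality is the defining property of $\mathrm{r}_\tau$ in Algorithm 1. Taking expectations gives $FDR \le \alpha$ for every $\theta \in \{0,1\}^m$, completing the argument.

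The main obstacle is the optional-sampling step that legitimizes replacing $\mathbb{P}(\theta^i=0\mid \mathbf{S}_n)$ by $\mathbb{P}(\theta^i=0\mid \mathbf{S}_\tau)$ at the random time $\tau$; everything else is a direct tower-property calculation plus the monotonicity observation. Properness of $\tau$ (Theorem \ref{thm_1}, first part) is what ensures the conditional expectation at $\tau$ is well defined, and the fact that $l\ge 1$ whenever $\mathrm{r}_\tau\ge 1$ avoids any issue with the $R'\vee 1$ normalization.
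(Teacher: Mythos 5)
Your proposal is correct and follows essentially the same route as the paper's own proof: condition on the data at the stopping time, replace $1-\theta^i$ by $t_\tau^{*i}$ via the tower property, identify the resulting ratio as the running average $\frac{1}{l}\sum_{k=1}^{l} t_\tau^{*(k)}$, and bound it by $\frac{1}{\mathrm{r}_\tau}\sum_{k=1}^{\mathrm{r}_\tau} t_\tau^{*(k)}\leq\alpha$ using monotonicity of cumulative averages of ordered values together with the defining property of $\mathrm{r}_\tau$. The only difference is that you make explicit the optional-sampling justification for evaluating the posterior null probability at the random time $\tau$, a step the paper passes over silently.
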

				
				and 
				\begin{lemma}
					\label{lma6}
					If $l \in [\mathrm{a}_{\tau}]$ almost surely, the test $(\tau,\mathbf{d}')$ with $\mathbf{d}'=(d'^1,d'^2,\cdots$, $d'^m)$ given by
					\begin{equation*}
						d'^i = \mathbbm{1} (t_{\tau}^{*i} \geq t_{\tau}^{*(m-s+1)})
					\end{equation*}
					controls $FNR$ at level $\beta$ for all $\theta\in\{0,1\}^m$.
				\end{lemma}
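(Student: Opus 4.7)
The plan is to mirror the proof of Lemma \ref{lma5} (FDR control) using the dual statistic $1-t_\tau^{*i}$ in place of $t_\tau^{*i}$, combined with the defining inequality for the acceptance boundary $a_\tau$. The crucial identity is that, by the very definition of $t_\tau^{*i}$, we have $\mathbb{P}(\theta^i = 1 \mid \sigma_\tau) = 1 - t_\tau^{*i}$, which allows us to replace the random $\theta^i$'s inside the FNR expectation by $\sigma_\tau$-measurable quantities.

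First, I would rewrite the FNR in terms of $d'$. Reading the paper's convention, $d'^i$ here plays the role of the acceptance indicator (so $1-\delta^i = d'^i$ with $\delta^i$ the rejection indicator), hence
\begin{equation*}
FNR = E\!\left(\frac{\sum_{i=1}^m d'^i\,\theta^i}{(\sum_{i=1}^m d'^i)\vee 1}\right).
\end{equation*}
Since $\tau$ is a $\{\sigma_n\}$-stopping time and each $d'^i$ depends only on $(t_\tau^{*1},\ldots,t_\tau^{*m})$, both $d'^i$ and $\sum_i d'^i$ are $\sigma_\tau$-measurable. Conditioning on $\sigma_\tau$ and pulling the expectation inside gives
\begin{equation*}
FNR = E\!\left(\frac{\sum_{i=1}^m d'^i(1-t_\tau^{*i})}{(\sum_{i=1}^m d'^i)\vee 1}\right).
\end{equation*}

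Next, let $L := \sum_{i=1}^m d'^i$. By construction $d'^i = 1$ precisely on the hypotheses with the $L$ largest $t_\tau^{*i}$ values, so the numerator equals $\sum_{k=1}^L (1 - t_\tau^{*(m-k+1)})$. I would then invoke the monotonicity already recorded in the proof of Theorem \ref{thm_1}: the partial averages $q \mapsto \frac{1}{q}\sum_{k=1}^q(1-t_\tau^{*(m-k+1)})$ are non-decreasing in $q$. The key combinatorial observation is that $L \leq a_\tau$ almost surely; this holds because $l \leq a_\tau$ and, by maximality of $a_\tau$, there cannot be a tie in the $t_\tau^{*(\cdot)}$'s straddling position $m-a_\tau+1$ (otherwise one could extend $a_\tau$ by one while keeping the partial average unchanged, contradicting maximality). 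Consequently,
\begin{equation*}
\frac{1}{L}\sum_{k=1}^L(1-t_\tau^{*(m-k+1)}) \;\leq\; \frac{1}{a_\tau}\sum_{k=1}^{a_\tau}(1-t_\tau^{*(m-k+1)}) \;\leq\; \beta,
\end{equation*}
the last step being the defining inequality of $a_\tau$. Taking expectations delivers $FNR \leq \beta$.

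The routine part is the tower-property computation; the only places needing genuine care are (i) the handling of ties in the discrete case, which is the whole reason for tracking $L$ rather than assuming $L = l$, and (ii) confirming that the rule is well-defined when $a_\tau = 0$, in which case the hypothesis $l \in [a_\tau]$ is vacuous and there is nothing to prove. I expect the tie-handling step to be the main obstacle, but the maximality argument above shows that $L$ always lies in $[l, a_\tau]$, so the monotone-averages step survives ties intact.
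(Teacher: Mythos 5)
Your proof follows the same route as the paper's own argument: condition on the stopped $\sigma$-field to replace $\theta^i$ by $1-t_\tau^{*i}$, recognize the resulting ratio as the cumulative average $\frac{1}{l\vee 1}\sum_{k=1}^{l}(1-t_\tau^{*(m-k+1)})$, and bound it by the $\mathrm{a}_\tau$-th cumulative average using the monotonicity of such averages together with the defining inequality of $\mathrm{a}_\tau$. The only difference is your additional care about ties in the discrete case (tracking $L$ rather than assuming $L=l$), a point the paper's proof silently passes over; this is a harmless refinement and does not change the argument.
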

				
				Proofs of \textbf{Lemma \ref{lma5}} and \textbf{\ref{lma6}} are provided in appendix \ref{appendixD}.
				
				Now, for any $s = (m-\mathrm{a}_\tau)(1)\mathrm{r}_\tau $, and rejection region $\mathfrak{R}^*$ given by algorithm 1, $FDR\leq\alpha$ by \textbf{Lemma \ref{lma5}} and for acceptance region $[m]\setminus \mathfrak{R}^*$, $FNR\leq\beta$ by \textbf{Lemma \ref{lma6}}
				
				So, $(\tau,\boldsymbol{\delta}^{*})\in \Delta(\alpha,\beta)$  using \textbf{Lemma \ref{lma5}} and \textbf{\ref{lma6}}.
			\end{proof}
			
			\section{Proof of Theorem 2} \label{appendixB}
			\begin{proof}
				First we state the following lemma:
				\begin{lemma}
					\label{lma7}
					Provided assumption \ref{asm2} is true, for any random finite stopping time $T$, defined on $\{\sigma_n\}$,
					\begin{equation}
						\mathbb{P}_\theta(t_T^{*i}\neq t_T^{i})=0
					\end{equation}
				\end{lemma}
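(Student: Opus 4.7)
The plan is to reduce the statement about a random stopping time to the per-stage statement given by Lemma \ref{lma2} via a simple partition argument using countable additivity. The only thing being asserted, morally, is that the almost-sure identity $t_n^{*i} = t_n^i$ at each deterministic stage $n$ survives evaluation at a random (but proper) $\{\sigma_n\}$-stopping time.

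Concretely, I would first use that $T$ is finite $\mathbb{P}_\theta$-almost surely (which is part of the hypothesis) to write
\begin{equation*}
\{t_T^{*i} \neq t_T^i\} = \bigcup_{n=1}^{\infty} \bigl(\{t_T^{*i} \neq t_T^i\} \cap \{T = n\}\bigr)
\end{equation*}
up to a $\mathbb{P}_\theta$-null set. On the event $\{T = n\}$ we have $t_T^{*i} = t_n^{*i}$ and $t_T^i = t_n^i$, so each piece in the union is contained in $\{t_n^{*i} \neq t_n^i\}$. Applying Lemma \ref{lma2} coordinatewise, this latter event has probability zero for every fixed $n$. Countable subadditivity then gives
\begin{equation*}
\mathbb{P}_\theta(t_T^{*i} \neq t_T^i) \leq \sum_{n=1}^{\infty} \mathbb{P}_\theta(t_n^{*i} \neq t_n^i) = 0,
\end{equation*}
which is the claim.

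There is no genuine obstacle here; the only point requiring minor care is to ensure the measurability of the slices $\{t_T^{*i} \neq t_T^i\} \cap \{T=n\}$, which is immediate because both $t_n^{*i}$ and $t_n^i$ are $\sigma_n$-measurable and $\{T=n\} \in \sigma_n$, so each slice lies in $\sigma_n \subseteq \sigma$. The result is a bookkeeping extension of Lemma \ref{lma2} from deterministic times to stopping times, and will subsequently let the proof of Theorem \ref{thm_3} substitute $t_T^i$ for $t_T^{*i}$ without loss inside any probability computation under Assumption \ref{asm2}.
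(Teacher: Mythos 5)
Your argument is correct and is essentially identical to the paper's proof: both decompose the bad event over the values of the finite stopping time $T$, contain each slice in the fixed-time null event $\{t_n^{*i}\neq t_n^i\}$ furnished by Lemma \ref{lma2}, and conclude by countable subadditivity. No substantive difference.
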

				Proof of \textbf{Lemma \ref{lma7}.} is in appendix \ref{appendixD}. \textbf{Lemma \ref{lma7}.} ensures that, although $T$ depends on all the $m$ data-streams, due to assumption \ref{asm2}, for making inference about the $i$-th hypothesis, observations from the $i$-th data-stream are sufficient. The proof of theorem \ref{thm_2} follows from theorem \ref{thm_1} and \textbf{Lemma \ref{lma7}.}.
			\end{proof}
			
			\section{Proof of Theorem 3} \label{appendixC}
			
			\begin{proof}
				
				As discussed earlier, the hypothesis corresponding to data-stream $i$ with $\hat{t}_n^i<\hat{t}_n^l$ are considered potential alternatives and the hypothesis corresponding to data-stream $i$ with $\hat{t}_n^i>\hat{t}_n^u$ are considered potential nulls. If there are some data-streams $i$ with $\hat{t}_n^l \leq \hat{t}_n^i \leq \hat{t}_n^u $, we observe a new sampling unit. It is evident that if for some $n$, $\hat{t}_n^l>\hat{t}_n^u$, each data-stream is either a potential null or a potential alternative (or both!). So we can stop our sampling procedure. \textbf{Lemma \ref{lma4}} says that for continuous test statistics, the opposite is also true; i.e., at stopping time $T_d$, we must have $\hat{t}_{T_d}^l>\hat{t}_{T_d}^u$ with probability 1.

				From the discussion above, we get an alternative definition of the stopping time $T_d$; namely, we stop at the earliest time $n$ when the adaptive rejection boundary $t_{n}^l$ is greater than the adaptive acceptance boundary $t_n^u$. i.e., $T_d=\inf\{n\in \mathbb{N}:\hat{t}_{n}^l>\hat{t}_{n}^u\}$. This definition widens the scope to study the asymptotic behavior of the stopping time as the number of data-streams $m$ diverges to $\infty$. But first, we need to establish the asymptotic properties of $\hat{t}_{n}^l$ and $\hat{t}_{n}^u$. Define,
				
				\begin{equation}
					\label{dfn8}
					\hat{Q}_n(t) = \begin{cases}
						\frac{\sum_{j=1}^m\mathbbm{1}(\hat{t}^j_n\leq t)\hat{t}^j_n}{\sum_{j=1}^m\mathbbm{1}(\hat{t}^j_n\leq t)} & t \in [t_n^{(1)},1]\\
						0 & t\in[0,t_n^{(1)})
					\end{cases}
				\end{equation}
				
				The following lemma describes some properties of the function $\hat{Q}_n(t)$:
				\begin{lemma}
					\label{lma8}
					\begin{enumerate}
						\item 	$\hat{Q}_n(t)$ is constant in the interval $[\hat{t}_n^{(r)},\hat{t}_n^{(r+1)})$ for $r=1(1)$ $m-1$ and in the intervals $[0,\hat{t}_n^{(1)})$ and $[\hat{t}_n^{(1)},1]$
						\item   $\hat{Q}_n(t)$ is right-continuous in $(0,1)$.
						\item $\hat{Q}_n(t)$ is non-decreasing in $[0,1]$.
					\end{enumerate}	
				\end{lemma}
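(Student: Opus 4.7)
The plan is to exploit the fact that the indicator $\mathbbm{1}(\hat{t}_n^j \leq t)$, viewed as a function of $t$, is a step function that jumps from $0$ to $1$ exactly at $t = \hat{t}_n^j$. Consequently, both the numerator and denominator of $\hat{Q}_n(t)$ are step functions that can only change at the observed order statistics $\hat{t}_n^{(1)} \leq \hat{t}_n^{(2)} \leq \cdots \leq \hat{t}_n^{(m)}$. Once this is made precise, all three parts follow with little additional work.

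For part (1), I would observe that for $t \in [\hat{t}_n^{(r)}, \hat{t}_n^{(r+1)})$, exactly $r$ of the indicators are $1$ and the rest are $0$, so
\begin{equation*}
\hat{Q}_n(t) \;=\; \frac{1}{r}\sum_{l=1}^{r} \hat{t}_n^{(l)},
\end{equation*}
which is constant in $t$ on that interval. On $[0, \hat{t}_n^{(1)})$, $\hat{Q}_n \equiv 0$ by definition, and on $[\hat{t}_n^{(m)}, 1]$ all $m$ indicators equal $1$, giving the constant value $\frac{1}{m}\sum_{j=1}^m \hat{t}_n^j$ (I take this to be the intended final interval, since the statement as typeset appears to contain a typographic error). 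Part (2) is then immediate: a step function that is constant on right-open intervals of the form $[\hat{t}_n^{(r)}, \hat{t}_n^{(r+1)})$ and jumps at the left endpoints is right-continuous, so $\hat{Q}_n$ inherits right-continuity on $(0,1)$.

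Part (3) is the only computational step. Because $\hat{Q}_n$ is piecewise constant with jumps only at the order statistics, it suffices to show that the value at $\hat{t}_n^{(r+1)}$ is at least the value at $\hat{t}_n^{(r)}$ for each $r = 1, \ldots, m-1$ (and to note the trivial jump from $0$ at $t = \hat{t}_n^{(1)}$). Using the explicit formula above,
\begin{equation*}
\hat{Q}_n(\hat{t}_n^{(r+1)}) - \hat{Q}_n(\hat{t}_n^{(r)}) \;=\; \frac{1}{r+1}\sum_{l=1}^{r+1}\hat{t}_n^{(l)} - \frac{1}{r}\sum_{l=1}^{r}\hat{t}_n^{(l)} \;=\; \frac{1}{r+1}\left(\hat{t}_n^{(r+1)} - \frac{1}{r}\sum_{l=1}^{r}\hat{t}_n^{(l)}\right).
\end{equation*}
Since $\hat{t}_n^{(r+1)} \geq \hat{t}_n^{(l)}$ for every $l \leq r$, the bracketed term is non-negative, which proves the required monotonicity.

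No single step is a serious obstacle; the mild subtlety is bookkeeping the value of $\hat{Q}_n$ at the single point $t = \hat{t}_n^{(1)}$ (where the definition switches branches) to ensure consistency with right-continuity and monotonicity at the boundary, but the definition in \eqref{dfn8} is arranged so that at $t = \hat{t}_n^{(1)}$ we have $\hat{Q}_n(\hat{t}_n^{(1)}) = \hat{t}_n^{(1)} \geq 0$, matching both requirements.
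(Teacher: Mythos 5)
Your proof is correct and follows essentially the same route as the paper's: identify $\hat{Q}_n$ as a step function changing only at the order statistics, read off the constant values on the half-open intervals (parts 1 and 2), and verify that each jump is non-negative by comparing consecutive cumulative averages (part 3). Your jump computation is in fact algebraically cleaner than the paper's, whose displayed summand $\hat{t}_n^{(r+1)}-\hat{t}_n^{(1)}$ should read $\hat{t}_n^{(r+1)}-\hat{t}_n^{(j)}$, and you correctly note that the last interval in the statement should be $[\hat{t}_n^{(m)},1]$ and that the jump is only $\geq 0$ rather than strictly positive in general.
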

				Proof of \textbf{Lemma \ref{lma8}} is in appendix \ref{appendixD}. Now note that, if $\hat{r}_{n}=m$, by definition of $\hat{r}_{n}$,  $\hat{Q}_n(1)\leq\alpha$. So, $  \sup\{t\in [0,1]: \hat{Q}_n(t)\leq\alpha\}=1 $ and if $\hat{r}_{n}<m$, $ \sup\{t\in [0,1]: \hat{Q}_n(t)\leq\alpha\}=\hat{t}_n^{(\hat{r}_{n}+1)} $. So,
				\begin{equation}
					\label{dfn9}
					\sup\{t\in [0,1]: \hat{Q}_n(t)\leq\alpha\}=\hat{t}_{n}^l
				\end{equation}
				
				The following lemma ensures a weak non-stochastic limit to $ \hat{t}_{n}^l $.
				
				\begin{lemma}
					\label{lma9}
					Define $\mathcal{Q}_n(t)=\frac{\pi_0 \mathbb{P}^1_0(t_n^1\leq t)}{\mathbb{P}_\theta(t_n^1\leq t)\vee1}$ for $t\in[0,1]$.Let,
					$\mathcal{T}_n^l=\sup\{t \in [0,1]: \mathcal{Q}_n(t)\leq \alpha  \}$. Then,
					\begin{equation}
						\hat{t}_n^l \stackrel{p}{\rightarrow}\mathcal{T}_n^l
					\end{equation}
				\end{lemma}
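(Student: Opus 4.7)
The plan is to exploit the representation $\hat{t}_n^l = \sup\{t\in[0,1] : \hat{Q}_n(t)\leq\alpha\}$ from equation (\ref{dfn9}) and show that the empirical functional $\hat{Q}_n$ converges in probability to the deterministic $\mathcal{Q}_n$ in a sense strong enough to transfer through the sup/threshold operation. The proof will be carried out in three steps: reducing $\hat{Q}_n$ to its oracle analogue built from the true $t_n^j$'s, applying a law of large numbers to obtain the deterministic limit, and then promoting this to convergence of the thresholds via monotonicity.

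First, I would replace $\hat{Q}_n$ by the intermediate object
\begin{equation*}
Q_n(t) = \frac{\frac{1}{m}\sum_{j=1}^m \mathbbm{1}(t_n^j\leq t)\, t_n^j}{\frac{1}{m}\sum_{j=1}^m \mathbbm{1}(t_n^j\leq t)}.
\end{equation*}
Using Assumption \ref{asm3} together with the continuity of each coordinate of $\mathbf{S}_n$, the indicators $\mathbbm{1}(\hat{t}_n^j\leq t)$ and $\mathbbm{1}(t_n^j\leq t)$ disagree only on indices $j$ whose $t_n^j$ falls in a vanishing window around $t$; since the $t_n^j$ are i.i.d.\ with a continuous distribution under Assumption \ref{asm2}, the fraction of such indices is $o_p(1)$, so $|\hat Q_n(t) - Q_n(t)|\stackrel{p}{\to}0$ pointwise (in fact uniformly on any $[\epsilon,1]$).

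Second, for each fixed $t$ I apply the weak law of large numbers to the numerator and denominator of $Q_n(t)$ separately. Under Assumption \ref{asm2}, the pairs $(S_n^j,\theta_j)$ are i.i.d., so
\begin{equation*}
\frac{1}{m}\sum_{j=1}^m \mathbbm{1}(t_n^j\leq t)\, t_n^j \stackrel{p}{\to} E\bigl[\mathbbm{1}(t_n^1\leq t)\,\mathbbm{1}(\theta_1=0)\bigr] = \pi_0\,\mathbb{P}^1_0(t_n^1\leq t),
\end{equation*}
where the first equality uses the tower property together with $t_n^1 = \mathbb{P}(\theta_1=0\mid S_n^1)$. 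The denominator converges to $\mathbb{P}_\theta(t_n^1\leq t)$, giving $\hat Q_n(t)\stackrel{p}{\to}\mathcal{Q}_n(t)$ pointwise wherever the denominator is positive.

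Finally, I would upgrade pointwise convergence to convergence of the supremum. By Lemma \ref{lma8}, $\hat Q_n$ is non-decreasing in $t$, and $\mathcal{Q}_n$ inherits monotonicity directly from its definition. A Polya-type argument for monotone functions then converts pointwise convergence on a countable dense set into uniform convergence in probability on $[\epsilon,1]$, after which the sandwich
\begin{equation*}
\sup\{t:\mathcal{Q}_n(t)\leq\alpha-\epsilon\} \;\leq\; \hat{t}_n^l \;\leq\; \sup\{t:\mathcal{Q}_n(t)\leq\alpha+\epsilon\}
\end{equation*}
yields $\hat t_n^l\stackrel{p}{\to}\mathcal{T}_n^l$ upon sending $\epsilon\downarrow 0$. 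The main obstacle is this last step: if $\mathcal{Q}_n$ has a flat plateau at level $\alpha$ near $\mathcal{T}_n^l$, the sandwich collapses to a non-degenerate interval and the threshold need not converge. I would dispose of this by invoking strict monotonicity of $\mathcal{Q}_n$ near $\mathcal{T}_n^l$, which follows from the assumed continuity of the distribution of $t_n^1$ (so both $\mathbb{P}^1_0(t_n^1\leq\cdot)$ and $\mathbb{P}_\theta(t_n^1\leq\cdot)$ are strictly increasing where their density is positive) combined with $\pi_0\in(0,1)$.
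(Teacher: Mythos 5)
Your proposal is correct and follows essentially the same route as the proof the paper relies on: the paper gives no self-contained argument but defers to Lemma A.5 of \cite{SC07}, whose proof likewise identifies $E\bigl[\mathbbm{1}(t_n^1\leq t)\,t_n^1\bigr]=\pi_0\,\mathbb{P}^1_0(t_n^1\leq t)$ via the tower property, applies a law of large numbers to $\hat{Q}_n(t)$, and transfers convergence to the threshold through monotonicity and strict increase of $\mathcal{Q}_n$ near $\mathcal{T}_n^l$. The only cosmetic difference is that you pass through uniform convergence on $[\epsilon,1]$, whereas evaluating $\hat{Q}_n$ at the two points $\mathcal{T}_n^l\pm\epsilon$ already suffices for the sandwich; both hinge on the same implicit regularity (positive density of $t_n^1$ near $\mathcal{T}_n^l$) that the cited argument also assumes.
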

				
				Proof of\textbf{ Lemma \ref{lma9}} can be found in the proof of \textbf{Lemma A.5} in \cite{SC07}. We define 
				\begin{equation}
					\label{dfn11}
					\hat{Q'}_n(t) = \begin{cases}
						\frac{\sum_{j=1}^m\mathbbm{1}(\hat{t}^j_n\geq t)(1-\hat{t}^j_n)}{\sum_{j=1}^m\mathbbm{1}(\hat{t}^j_n\geq t)} & t \in [0,t_n^{(m)}]\\
						0 & t\in(t_n^{(m)},1]
					\end{cases}
				\end{equation}
				
				The function $\hat{Q'}_n(t) $ has the following properties: 
				
				\begin{lemma}
					\label{lma10}
					\begin{enumerate}
						\item 	$\hat{Q'}_n(t)$ is constant in the interval $(\hat{t}_n^{(m-r)},\hat{t}_n^{(m-r+1)}]$, for $r=1(1)m-1$ and in the intervals $[0,\hat{t}_n^{(1)}]$ and $(\hat{t}_n^{(m)},1]$
						\item   $\hat{Q'}_n(t)$ is left-continuous in $(0,1)$.
						\item $\hat{Q'}_n(t)$ is non-increasing in $[0,1]$.
					\end{enumerate}	
				\end{lemma}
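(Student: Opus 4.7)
The plan is to verify the three properties in turn, exploiting that $\hat{Q}'_n(t)$ is the upper-tail analogue of $\hat{Q}_n(t)$ from Lemma \ref{lma8}, so the proof mirrors that one with inequalities reversed. Throughout, I fix the sample stage $n$ and treat the estimated lfdr values $\hat{t}_n^1, \dots, \hat{t}_n^m$ as given.

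For claim (1), I would fix $r \in \{1,\dots,m-1\}$ and $t \in (\hat{t}_n^{(m-r)}, \hat{t}_n^{(m-r+1)}]$ and observe that $\{j : \hat{t}_n^j \geq t\} = \{j : \hat{t}_n^j \geq \hat{t}_n^{(m-r+1)}\}$, namely the indices attaining the top $r$ order statistics. Both numerator and denominator in the definition of $\hat{Q}'_n(t)$ therefore depend only on these $r$ indices and do not change as $t$ moves inside the interval. The boundary intervals $[0,\hat{t}_n^{(1)}]$ (all $m$ indices included, ratio $= m^{-1}\sum_j(1-\hat{t}_n^j)$) and $(\hat{t}_n^{(m)},1]$ (ratio defined to be $0$) are immediate from the two-case definition.

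For claim (2), I would note that for each $j$ the map $t \mapsto \mathbbm{1}(\hat{t}_n^j \geq t)$ is left-continuous, because the sublevel set $\{t : \hat{t}_n^j \geq t\}$ equals $(-\infty,\hat{t}_n^j]$ and so the indicator keeps the value $1$ as $t \uparrow \hat{t}_n^j$ and at $t = \hat{t}_n^j$ itself. Consequently the numerator and denominator in \eqref{dfn11} are finite sums of left-continuous step functions in $t$, hence left-continuous, and so is their ratio on the set where the denominator is strictly positive, which is $[0,\hat{t}_n^{(m)}]$. On $(\hat{t}_n^{(m)},1]$ the function is identically $0$, and at $t = \hat{t}_n^{(m)}$ both definitions give the value $1 - \hat{t}_n^{(m)}$ via the left limit (a right discontinuity to $0$ does not affect left-continuity). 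Combining these gives left-continuity on all of $(0,1)$.

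For claim (3), the key observation is that on $[0,\hat{t}_n^{(m)}]$ the quantity $\hat{Q}'_n(t)$ equals the arithmetic mean of the multiset $\{1 - \hat{t}_n^j : \hat{t}_n^j \geq t\}$. As $t$ increases from one jump point $\hat{t}_n^{(m-r+1)}$ to the next $\hat{t}_n^{(m-r)}$, the averaging set loses precisely the index (or indices, under ties) with the smallest $\hat{t}_n^j$ still in the set, equivalently with the largest value of $1-\hat{t}_n^j$ in the set. Removing from an arithmetic mean an element that is at least as large as the current mean cannot increase the mean, giving $\hat{Q}'_n$ non-increasing across each jump; together with constancy inside each interval from claim (1) this yields monotonicity on $[0,\hat{t}_n^{(m)}]$. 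For $t > \hat{t}_n^{(m)}$ the function equals $0 \leq 1 - \hat{t}_n^{(m)} = \hat{Q}'_n(\hat{t}_n^{(m)})$, so monotonicity extends to $[0,1]$.

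No step is genuinely hard; the main subtlety is bookkeeping at $t = \hat{t}_n^{(m)}$, where the definition switches cases, and treating ties among the $\hat{t}_n^j$ when several indices leave the averaging set simultaneously. Both are handled by writing the increment of $\hat{Q}'_n$ across a jump explicitly and bounding it using the fact that every element removed has value $\geq$ the mean over the current set, since every remaining element satisfies $1 - \hat{t}_n^j \leq 1 - t$.
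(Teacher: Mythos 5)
Your proof is correct and is essentially the argument the paper intends: the paper omits the proof of Lemma \ref{lma10} entirely, stating only that it mirrors the proof of Lemma \ref{lma8}, and your write-up is exactly that mirror image (reversed inequalities, left- in place of right-continuity, and the ``removing an element at least as large as the mean cannot increase the mean'' step standing in for the paper's explicit jump computation in part 3). No gaps.
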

				
				The proof is similar to the proof of \textbf{Lemma \ref{lma8}} and so is omitted. It is easy to see that.
				
				\begin{equation}
					\label{dfn10}
					\hat{t}_{n}^u=\inf\{t\in [0,1]: \hat{Q'}_n(t)\leq\beta\}
				\end{equation}
				
				Finally, we find the limiting value of $\hat{t}_{n}^u$ from the following lemma.
				\begin{lemma}
					\label{lma11}
					Define $\mathcal{Q}_n'(t)=\frac{\pi_1 
						\mathbb{P}^1_1(t_n^1\leq t)}{\mathbb{P}_\theta(t_n^1\leq t)\vee1}$ for $t\in[0,1]$.Let,
					$\mathcal{T}_n^u=\inf\{t \in [0,1]: \mathcal{Q}_n'(t)\leq \beta  \}$. Then,
					\begin{equation}
						\hat{t}_n^u \stackrel{p}{\rightarrow}\mathcal{T}_n^u
					\end{equation}
				\end{lemma}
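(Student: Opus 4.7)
The plan is to mirror the proof template used for Lemma \ref{lma9} (which the authors attribute to Lemma A.5 of \cite{SC07}), adapted to the upper-cutoff quantities. Using the dual characterization \eqref{dfn10}, namely $\hat{t}_n^u = \inf\{t\in[0,1]:\hat{Q}'_n(t)\leq \beta\}$, the strategy is to show that $\hat{Q}'_n(t)$ converges in probability to a deterministic monotone limit and then transfer this convergence to the infima. The exact mirror structure of the definitions of $\hat{Q}'_n$ vs.\ $\hat{Q}_n$ and $\mathcal{T}_n^u$ vs.\ $\mathcal{T}_n^l$ suggests that each step of the Lemma \ref{lma9} proof has a direct counterpart here.

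First I would carry out a two-step replacement to handle the ``hat''. Step (a): by Assumption \ref{asm3}, $\hat t_n^j \stackrel{p}{\to} t_n^j$ as $m\uparrow\infty$, so using the boundedness of $\mathbbm{1}(\hat t_n^j \geq t)$ and $(1-\hat t_n^j)$ together with Lemma \ref{lma10}'s monotonicity and left-continuity, the numerator and denominator of $\hat Q'_n(t)$ differ asymptotically negligibly from the analogous sums in which every $\hat t_n^j$ is replaced by $t_n^j$. Step (b): under Assumption \ref{asm2} the random variables $t_n^j$ are i.i.d.\ across $j\in[m]$, so the strong law applied separately to numerator and denominator gives pointwise in $t$ convergence of $\hat Q'_n(t)$ to a deterministic limit; monotonicity (Lemma \ref{lma10}) plus a Glivenko--Cantelli-type sandwich promotes this to uniform convergence on every subinterval where the denominator stays bounded away from zero.

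Next I would identify the limit with $\mathcal{Q}'_n$ as defined in the statement. Writing $1-t_n^j = \mathbb{P}(\theta_j = 1\mid S_n^j)$ and invoking the tower property, the limit of the empirical numerator equals $\mathbb{P}_\theta(\theta_1 = 1,\,t_n^1 \in A) = \pi_1\mathbb{P}_1^1(t_n^1 \in A)$ for the event $A$ determined by $t$, and the denominator limit equals $\mathbb{P}_\theta(t_n^1 \in A)$; matching these to the definition of $\mathcal{Q}'_n$ completes the identification. Finally, because both $\hat Q'_n$ and $\mathcal{Q}'_n$ are monotone in $t$, the standard crossing-time lemma (continuity of $\inf\{t : g(t)\leq\beta\}$ under uniform convergence of monotone functions) converts uniform convergence of $\hat Q'_n$ into $\hat t_n^u \stackrel{p}{\to}\mathcal{T}_n^u$.

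The main obstacles are two. First, the joint estimator structure of $\hat t_n^j$ (arising from shared plug-in estimates of $\pi_0$, $f_{0n}$, $f_n$) may introduce dependence across $j$, so Step (a) above must lean on the marginal pointwise-in-probability consistency of Assumption \ref{asm3} rather than pretending independence; handling this cleanly requires controlling sums of dependent but uniformly bounded terms via Markov's inequality. Second, the extraction of infimum convergence is delicate whenever $\mathcal{Q}'_n$ is flat at level $\beta$, since then the crossing point is not unique; I would circumvent this by assuming (or verifying from the model) strict monotonicity of $\mathcal{Q}'_n$ in a neighborhood of $\mathcal{T}_n^u$, which is the same implicit regularity condition needed in \cite{SC07} and in the proof of Lemma \ref{lma9}.
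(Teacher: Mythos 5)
Your proposal is correct and is essentially the paper's (implicit) argument: the paper offers no separate proof of Lemma \ref{lma11}, relying on the exact mirror symmetry with Lemma \ref{lma9}, whose proof is deferred to Lemma A.5 of \cite{SC07}, and your adaptation via the dual characterization \eqref{dfn10}, the replacement of $\hat t_n^j$ by $t_n^j$ under Assumption \ref{asm3}, the law-of-large-numbers identification $E[\mathbbm{1}(t_n^j\geq t)(1-t_n^j)]=\pi_1\mathbb{P}^1_1(t_n^1\geq t)$, and the monotone crossing-time step is precisely that argument carried out in detail. The regularity caveats you flag (dependence induced by shared plug-in estimates, flatness of $\mathcal{Q}'_n$ at level $\beta$) are the same ones left implicit in \cite{SC07} and in the paper.
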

				\begin{corollary}
					\label{cor2}
					Define, $\hat{s}_n=\hat{t}_n^l-\hat{t}_n^u$ and $\mathcal{S}_n=\mathcal{T}_n^l-\mathcal{T}_n^u$. Then,  \begin{equation}
						\hat{s}_n\stackrel{p}{\rightarrow}\mathcal{S}_n \quad \text{as } m\uparrow \infty
					\end{equation}
				\end{corollary}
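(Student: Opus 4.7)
The plan is to observe that this corollary is a direct consequence of combining Lemma \ref{lma9} and Lemma \ref{lma11} with the standard fact that convergence in probability is preserved under continuous functions, in particular under the subtraction map. The crucial simplifying feature is that the limits $\mathcal{T}_n^l$ and $\mathcal{T}_n^u$ are non-stochastic constants (for any fixed $n$), which means the marginal convergences in probability automatically upgrade to joint convergence in probability of the pair $(\hat{t}_n^l, \hat{t}_n^u)$ to $(\mathcal{T}_n^l, \mathcal{T}_n^u)$.

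Concretely, the argument I would write proceeds by a triangle inequality and union bound. For any $\epsilon > 0$, the event $\{|\hat{s}_n - \mathcal{S}_n| > \epsilon\}$ is contained in the union
\begin{equation*}
\bigl\{|\hat{t}_n^l - \mathcal{T}_n^l| > \epsilon/2\bigr\} \cup \bigl\{|\hat{t}_n^u - \mathcal{T}_n^u| > \epsilon/2\bigr\},
\end{equation*}
so that $\mathbb{P}(|\hat{s}_n - \mathcal{S}_n| > \epsilon)$ is bounded above by $\mathbb{P}(|\hat{t}_n^l - \mathcal{T}_n^l| > \epsilon/2) + \mathbb{P}(|\hat{t}_n^u - \mathcal{T}_n^u| > \epsilon/2)$. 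By Lemma \ref{lma9} and Lemma \ref{lma11} respectively, both terms on the right tend to zero as $m \uparrow \infty$, which establishes $\hat{s}_n \stackrel{p}{\rightarrow} \mathcal{S}_n$.

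There is no substantive obstacle here; the corollary is essentially an observation, and the only thing to check is that I am applying the two preceding lemmas at the same fixed $n$ and taking $m \uparrow \infty$, which is exactly the regime in which those lemmas are stated. If a more abstract flavor were preferred, one could alternatively phrase the proof as an invocation of the continuous mapping theorem applied to the continuous function $g(x, y) = x - y$, but the one-line triangle inequality argument is more self-contained and seems to fit the style of the appendix.
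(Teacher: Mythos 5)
Your proposal is correct and follows the same route as the paper, which simply states that the corollary ``follows directly from Lemmas \ref{lma9} and \ref{lma11}''; your triangle-inequality/union-bound argument is just the explicit write-up of that one-line deduction, and your observation that the limits $\mathcal{T}_n^l$ and $\mathcal{T}_n^u$ are non-stochastic (so the two marginal convergences combine without any joint-distribution issue) is exactly the point that makes the paper's terse justification valid.
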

				The result follows directly from \textbf{Lemmas \ref{lma9}} and \textbf{\ref{lma11}}.
				
				The next lemma ensures a finite stopping time for the data-driven rule even for an indefinitely large number of hypotheses.
				
				\begin{lemma}
					\label{lma12}
					Suppose assumptions \ref{asm1}, \ref{asm2} and \ref{asm3} hold true. Then,
					\begin{equation*}
						\lim_{m\uparrow\infty} \mathbb{P} (T_d<\infty)=1
					\end{equation*}
					and,
					\begin{equation*}
						\lim_{m\uparrow\infty}	\mathbb{P} (T_d\neq n_0)=0
					\end{equation*}
					where, 
					\begin{equation}
						n_0=\inf\{n\in\mathbb{N}: \mathcal{T}_n^l>\mathcal{T}_n^u\}
					\end{equation}
				\end{lemma}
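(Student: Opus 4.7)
The plan is to combine Corollary \ref{cor2}, which gives $\hat{s}_n \stackrel{p}{\to} \mathcal{S}_n$ as $m \uparrow \infty$ for each fixed $n$, with finiteness of $n_0$, reducing both assertions of the lemma to the single claim $\mathbb{P}(T_d = n_0) \to 1$.

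First I would verify that $n_0 < \infty$. Under Assumption \ref{asm1}, $t_n^1 \stackrel{p}{\to} 1$ under $H_{0i}$ and $t_n^1 \stackrel{p}{\to} 0$ under $H_{1i}$, so $\mathbb{P}_0^1(t_n^1 \leq t) \to 0$ and $\mathbb{P}_1^1(t_n^1 \leq t) \to 1$ for every $t \in (0,1)$. Plugging these limits into the definitions of $\mathcal{Q}_n(t)$ and $\mathcal{Q}_n'(t)$ from Lemmas \ref{lma9} and \ref{lma11}, one obtains $\mathcal{Q}_n(t) \to 0$ and $\mathcal{Q}_n'(t) \to \pi_1$ pointwise on $(0,1)$, which forces $\mathcal{T}_n^l \uparrow 1$ and $\mathcal{T}_n^u \downarrow 0$. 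This is the deterministic counterpart of the argument used for the oracle stopping time in Theorem \ref{thm_1} and guarantees a finite first index $n_0$ at which $\mathcal{T}_n^l > \mathcal{T}_n^u$, so $\mathcal{S}_{n_0} > 0$ strictly.

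Next I would handle the two sides of the crossing. Since $\mathcal{S}_{n_0} > 0$, Corollary \ref{cor2} yields $\mathbb{P}(\hat{s}_{n_0} > 0) \to 1$, and the event $\{\hat{s}_{n_0} > 0\}$ is contained in $\{T_d \leq n_0\}$; this already delivers the first claim $\mathbb{P}(T_d < \infty) \to 1$. For the second claim I would show $\mathbb{P}(T_d < n_0) \to 0$ by union-bounding over the finitely many $n < n_0$. By minimality of $n_0$, $\mathcal{S}_n \leq 0$ for every such $n$; provided this inequality is strict, Corollary \ref{cor2} gives $\mathbb{P}(\hat{s}_n > 0) \to 0$ for each $n < n_0$, and summing yields $\mathbb{P}(T_d < n_0) \to 0$. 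Combining the two bounds produces $\mathbb{P}(T_d = n_0) \to 1$, from which both conclusions follow.

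The main obstacle is precisely the strictness $\mathcal{S}_n < 0$ for $n < n_0$: convergence in probability to the boundary value $0$ does not by itself pin down the sign of $\hat{s}_n$. Where $\mathcal{T}_n^l = \mathcal{T}_n^u$ for some $n < n_0$, I would invoke the monotonicity in $n$ that Assumption \ref{asm1} forces on $\mathbb{P}_0^1(t_n^1 \leq t)$ and $\mathbb{P}_1^1(t_n^1 \leq t)$ (yielding non-decreasing $\mathcal{T}_n^l$ and non-increasing $\mathcal{T}_n^u$), together with the step-function structure of these thresholds, to upgrade the weak inequality to a strict one on the pre-$n_0$ segment. Once this technicality is in place, the remainder is a routine application of Corollary \ref{cor2}.
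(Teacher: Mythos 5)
Your proposal follows essentially the same route as the paper: establish $n_0<\infty$ from $\mathcal{T}_n^l\rightarrow 1$ and $\mathcal{T}_n^u\rightarrow 0$, get $\mathbb{P}(T_d\leq n_0)\rightarrow 1$ from $\{\hat{s}_{n_0}>0\}\subseteq\{T_d\leq n_0\}$ via Corollary \ref{cor2}, and kill $\mathbb{P}(T_d<n_0)$ by a union (Bonferroni) bound over the finitely many $n<n_0$ using $\hat{s}_n\stackrel{p}{\rightarrow}\mathcal{S}_n$. Two remarks. First, your intermediate computation $\mathcal{Q}_n'(t)\rightarrow\pi_1$ does not support the conclusion $\mathcal{T}_n^u\rightarrow 0$ that you then draw (if $\pi_1>\beta$ it would push $\mathcal{T}_n^u$ toward $1$); this traces to the displayed definition of $\mathcal{Q}_n'$ in Lemma \ref{lma11}, whose numerator should read $\pi_1\mathbb{P}_1^1(t_n^1\geq t)$, in which case $\mathcal{Q}_n'(t)\rightarrow 0$ and the conclusion is as you and the paper state. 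Second, the strictness obstacle you isolate is genuine: the paper's own proof writes $\mathcal{S}_n\leq 0$ for $n\in[n_0-1]$ and then silently requires strictness by fixing $\epsilon<\min\{-\mathcal{S}_1,\cdots,-\mathcal{S}_{n_0-1},\mathcal{S}_{n_0}\}$. However, your proposed repair does not work as stated: Assumption \ref{asm1} gives only convergence in probability of $t_n^i$ at each fixed $n$ and implies nothing about monotonicity of $\mathbb{P}_0^1(t_n^1\leq t)$ or $\mathbb{P}_1^1(t_n^1\geq t)$ across $n$, so one cannot conclude that $\mathcal{T}_n^l$ is non-decreasing and $\mathcal{T}_n^u$ is non-increasing in $n$. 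To close this point one must either add the hypothesis that $\mathcal{S}_n\neq 0$ for all $n<n_0$ (a mild non-degeneracy condition on the model), or argue that $\mathcal{S}_n=0$ can be ruled out by the step-function/supremum structure of $\mathcal{T}_n^l$ and $\mathcal{T}_n^u$ in the specific mixture model — neither you nor the paper actually supplies such an argument.
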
 
				
				\begin{corollary}
					\label{cor3}
					Suppose assumptions \ref{asm1} and \ref{asm2} hold true. Then,
					\begin{equation*}
						\lim_{m\uparrow\infty} \mathbb{P} (T<\infty)=1
					\end{equation*}
					and,
					\begin{equation*}
						\lim_{m\uparrow\infty}	\mathbb{P} (T\neq n_0)=0
					\end{equation*}
					where, 
					\begin{equation}
						n_0=\inf\{n\in\mathbb{N}: \mathcal{T}_n^l>\mathcal{T}_n^u\}
					\end{equation}
				\end{corollary}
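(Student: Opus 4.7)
The plan is to deduce the corollary directly from Lemma \ref{lma12} by specializing the data-driven rule to the oracle rule. Observe that the oracle rule $(T,\boldsymbol{\delta})$ under the two-group mixture model is obtained from the data-driven rule $(T_d,\hat{\boldsymbol{\delta}})$ by taking the degenerate ``estimator'' $\hat{t}_n^j \equiv t_n^j$ for all $j \in [m]$ and $n \in \mathbb{N}$. With this identification, Assumption \ref{asm3} becomes the tautology $t_n^j \xrightarrow{p} t_n^j$, which is trivially satisfied. Hence all hypotheses of Lemma \ref{lma12} hold, and its conclusions yield exactly the two claims of the corollary once one verifies that none of the intermediate objects in the proof chain (\textbf{Lemmas \ref{lma8}--\ref{lma11}} and \textbf{Corollary \ref{cor2}}) degenerates in this specialization.

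To spell this out, first note that under Assumption \ref{asm2} the true $lfdr$ statistics $t_n^1,\dots,t_n^m$ are i.i.d.\ for each fixed $n$, so the empirical functionals
\begin{equation*}
Q_n(t) \;=\; \frac{\sum_{j=1}^m \mathbbm{1}(t_n^j \le t)\, t_n^j}{\sum_{j=1}^m \mathbbm{1}(t_n^j \le t) \vee 1}, \qquad Q_n'(t) \;=\; \frac{\sum_{j=1}^m \mathbbm{1}(t_n^j \ge t)\,(1-t_n^j)}{\sum_{j=1}^m \mathbbm{1}(t_n^j \ge t)\vee 1}
\end{equation*}
satisfy the piecewise-constancy, one-sided continuity, and monotonicity properties of \textbf{Lemmas \ref{lma8}} and \textbf{\ref{lma10}} verbatim (these proofs only use the ordering of the $t_n^{(k)}$'s, not the fact that they are estimated). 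The analogues of the identities (\ref{dfn9}) and (\ref{dfn10}) identify the oracle cutoffs as $t_n^l = \sup\{t : Q_n(t) \le \alpha\}$ and $t_n^u = \inf\{t : Q_n'(t) \le \beta\}$.

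Next, the Glivenko--Cantelli-type argument used in the proofs of \textbf{Lemmas \ref{lma9}} and \textbf{\ref{lma11}} applies: the i.i.d.\ structure of the $t_n^j$'s gives uniform convergence of the empirical numerator and denominator of $Q_n$ and $Q_n'$ to the deterministic quantities $\pi_0 \mathbb{P}_0^1(t_n^1 \le t)$, $\mathbb{P}_\theta(t_n^1 \le t)$, $\pi_1 \mathbb{P}_1^1(t_n^1 \ge t)$, and $\mathbb{P}_\theta(t_n^1 \ge t)$ respectively. Passing these convergences through the sup/inf functionals (which are continuous at the target level sets under Assumption \ref{asm1}, since the limiting $\mathcal{Q}_n$, $\mathcal{Q}_n'$ are monotone with non-degenerate crossings) yields
\begin{equation*}
t_n^l \xrightarrow{p} \mathcal{T}_n^l, \qquad t_n^u \xrightarrow{p} \mathcal{T}_n^u, \qquad \text{as } m \uparrow \infty.
\end{equation*}
Corollary \ref{cor2} then specializes to give $s_n := t_n^l - t_n^u \xrightarrow{p} \mathcal{S}_n$ for every fixed $n$.

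Finally, the argument of \textbf{Lemma \ref{lma12}} carries over with hats removed. For each $n < n_0$ we have $\mathcal{S}_n \le 0$ by definition of $n_0$ while $\mathcal{S}_{n_0} > 0$; the convergence $s_n \xrightarrow{p} \mathcal{S}_n$ on the finitely many stages $n = 1, \dots, n_0$ then forces $\mathbb{P}(T < n_0) \to 0$ and $\mathbb{P}(T > n_0) \le \mathbb{P}(s_{n_0} \le 0) \to 0$, so $\mathbb{P}(T \ne n_0) \to 0$, which subsumes $\mathbb{P}(T < \infty) \to 1$. The only point requiring a bit of care, and the main place where one must resist invoking Assumption \ref{asm3}, is checking that the limiting functionals $\mathcal{Q}_n, \mathcal{Q}_n'$ are continuous at the threshold levels $\alpha, \beta$ under Assumption \ref{asm1}; but this is exactly what is verified inside the proof of \textbf{Lemma \ref{lma12}} and requires no change here.
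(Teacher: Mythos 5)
Your proposal is correct and follows essentially the same route as the paper: the paper simply remarks that Corollary \ref{cor3} is a consequence of the proof of Lemma \ref{lma12}, and your specialization $\hat{t}_n^j \equiv t_n^j$ (under which Assumption \ref{asm3} holds trivially and $T_d$ coincides with $T$) is precisely the intended reading, with the intermediate Lemmas \ref{lma8}--\ref{lma11} and Corollary \ref{cor2} carrying over with hats removed exactly as you describe.
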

				Therefore, as the number of hypotheses increases, the oracle stopping time $T$ converges weakly towards a finite natural number $n_0$. Proof of \textbf{corollary \ref{cor3}} is a consequence of the proof of \textbf{Lemma \ref{lma12}}.  
				
				The proof of \textbf{Theorem \ref{thm_3}} is completed by the following lemma.
				
				\begin{lemma}
					\label{lma13}
					Suppose assumptions \ref{asm1}, \ref{asm2} and \ref{asm3} hold. Let $s\in[m-\hat{a}_{T_d},\hat{r}_{T_d}].$ Define $\hat{\mathbf{D'}}=(\hat{D'}^1,\hat{D'}^2,\cdots,\hat{D'}^m)$ where \begin{equation*}
						\hat{D'}^i=\mathbbm{1}(\hat{t}_{T_d}^{i}\leq\hat{t}_{T_d}^{(s)})
					\end{equation*}
					Then, for such a test $(T_d,\hat{\mathbf{D'}})\in \Delta'(\alpha,\beta)$
				\end{lemma}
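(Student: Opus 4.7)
The plan is to argue that the data-driven test $(T_d, \hat{\mathbf{D}}')$ behaves asymptotically like an oracle rule evaluated at the deterministic time $n_0$, and then to read off the asymptotic FDR/FNR bounds from the empirical inequalities that Algorithm~1 enforces at the stopping time.

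First, I would rewrite $FDR$ and $FNR$ through conditioning. Under Assumption~\ref{asm2}, combined with Lemma~\ref{lma7} applied at the finite stopping time $T_d$, the tower property gives $E[(1-\theta^i)\mid\mathcal{F}_{T_d}] = t_{T_d}^{i}$ almost surely, since $S_{T_d}^i$ is sufficient and the coordinates are independent. Because $\hat{D}'^{\,i}$ and $R$ are $\mathcal{F}_{T_d}$-measurable,
\begin{equation*}
FDR = E\!\left[\frac{\sum_{i=1}^m \hat{D}'^{\,i}\, t_{T_d}^{i}}{R\vee 1}\right],\qquad
FNR = E\!\left[\frac{\sum_{i=1}^m (1-\hat{D}'^{\,i})(1-t_{T_d}^{i})}{(m-R)\vee 1}\right].
\end{equation*}

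Second, I would exploit the constraints built into the definition of $s$. Since $s\le \hat r_{T_d}$, the definition of $\hat r_n$ in Algorithm~1 gives the deterministic inequality
\begin{equation*}
\frac{1}{R}\sum_{i=1}^m \hat{D}'^{\,i}\,\hat t_{T_d}^{\,i} \;=\; \frac{1}{s}\sum_{l=1}^{s}\hat t_{T_d}^{(l)} \;\le\; \alpha
\quad\text{on } \{R>0\},
\end{equation*}
and symmetrically $s\ge m-\hat a_{T_d}$ forces $\frac{1}{m-R}\sum_{i}(1-\hat D'^{\,i})(1-\hat t_{T_d}^{\,i})\le \beta$ on $\{R<m\}$.

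Third, I would replace the estimated $\hat t_{T_d}^{\,i}$ by the true $t_{T_d}^{\,i}$ using consistency. By Lemma~\ref{lma12} the stopping time $T_d$ concentrates on the deterministic value $n_0$ as $m\uparrow\infty$, so up to an event of vanishing probability we may work at the fixed time $n=n_0$. On this event, Assumption~\ref{asm3} together with Lemmas~\ref{lma9} and~\ref{lma11} (which control the behavior of the adaptive cutoffs via the empirical quantile maps $\hat Q_n,\hat Q_n'$) yields
\begin{equation*}
\frac{1}{R}\sum_{i=1}^m \hat D'^{\,i}\bigl(\hat t_{T_d}^{\,i}-t_{T_d}^{\,i}\bigr)\stackrel{p}{\to}0,
\end{equation*}
with the analogous statement on the acceptance side. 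Combining this with the deterministic inequalities from the previous step and the boundedness of the integrands in $[0,1]$, the bounded convergence theorem gives
\begin{equation*}
\limsup_{m\uparrow\infty} FDR \le \alpha,\qquad \limsup_{m\uparrow\infty} FNR \le \beta,
\end{equation*}
so $(T_d,\hat{\mathbf{D}}')\in\Delta'(\alpha,\beta)$.

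The main obstacle is this third step: turning the coordinatewise-in-$i$ consistency of Assumption~\ref{asm3} into convergence of sums normalized by the random count $R=s$, which itself depends on the entire estimated vector and may grow with $m$. A term-by-term dominated convergence is not enough, because $\hat D'^{\,i}$ depends jointly on all $\hat t_{T_d}^{\,j}$. The clean way to handle this is to invoke the Glivenko--Cantelli-type convergence of the empirical quantile processes $\hat Q_n$ and $\hat Q_n'$ under the i.i.d.\ mixture model of Assumption~\ref{asm2} (as in Sun and Cai, 2007), which, together with Lemma~\ref{lma12}, controls the normalized sums uniformly and feeds directly into the bounded-convergence conclusion above.
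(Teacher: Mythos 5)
Your proposal is correct and follows essentially the same route as the paper: condition via the tower property to express $FDR$ and $FNR$ through the true lfdr values, use Lemma \ref{lma12} to localize at the deterministic time $n_0$, swap $\hat t^{\,i}_{n_0}$ for $t^{\,i}_{n_0}$ inside the normalized sums, and finish with the empirical constraints $\hat Q_{n_0}(\hat t_{n_0}^{(s)})\le\alpha$ and $\hat Q_{n_0}'(\hat t_{n_0}^{(s+1)})\le\beta$ built into Algorithm~1. The only difference is in the step you flag as the main obstacle: where you invoke a Glivenko--Cantelli-type result from Sun and Cai, the paper proves the same convergence directly by a second-moment/weak-law computation, showing $\frac{1}{m}\sum_i (t_{n_0}^{i}-\hat t_{n_0}^{i})\mathbbm{1}(\hat t_{n_0}^{i}\le\hat t_{n_0}^{(s)})\stackrel{p}{\to}0$ while the normalizing fraction $\frac{1}{m}\sum_i\mathbbm{1}(\hat t_{n_0}^{i}\le\hat t_{n_0}^{(s)})$ converges to the strictly positive limit $\mathbb{P}(t_{n_0}^{i}<\mathcal{T}_{n_0}^{u})$.
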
 
				
				
				
			\end{proof}
			\section{Proof of Lemmas} \label{appendixD}
			\begin{proof}[\textbf{Proof of \textbf{Lemma \ref{lma1}} :}]
				First, note that, for continuous test statistics $\mathbf{S}_n$, $\mathbb{P}(t_n^{*i}=1)=\mathbb{P}(t_n^{*i}=0)=0 \ \forall i = 1(1)m $.
				
				$if$ part:
				We assume that, $t_n^{*l}>t_n^{*u}$. For the trivial cases, i.e., when $t_n^{*u}=0$ for example, by definition, $\mathrm{a}_{n}=m$. Therefore,  $\mathrm{r}_{n}+\mathrm{a}_{n}\geq m$ almost surely. The same follows when $t_n^{*l}=1$.
				
				So, if the non-trivial case is true, i.e., $0<t_n^{*u}<t_n^{*l}<1$,
				by definitions of $t_n^{*l}$ and  $t_n^{*u} $, $t_n^{*l}=t_n^{*(\mathrm{r}_{n}+1)}$ and $t_n^{*u}=t_n^{*(m-\mathrm{a}_{n})}$. And finally, $\mathrm{r}_{n}+\mathrm{a}_{n}\geq m$ almost surely. The second part of the lemma is therefore proved.
				
				$only \ if$ part:
				Let $\mathrm{r}_{n}+\mathrm{a}_{n}\geq m$. For $\mathrm{r}_{n}=m$, we know, $t_n^{*l}=1$ almost surely, which in turn proves that $t_n^{*l}>t_n^{*u}$ almost surely. Similarly if $\mathrm{a}_{n}=m$, the same result follows. 
				
				For the non trivial case, i.e., when $\mathrm{r}_{n}<m$ and $ \mathrm{a}_{n}< m$, we have $t_n^{*l}=t_n^{*(\mathrm{r}_{n}+1)}$ and $t_n^{*u}=t_n^{*(m-\mathrm{a}_{n})}$. Finally we have $\mathrm{r}_{n}+\mathrm{a}_{n}\geq m$ i.e., $\mathrm{r}_{n}+1 > m -\mathrm{a}_{n}$, i.e., $t_n^{*l}>t_n^{*u} $ almost surely. 
			\end{proof}
			
			\begin{proof}[\textbf{Proof of Lemma \ref{lma2}}]
				\begin{equation*}
					\begin{split}
						t_n^{*i} = & \mathbb{P}(\theta_i=0|Z_n) \\
						= & \frac{\pi_0f^*_n(Z_n|\theta_i=0)}{\pi_0f^*_n(Z_n|\theta_i=0)+\pi_1f^*_n(Z_n|\theta_i=1)} \text{ (due to (2.4))}\\
						= & \frac{\pi_0f_0(Z_n^i)\prod_{j\neq i} f_n(Z_n^j)}{\pi_0f_0(Z_n^i)\prod_{j\neq i} f_n(Z_n^j)+\pi_1f_{1n}(Z_n^i)\prod_{j\neq i} f_n(Z_n^j)} \\
						= & \frac{\pi_0f_0(Z_n^i)}{f_n(Z_n^i)} = t_n^i
					\end{split}
				\end{equation*}
			\end{proof}
			
			\begin{proof}[\textbf{Proof of Lemma \ref{lma4}}.]

				\textbf{Assumption 2} states that
				\begin{equation*}
					X_n^j|\theta_j \stackrel{iid}{\sim} \theta_j f_1 + (1-\theta_j) f_0, \ n \in \mathbb{N}
				\end{equation*} 
				with \begin{equation*}
					\theta_j \sim \text{Bernoulli}(\pi_1) 
				\end{equation*}
				for each $j\in 1(1)m$.
				
				Then total number of alternative is $K=\sum_{j=1}^m \theta_j$. i.e. $K \sim \text{Bin}(m,\pi_1)$. 
				
				
				Let $T_g^*$ be the stopping time for GAP rule for number of coordinates $m$. i.e.,
				
				\begin{equation}
					T_g^* = \inf\{n\in\mathbb{N} \ | \ \Lambda_n^{(K)}-\Lambda_n^{(K+1)}\geq \log\frac{K(m-K)}{\alpha\wedge\beta}\}
				\end{equation}
				
				Where, $\Lambda_n^{j}$ is the log-likelihood ratio corresponding to the $j$-th coordinate , $ j \in 1(1)m$ and $\Lambda_n^{(1)}\geq\Lambda_n^{(2)}\geq \cdots \geq \Lambda_n^{(m)}$ is correspondingly the ordered representation of the log-likelihood ratios. For any finite positive integer $L$,
				\begin{equation}
					\begin{split}
						\mathbb{P}(T_g^*\leq L) &\leq \mathbb{P}(\cup_n A_n(m))\\
						&\leq \sum_n \mathbb{P}( A_n(m))\\
					\end{split}
				\end{equation}        
				Where \begin{multline*}
					A_n(m)=\{\omega\in \Omega \  | \ \log(S_n^{(K(\omega))}(\omega))-\log(S_n^{(K(\omega)+1)}(\omega))\geq\\ \log\bigg(\frac{K(\omega)(m-K(\omega))}{\alpha\wedge\beta} \bigg)\}
				\end{multline*}.
				
				Now let $\Lambda_n^{j}, \ j=1(1)m$ are iid with cdf $H_n(.)$, 
				\begin{multline*}
					B_{\epsilon,c}(m)=\{\omega\in \Omega : |\Lambda_n^{(Z_m)}(\omega)-H_n^{-1}(\pi_1)|<\epsilon \text{ for } Z_m \in \mathbb{N}, Z_m = m\pi_1 + a_m, \\
					|a_m| < c\sqrt{m}\log(m)\}
				\end{multline*}
				
				Due to \textbf{Lemma 6.} of \cite{Bahadur66}, $\exists M_1 \ (\text{ depending on } c \ \& \ \epsilon) \in \mathbb{N}$ such that $\mathbb{P}(\cap_{m\geq M_1}	B_{\epsilon,c}(m))=1$. 
				
				Therefore, it is evident that, for all $c,\epsilon>0$ $ \mathbb{P}(	B_{\epsilon,c}(m))\rightarrow1$ as $m\uparrow\infty.$
				
				For $c>0$, let
				\begin{equation*}
					D_c(m) = \{\omega\in \Omega : |K(\omega)-m\pi_1| < c\sqrt{m}\log(m)\}
				\end{equation*}
				Due to Bernstein's inequality, it can be shown that, for fixed $c>0$, $\mathbb{P}(D_c(m))\rightarrow 1$ as $m \uparrow \infty$.
				So we have, for fixed $c,\epsilon>0$, 
				\begin{equation}
					\label{eqn99}
					\mathbb{P}(D_c(m)\cap B_{\epsilon,c}(m)) \geq \mathbb{P}(D_c(m))+ \mathbb{P}(	B_{\epsilon,c}(m)) -1
				\end{equation}
				Which converges to 1 as $m \uparrow \infty$.
				
				Now, for any $\omega \in D_c(m)\cap B_{\epsilon,c}(m)$, $ |\Lambda_n^{(K(\omega))}(\omega)-H_n^{-1}(\pi_1)|<\epsilon \ $ \& $\  |\Lambda_n^{(K(\omega)+1)}(\omega)-H_n^{-1}(\pi_1)|<\epsilon$ i.e. $|\Lambda_n^{(K(\omega))}(\omega)-\Lambda_n^{(K(\omega)+1)}(\omega)|<2\epsilon$. But, $\log(K(\omega)(m-K(\omega))/(\alpha\wedge\beta))>2\log(m)+\zeta$ for some sufficiently small $\zeta$.
				
				Therefore,
				\begin{equation*}
					\begin{split}
						\mathbb{P}(A_n(m))= & \mathbb{P}(A_n(m) \cap (D_c(m)\cap B_{\epsilon,c}(m))) + \\ & \mathbb{P}(A_n(m) \cap (D_c(m)\cap B_{\epsilon,c}(m))^c)\\
					\end{split}
				\end{equation*}
				Where the first term converges to 0 due to the discussion in the previous paragraph and the second term converges to 0 because of \ref{eqn99}.
				
				Finally, we observe that $\mathbb{P}(T_g^*>L) \rightarrow 0$ as $m \uparrow \infty$ for any $L \in \mathbb{N}$, which proves the lemma.	
			\end{proof}
			
			\begin{proof}[\textbf{Proof of Lemma \ref{lma5}:}]
				By (2.1), $FDR$ due to the sequential test $(\tau,\mathbf{D}')$ is:
				\begin{equation*}
					\begin{split}
						FDR=&  E( \frac{\sum_{i=1}^m(1-\theta^i)D'^i}{(\sum_{i=1}^m D'^i)\vee1}) \\
						=& E_{Z_{\tau}} (E_{\theta |Z_{\tau}} ( \frac{\sum_{i=1}^m(1-\theta^i)D'^i}{(\sum_{i=1}^m D'^i)\vee1}|Z_{\tau}))\\
						= & E_{Z_{\tau}}  ( \frac{\sum_{i=1}^m(1-E(\theta^i|Z_{\tau}))D'^i}{(\sum_{i=1}^m D'^i)\vee1})\\
						= & E_{Z_{\tau}}  ( \frac{\sum_{i=1}^m t_{\tau}^{*i}D'^i}{(\sum_{i=1}^m D'^i)\vee1})\\
						= & E_{Z_{\tau}}  ( \frac{1}{l\vee 1}\sum_{i=1}^l t_{\tau}^{*(i)})\\
						\leq & E_{Z_{\tau}}  ( \frac{1}{\mathrm{r}_{\tau}\vee 1}\sum_{i=1}^{\mathfrak{r}_{\tau}} t_{\tau}^{*(i)})\\ 
						\leq & \alpha  
					\end{split}
				\end{equation*} 
				The first inequality occurs since $l\in [\mathrm{r}_\tau]$ and the sequence of cumulative average of ordered(increasing) values is increasing in number of terms involved. The second inequality comes from (3.5).
			\end{proof}

			\begin{proof}[\textbf{Proof of Lemma \ref{lma6}:}]
				By (2.2), $FNR$ due to the sequential test $(\tau,\mathbf{d}')$ is:
				\begin{equation*}
					\begin{split}
						FNR=&  E( \frac{\sum_{i=1}^m(1-d'^i)\theta^i}{(\sum_{i=1}^m (1-d'^i))\vee1}) \\
						=& E_{Z_{\tau}} (E_{\theta |Z_{\tau}} ( \frac{\sum_{i=1}^m(1-d'^i)\theta^i}{(\sum_{i=1}^m (1-d'^i))\vee1}|Z_{\tau}))\\
						= & E_{Z_{\tau}} ( \frac{\sum_{i=1}^m(1-d'^i)E_{\theta }(\theta^i|Z_{\tau})}{(\sum_{i=1}^m (1-d'^i))\vee1})\\
						= & E_{Z_{\tau}}  ( \frac{\sum_{i=1}^m (1-t_{\tau}^{*i})(1-d'^i)}{(\sum_{i=1}^m (1-d'^i))\vee1})\\
						= & E_{Z_{\tau}}  ( \frac{1}{l\vee 1}\sum_{i=1}^l (1-t_{\tau}^{*(m-i+1)}))\\
						\leq & E_{Z_{\tau}}  ( \frac{1}{\mathrm{a}_{\tau}\vee 1}\sum_{i=1}^{\mathfrak{a}_{\tau}}  (1-t_{\tau}^{*(m-i+1)}))\\ 
						\leq & \beta  
					\end{split}
				\end{equation*} 
				As in the previous proof, since $l \in [\mathrm{a}_\tau]$ , and due to the fact that the sequence of cumulative average of ordered (increasing) values is increasing in number of terms involved, the first inequality occurs. The second inequality comes from (3.6).
			\end{proof}

			\begin{proof}[\textbf{Proof of Lemma \ref{lma7}.:}]
				
				Define, for fixed $n\in\mathbb{N}$, $\mathcal{A}_n=\{\omega:t_n^{*i}(\omega)\neq t_n^{i}(\omega)\}$.
				Due to \textbf{Lemma \ref{lma2}}, 
				\begin{equation*}
					\mathbb{P}(\mathcal{A}_n)=0
				\end{equation*}
				Define, $\mathcal{B}=\{\omega:t^{*i}_{T(\omega)}(\omega)\neq t^{i}_{T(\omega)}(\omega)\}$.
				Let $\omega_0\in \mathcal{B}$.  Then $T(\omega_0)=n_0\in\mathbb{N}$. Therefore, $\omega_0\in \mathcal{A}_{n_0}$.
				i.e., $\mathcal{B}\subseteq \cup_{n\in \mathbb{N}} \mathcal{A}_n$.
				Finally, \begin{equation}
					\begin{split}
						\mathbb{P}(\mathcal{B})\leq &\mathbb{P}(\cup_{n\in \mathbb{N}} \mathcal{A}_n)\\
						\leq & \sum_{n\in\mathbb{N}} \mathbb{P}( \mathcal{A}_n)\\
						= & 0
					\end{split}
				\end{equation}
				This completes the proof.
			\end{proof}

			\begin{proof}[\textbf{Proof of Lemma \ref{lma8}.}]
				\begin{enumerate}
					
					\item Note that, $\hat{Q}_n(\hat{t}_n^{(r)})=\frac{\sum_{j=1}^m\mathbbm{1}(\hat{t}^j_n\leq \hat{t}_n^{(r)})\hat{t}^j_n}{\sum_{j=1}^m\mathbbm{1}(\hat{t}^j_n\leq \hat{t}_n^{(r)})}$ .
					Now, by definition of $\hat{t}_n^{(r)}$, $\sum_{j=1}^m\mathbbm{1}(\hat{t}^j_n\leq \hat{t}_n^{(r)})\hat{t}^j_n = \sum_{j=1}^r \hat{t}_n^{(j)}$	and $\sum_{j=1}^m\mathbbm{1}(\hat{t}^j_n\leq \hat{t}_n^{(r)})=r$.
					
					So, $\hat{Q}_n(\hat{t}_n^{(r)})=\frac{1}{r}\sum_{j=1}^r \hat{t}_n^{(j)}$.
					Now, for $t \in (\hat{t}_n^{(r)},\hat{t}_n^{(r+1)})$, there is no local FDR value. Hence for such $t$, $\sum_{j=1}^m\mathbbm{1}(\hat{t}^j_n\leq t)\hat{t}^j_n = \sum_{j=1}^r \hat{t}_n^{(j)}$	and $\sum_{j=1}^m\mathbbm{1}(\hat{t}^j_n\leq t)=r$. So, for any $t \in [t_n^{(r)},t_n^{(r+1)}),\ \hat{Q}_n(t)=\frac{1}{r}\sum_{j=1}^r \hat{t}_n^{(j)}$.
					
					Similarly, for $ t \in [\hat{t}_n^{(m)},1],\hat{Q}_n(t)=\frac{1}{m}\sum_{j=1}^m \hat{t}_n^{(j)} $.
					
					By definition, for $ t \in [0,\hat{t}_n^{(1)}),\hat{Q}_n(t)=0 $.
					
					Hence, \textbf{ Lemma \ref{lma8}.} \textit{1} is proved.
					
					\item \textbf{ Lemma \ref{lma8}.} \textit{1} implies that value of $\hat{Q}_n(t)$ is constant in the intervals $(0,\hat{t}_n^{(1)})$; $(\hat{t}_n^{(r)},\hat{t}_n^{(r+1)})$ for $ r=1(1)m-1$ and $(\hat{t}_n^{(m)},1)$. And therefore is continuous in those intervals. Our goal is to show that, $\hat{Q}_n(t)$ is right continuous at the points: \{$\hat{t}_n^{(r)}$ for $r=1(1)m$ \}. Now, from the discussion in the previous part, we can conclude that, for $h < \hat{t}_n^{(r+1)}-\hat{t}_n^{(r)}$, $\hat{Q}_n(\hat{t}_n^{(r)}+h) = \hat{Q}_n(\hat{t}_n^{(r)})$ for $r=1(1)m-1$.
					
					Therefore, $\lim_{h\downarrow 0} \hat{Q}_n(\hat{t}_n^{(r)}+h)=\hat{Q}_n(\hat{t}_n^{(r)})$ and hence, $\hat{Q}_n(t)$ is right continuous in the points \{$\hat{t}_n^{(r)}$ for $r=1(1)m-1$ \}.
					
					The same idea applies for $\hat{t}_n^{(m)}$ for $\hat{Q}_n(\hat{t}_n^{(m)})$ for $h < 1-\hat{t}_n^{(m)}$ and thus $\hat{Q}_n(t)$ is right continuous at the point $\hat{t}_n^{(m)}$.
					
					Hence, \textbf{ Lemma \ref{lma8}.} \textit{2} is proved.
					
					\item From \textbf{ Lemma \ref{lma8}.} \textit{1} we can deduce that, $\hat{Q}_n(t)$ is constant in $[0,1]$ except for jumps at the points \{$\hat{t}_n^{(r)}$ for $r=1(1)m$ \}. To prove that $\hat{Q}_n(t)$ is non decreasing, we are done if we can prove that jumps at the points mentioned above are positive.
					
					Now, jump at point $\hat{t}_n^{(r)}$ for $r=1(1)m-1$ is:
					
					\begin{align*}
						\hat{Q}_n(\hat{t}_n^{(r+1)})-\hat{Q}_n(\hat{t}_n^{(r)})= &  \frac{1}{r+1}\sum_{j=1}^{r+1} \hat{t}_n^{(j)}-\frac{1}{r}\sum_{j=1}^r \hat{t}_n^{(j)}\\
						& = \frac{1}{r(r+1)}{\sum_{j=1}^r} (\hat{t}_n^{(r+1)}-\hat{t}_n^{(1)})>0\\
					\end{align*}
					Hence, \textbf{ Lemma \ref{lma8}.} \textit{3} is proved.
				\end{enumerate}	
			\end{proof}
			
			\begin{proof}[\textbf{Proof of \textbf{Lemma \ref{lma12}.}:}]
				Due to \textbf{Assumption 1.}, as $n\uparrow \infty$, $\mathbb{P}_1^1(t_n^1\geq t)\rightarrow 0 \ \forall t \in (0,1]$ and $\mathbb{P}_0^1(t_n^1\leq t')\rightarrow 0\ \forall t' \in [0,1)$. Therefore, $\forall t \in (0,1)$, $\mathcal{Q}_n(t)\rightarrow 0$ and $\mathcal{Q}_n'(t)\rightarrow 0$ as $n\uparrow\infty$.  And $\mathbb{P}_1^1(t_n^1\geq0 )=\mathbb{P}_0^1(t_n^1\leq1)=1$. So, by definition, $\lim_{n\uparrow\infty}\mathcal{T}_n^l\rightarrow 1$ and $\lim_{n\uparrow\infty}\mathcal{T}_n^u\rightarrow 0$.  
				Now, if for some $n\in\mathbb{N}$, $\hat{s}_n>0$, we must have $T_d\leq n$ almost surely. i.e.,
				\begin{equation*}
					\begin{split}
						\lim_{m\uparrow\infty}\mathbb{P}(T_d\leq n) \geq & \lim_{m\uparrow\infty}\mathbb{P}(\hat{s}_n>0) \\
						\geq & \mathbb{P}(\mathcal{S}_n>\epsilon) \quad \text{for } \epsilon\in(0,1)
					\end{split}
				\end{equation*}
				Last inequality holds due to \textbf{corollary 1.}. As $n\uparrow\infty$, we get, 	\begin{equation*}
					\begin{split}
						\lim_{m\uparrow\infty}\mathbb{P}(T_d\leq\infty) \geq & \lim_{n\uparrow\infty}\mathbb{P}(\mathcal{S}_n>\epsilon) \\
						=& 1
					\end{split}
				\end{equation*}
				Exchange of limits is allowed by BCT since probability values are bounded in $[0,1]$. The last euality holds since $\mathcal{S}_n=\mathcal{T}_n^l-\mathcal{T}_n^u \stackrel{n\uparrow\infty}{\rightarrow} 1( > \epsilon\in(0,1))$. Hence first part is proved.
				
				To prove the second part, note that, by definition of $n_0$, $\mathcal{S}_n\leq0, \ \forall n \in [n_0-1]$ and $\mathcal{S}_{n_0}>0$. Since, $\lim_{n\uparrow\infty}\mathcal{S}_n\rightarrow1$, we must have, $n_0<\infty$.
				Now by \textbf{corollary 1.},  for fixed $n\in\mathbb{N}$, $\hat{s}_n\stackrel{p}{\rightarrow}\mathcal{S}_n$ as $m\uparrow\infty$. So, for $ n\in[n_0]$, for every $\epsilon>0$, for some $\delta>0$, $\exists M_n, $ such that, $\mathbb{P}(|\hat{s}_n-\mathcal{S}_n|<\epsilon)\geq(1-\frac{\delta}{n_0})\ \ \forall m\geq M_n $. Fix $\epsilon<\min\{-\mathcal{S}_1,-\mathcal{S}_2,\cdots,-\mathcal{S}_{n_0-1},\mathcal{S}_{n_0}\}$. For such $\epsilon$, for $m>M_0=\max\{M_1,M_2,\cdots,M_{n_0}\}$,\begin{equation*}
					\mathbb{P}(\hat{s}_n<0)\geq\mathbb{P}(\hat{s}_n<\mathcal{S}_n+\epsilon)\geq\mathbb{P}(|\hat{s}_n-\mathcal{S}_n|<\epsilon)\geq(1-\frac{\delta}{n_0})
				\end{equation*} $\forall n \in[n_0-1]$ and 
				\begin{equation*}
					\mathbb{P}(\hat{s}_{n_0}>0)\geq\mathbb{P}(\hat{s}_{n_0}>\mathcal{S}_{n_0}-\epsilon)\geq\mathbb{P}(|\hat{s}_{n_0}-\mathcal{S}_{n_0}|<\epsilon)\geq(1-\frac{\delta}{n_0}).
				\end{equation*}
				
				Now, for $\delta>0$, $\exists \  M_0 \in \mathbb{N}$  such that $\forall m>M_0$,\begin{equation*}
					\begin{split}
						\mathbb{P}(T_d=n_0)=& \quad	\mathbb{P}(\cap_{n\in[n_0-1]}\{\hat{s}_n<0\}\cap\{\hat{s}_{n_0}>0\}) \\
						\geq &  \sum_{n\in[n_0-1]}\mathbb{P}(\hat{s}_n<0)+\mathbb{P}(\hat{s}_{n_0}>0)-n_0+1\\
						\geq & \quad n_0(1-\frac{\delta}{n_0})-n_0+1\\
						=& \quad 1-\delta
					\end{split}
				\end{equation*}
				
				i.e., $\lim_{m\uparrow\infty} \mathbb{P}(T_d=n_0) =1$, which proves the lemma.
			\end{proof}
			
			\begin{proof}[\textbf{Proof of  Lemma \ref{lma13}}]
				
				For the test $(T_d,\hat{D'})$,
				\begin{equation}
					\begin{split}
						FDR=&  E_\theta( \frac{\sum_{i=1}^m(1-\theta^i)\hat{D'}^i}{(\sum_{i=1}^m \hat{D'}^i)\vee1}) \\
						=& E_{Z_{T_d}} (E_{\theta |Z_{T_d}} ( \frac{\sum_{i=1}^m(1-\theta^i)\hat{D'}^i}{(\sum_{i=1}^m \hat{D'}^i)\vee1}|Z_{T_d}))\\
						= & E_{Z_{T_d}}  ( \frac{\sum_{i=1}^m(1-E(\theta^i|Z_{T_d}))\hat{D'}^i}{(\sum_{i=1}^m \hat{D'}^i)\vee1})\\
						= & E_{Z_{T_d}}  ( \frac{\sum_{i=1}^m t_{T_d}^{*i}\hat{D'}^i_{T_d}}{(\sum_{i=1}^m \hat{D'}^i_{T_d})\vee1})\\
					\end{split}
				\end{equation}
				In the final line we add $T_d$ as the suffix of $\hat{D'}^i$ to emphasis the fact that $\hat{D'}$ depends on the stopping time $T_d$ (which was omitted to maintain simplicity.)
				Now, $T_d=n_0$ implies
				\begin{equation*}
					\frac{\sum_{i=1}^m t_{T_d}^{*i}\hat{D'}^i_{T_d}}{(\sum_{i=1}^m \hat{D'}^i_{T_d})\vee1}=\frac{\sum_{i=1}^m t_{n_0}^{*i}\hat{D'}^i_{n_0}}{(\sum_{i=1}^m \hat{D'}^i_{n_0})\vee1}
				\end{equation*} 
				From \textbf{Lemma \ref{lma12}.} we get,
				\begin{equation*}
					\frac{\sum_{i=1}^m t_{T_d}^{*i}\hat{D'}^i_{T_d}}{(\sum_{i=1}^m \hat{D'}^i_{T_d})\vee1}-\frac{\sum_{i=1}^m t_{n_0}^{*i}\hat{D'}^i_{n_0}}{(\sum_{i=1}^m \hat{D'}^i_{n_0})\vee1}\stackrel{p}{\rightarrow}0
				\end{equation*} 
				as $m\uparrow\infty$. Since the quantity is bounded in $[-1,1]$, we get 
				\begin{equation}
					\label{eqn2}
					\lim_{m\uparrow\infty}( FDR - E(\frac{\sum_{i=1}^m t_{n_0}^{*i}\hat{D'}^i_{n_0}}{(\sum_{i=1}^m \hat{D'}^i_{n_0})\vee1}))=0
				\end{equation}
				So,
				\begin{equation*}
					\lim_{m\uparrow\infty}( FDR-E(\hat{Q}_{n_0}(\hat{t}_{n_0}^{(s)}))=\lim_{m\uparrow\infty}( E(\frac{\sum_{i=1}^m t_{n_0}^{*i}\hat{D'}^i_{n_0}}{(\sum_{i=1}^m \hat{D'}^i_{n_0})\vee1}-\hat{Q}_{n_0}(\hat{t}_{n_0}^{(s)}))
				\end{equation*}
				Now, $\hat{D'}^i_{n_0} = \mathbbm{1}(\hat{t}_{n_0}^{i}\leq\hat{t}_{n_0}^{(s)})$ with, $s\in [m-\hat{a}_{n_0},\hat{r}_{n_0}]$.  And due to \textbf{Assumption 2}, $\mathbb{P}(t_{n_0}^{*i}\neq t_{n_0}^{i})=0$. 
				\begin{equation}
					\begin{split}
						\label{eqn1}
						E\bigg(\frac{\sum_{i=1}^m t_{n_0}^{*i}\hat{D'}^i_{n_0}}{(\sum_{i=1}^m \hat{D'}^i_{n_0})\vee1}-\hat{Q}_{n_0}(\hat{t}_{n_0}^{(s)})\bigg) & = E\bigg(\frac{\sum_{i=1}^m (t_{n_0}^{i}-\hat{t}_{n_0}^{i})\mathbbm{1}(\hat{t}_{n_0}^{i}\leq\hat{t}_{n_0}^{(s)})}{(\sum_{i=1}^m \mathbbm{1}(\hat{t}_{n_0}^{i}\leq\hat{t}_{n_0}^{(s)}))\vee1}\bigg)\\
						&=  E\bigg(\frac{\frac{1}{m}\sum_{i=1}^m (t_{n_0}^{i}-\hat{t}_{n_0}^{i})\mathbbm{1}(\hat{t}_{n_0}^{i}\leq\hat{t}_{n_0}^{(s)})}{\frac{1}{m}(\sum_{i=1}^m \mathbbm{1}(\hat{t}_{n_0}^{i}\leq\hat{t}_{n_0}^{(s)}))\vee1}\bigg)\\
					\end{split}
				\end{equation}
				Now, 
				\begin{multline*}
					var(\frac{1}{m}\sum_{i=1}^m (t_{n_0}^{i}-\hat{t}_{n_0}^{i})\mathbbm{1}(\hat{t}_{n_0}^{i}\leq\hat{t}_{n_0}^{(s)}))  =\frac{1}{m^2}\sum_{i=1}^mvar((t_{n_0}^{i}-\hat{t}_{n_0}^{i})\mathbbm{1}(\hat{t}_{n_0}^{i}\leq\hat{t}_{n_0}^{(s)}))+\\ \frac{1}{m^2}\sum_{i\neq j}^m cov((t_{n_0}^{i}-\hat{t}_{n_0}^{i})\mathbbm{1}(\hat{t}_{n_0}^{i}\leq\hat{t}_{n_0}^{(s)}),(t_{n_0}^{j}-\hat{t}_{n_0}^{j})\mathbbm{1}(\hat{t}_{n_0}^{j}\leq\hat{t}_{n_0}^{(s)}))
				\end{multline*}
				For fixed $i,j\in[m]\ (i\neq j)$ let, 
				\begin{align*}
					\rho_{ij}=& cov((t_{n_0}^{i}-\hat{t}_{n_0}^{i})\mathbbm{1}(\hat{t}_{n_0}^{i}\leq\hat{t}_{n_0}^{(s)}),(t_{n_0}^{j}-\hat{t}_{n_0}^{j})\mathbbm{1}(\hat{t}_{n_0}^{j}\leq\hat{t}_{n_0}^{(s)}))\\
					\leq& var((t_{n_0}^{i}-\hat{t}_{n_0}^{i})\mathbbm{1}(\hat{t}_{n_0}^{i}\leq\hat{t}_{n_0}^{(s)})) \  (= \rho_{ii})\\
					\leq &E(((t_{n_0}^{i}-\hat{t}_{n_0}^{i})\mathbbm{1}(\hat{t}_{n_0}^{i}\leq\hat{t}_{n_0}^{(s)}))^2)\\
					\leq & E(t_{n_0}^{i}-\hat{t}_{n_0}^{i})^2 \rightarrow 0
				\end{align*}
				The last convergence follows from the dominated convergence theorem since $|t_{n_0}^{i}-\hat{t}_{n_0}^{i}| \in [-1,1]$. As a result  	\begin{equation*}
					var(\frac{1}{m}\sum_{i=1}^m (t_{n_0}^{i}-\hat{t}_{n_0}^{i})\mathbbm{1}(\hat{t}_{n_0}^{i}\leq\hat{t}_{n_0}^{(s)}))=\frac{1}{m^2} \bigg[\sum_{i=1}^m \rho_{ii} + \sum_{i\neq j} \rho_{ij}\bigg] \rightarrow 0
				\end{equation*} 
				and therefore due to weak law of large number,
				\begin{equation*}
					\begin{split}
						\frac{1}{m}\sum_{i=1}^m (t_{n_0}^{i}-\hat{t}_{n_0}^{i})\mathbbm{1}(\hat{t}_{n_0}^{i}\leq\hat{t}_{n_0}^{(s)}) &\stackrel{p}{\rightarrow}\ E((t_{n_0}^{i}-\hat{t}_{n_0}^{i})\mathbbm{1}(\hat{t}_{n_0}^{i}\leq\hat{t}_{n_0}^{(s)}))  \rightarrow 0 \\
					\end{split}
				\end{equation*}
				Since, 
				\begin{equation*}
					\begin{split}
						|E((t_{n_0}^{i}-\hat{t}_{n_0}^{i})\mathbbm{1}(\hat{t}_{n_0}^{i}\leq\hat{t}_{n_0}^{(s)})) |&\leq E(|t_{n_0}^{i}-\hat{t}_{n_0}^{i}|)\rightarrow 0.\\
					\end{split}
				\end{equation*}
				Now, if $s=m-\hat{a}_{n_0}$, $\mathbbm{1}(\hat{t}_{n_0}^{i}\leq\hat{t}_{n_0}^{(m-\hat{a}_{n_0})})=\mathbbm{1}(\hat{t}_{n_0}^{i}\leq\hat{t}_{n_0}^{u})$ almost surely due to 4.30.
				\begin{multline}
					\label{rel1}
					var(\frac{1}{m}(\sum_{i=1}^m\mathbbm{1}(\hat{t}_{n_0}^{i}\leq\hat{t}_{n_0}^{u})))=\frac{1}{m^2}\sum_{i=1}^m var( \mathbbm{1}(\hat{t}_{n_0}^{i}\leq\hat{t}_{n_0}^{u})\\
					+\frac{1}{m^2}\sum_{i\neq j}cov( \mathbbm{1}(\hat{t}_{n_0}^{i}\leq\hat{t}_{n_0}^{u}),\mathbbm{1}(\hat{t}_{n_0}^{j}\leq\hat{t}_{n_0}^{u}))
				\end{multline}
				First we consider the convergence of the covariance term. Say,
				\begin{align*}
					\hat{\rho}_{ij}=& cov( \mathbbm{1}(\hat{t}_{n_0}^{i}\leq\hat{t}_{n_0}^{u}),\mathbbm{1}(\hat{t}_{n_0}^{j}\leq\hat{t}_{n_0}^{u}))\\
					= &  \mathbb{P}(\{\hat{t}_{n_0}^{i}-\hat{t}_{n_0}^{u}\leq0\}\cap\{\hat{t}_{n_0}^{j}-\hat{t}_{n_0}^{u}\leq0\})-(\mathbb{P}(\hat{t}_{n_0}^{i}-\hat{t}_{n_0}^{u}\leq0))^2
				\end{align*}
				\textbf{Assumption 3} implies $\hat{t}_{n_0}^{i}-t_{n_0}^{i}\stackrel{p}{\rightarrow}0 \ \forall i \in[m]$ and  \textbf{Lemma \ref{lma11}} implies $\hat{t}_{n_0}^{u} \stackrel{p}{\rightarrow} \mathcal{T}_{n_0}^{u}$ as $m\uparrow\infty$. So,
				\begin{equation*}
					(\hat{t}_{n_0}^{i}-t_{n_0}^{i},\hat{t}_{n_0}^{j}-t_{n_0}^{j},\hat{t}_{n_0}^{u}-\mathcal{T}_{n_0}^{u})\stackrel{p}{\rightarrow}(0,0,0)
				\end{equation*} jointly. 
				Therefore it is easy to see that,
				\begin{multline*}
					\mathbb{P}(\{\hat{t}_{n_0}^{i}-\hat{t}_{n_0}^{u}<0\}\cap\{\hat{t}_{n_0}^{j}-\hat{t}_{n_0}^{u}<0\})
					\rightarrow
					\\ \mathbb{P}(\{t_{n_0}^{i}-\mathcal{T}_{n_0}^{l}<0\}\cap\{t_{n_0}^{j}-\mathcal{T}_{n_0}^{l}<0\})
				\end{multline*} as $m\uparrow \infty$
				Due to \textbf{Assumption 2},
				\begin{equation*}
					\mathbb{P}(\{t_{n_0}^{i}-\mathcal{T}_{n_0}^{u}<0\}\cap\{t_{n_0}^{j}-\mathcal{T}_{n_0}^{u}<0\})=
					(\mathbb{P}(t_{n_0}^{i}-\mathcal{T}_{n_0}^{u}<0))^2 \ \forall \ i,j \in [m]
				\end{equation*}
				and
				\begin{equation*}
					\begin{split}
						\mathbb{P}(\hat{t}_{n_0}^{i}-\hat{t}_{n_0}^{u}<0) \rightarrow \mathbb{P}(t_{n_0}^{i}-\mathcal{T}_{n_0}^{u}<0)
					\end{split}
				\end{equation*}
				So, the covariance term in \ref{rel1} converges to 0 for all $i,j \in[m]$. i.e.,  \begin{equation*}
					\lim_{m\uparrow\infty}  \hat{\rho}_{ij}=0
				\end{equation*}
				And we see that the variance,
				\begin{equation*}
					var(\mathbbm{1}(\hat{t}_{n_0}^{i}<\hat{t}_{n_0}^{l})\leq E(\mathbbm{1}(\hat{t}_{n_0}^{i}<\hat{t}_{n_0}^{l})))\leq 1
				\end{equation*}
				So, 	
				\begin{equation*}
					var(\frac{1}{m}(\sum_{i=1}^m \mathbbm{1}(\hat{t}_{n_0}^{i}<\hat{t}_{n_0}^{l}))) \leq \frac{1}{m^2} \sum_{i=1}^m 1 + \frac{1}{m^2} \sum_{i\neq j} \hat{\rho}_{ij} 	\rightarrow  \frac{1}{m} +0 \rightarrow 0 
				\end{equation*}
				Therefore, due to weak law of large numbers,
				\begin{equation*}
					\frac{1}{m}(\sum_{i=1}^m \mathbbm{u}(\hat{t}_{n_0}^{i}<\hat{t}_{n_0}^{u}))\stackrel{p}{\rightarrow} E(\mathbbm{u}(\hat{t}_{n_0}^{i}<\hat{t}_{n_0}^{u}))\rightarrow \mathbbm{P}_\theta(t_{n_0}^{i}<\mathcal{T}_{n_0}^{u})
				\end{equation*}
				Now, if $s>m-\hat{a}_{n_0}$,
				\begin{equation*}
					\frac{1}{m}\sum_{i=1}^m \mathbbm{1}(\hat{t}_{n_0}^{i}\leq\hat{t}_{n_0}^{(s)}) \geq \frac{1}{m}(\sum_{i=1}^m \mathbbm{1}(\hat{t}_{n_0}^{i}<\hat{t}_{n_0}^{u})) \stackrel{p}{\rightarrow} \mathbb{P}(t_{n_0}^{i}<\mathcal{T}_{n_0}^{u})
				\end{equation*}
				So, for any $s\in[m-\hat{a}_{n_0},\hat{r}_{n_0}]$
				\begin{equation*}
					\begin{split}
						\frac{1}{m}\sum_{i=1}^m (t-\hat{t}_{n_0}^{i}) \mathbbm{1}(\hat{t}_{n_0}^{i}\leq\hat{t}_{n_0}^{(s)})  \stackrel{p}{\rightarrow} 0\\
						\frac{1}{m}\sum_{i=1}^m \mathbbm{1}(\hat{t}_{n_0}^{i}\leq\hat{t}_{n_0}^{(s)}) \stackrel{p}{\rightarrow} \mathbb{P}(t_{n_0}^{i}<\mathcal{T}_{n_0}^{u})>0
					\end{split}
				\end{equation*}
				i.e., due to \ref{eqn1}, for such $s$,
				\begin{equation*}
					\frac{\sum_{i=1}^m t_{n_0}^{*i}\hat{D'}^i_{n_0}}{(\sum_{i=1}^m \hat{D'}^i_{n_0})\vee1}-\hat{Q}_{n_0}(\hat{t}_{n_0}^{(s)}) \stackrel{p}{\rightarrow} 0
				\end{equation*}
				And since,
				\begin{equation*}
					\bigg |\frac{\sum_{i=1}^m t_{n_0}^{*i}\hat{D'}^i_{n_0}}{(\sum_{i=1}^m \hat{D'}^i_{n_0})\vee1}-\hat{Q}_{n_0}(\hat{t}_{n_0}^{(s)}) \bigg | \leq 1
				\end{equation*}
				We conclude, 
				\begin{equation}
					\label{eqn3}
					\begin{split}
						\lim_{m\uparrow\infty} E\bigg(\frac{\sum_{i=1}^m t_{n_0}^{*i}\hat{D'}^i_{n_0}}{(\sum_{i=1}^m \hat{D'}^i_{n_0})\vee1}-\hat{Q}_{n_0}(\hat{t}_{n_0}^{(s)})\bigg) = 0 
					\end{split}
				\end{equation}
				Finally, $ \mathbb{P}(T_d\neq n_0)\rightarrow 1 \implies \hat{a}_{T_d}\stackrel{p}{\rightarrow}\hat{a}_{n_0}$ and $\hat{r}_{T_d}\stackrel{p}{\rightarrow}\hat{r}_{n_0}$. Therefore, for any $s\in [m-\hat{a}_{T_d},\hat{r}_{T_d}]$, $\mathbb{P}(s\in [m-\hat{a}_{n_0},\hat{r}_{n_0}])\rightarrow 1$ and for such $s$,
				\begin{equation}
					\begin{split}
						\lim_{m\uparrow\infty} (FDR-\alpha) = & \lim_{m\uparrow\infty}\bigg[ FDR- E\bigg(\frac{\sum_{i=1}^m t_{n_0}^{*i}\hat{D'}^i_{n_0}}{(\sum_{i=1}^m \hat{D'}^i_{n_0})\vee1}\bigg)+ \\
						&E\bigg(\frac{\sum_{i=1}^m t_{n_0}^{*i}\hat{D'}^i_{n_0}}{(\sum_{i=1}^m \hat{D'}^i_{n_0})\vee1}-\hat{Q}_{n_0}(\hat{t}_{n_0}^{(s)})\bigg)+E\bigg(\hat{Q}_{n_0}(\hat{t}_{n_0}^{(s)})-\alpha\bigg)\bigg]\\
						\leq & \lim_{m\uparrow\infty}\bigg[ FDR- E\bigg(\frac{\sum_{i=1}^m t_{n_0}^{*i}\hat{D'}^i_{n_0}}{(\sum_{i=1}^m \hat{D'}^i_{n_0})\vee1}\bigg)\bigg] +\\
						&  \lim_{m\uparrow\infty} E\bigg(\frac{\sum_{i=1}^m t_{n_0}^{*i}\hat{D'}^i_{n_0}}{(\sum_{i=1}^m \hat{D'}^i_{n_0})\vee1}-\hat{Q}_{n_0}(\hat{t}_{n_0}^{(s)})\bigg)\\
						=& \ 0
					\end{split}
				\end{equation}
				Last inequality follows since, for $s\leq \hat{r}_{n_0}$, $\hat{Q}_{n_0}(\hat{t}_{n_0}^{(s)}) \leq \hat{Q}_{n_0}(\hat{t}_{n_0}^{(\hat{r}_{n_0})})\leq \alpha.$ The limits tend to 0 due to \ref{eqn2} and \ref{eqn3}.
				We therefore observe that, for such a test $(T_d,\hat{D'})$, defined in \textbf{Lemma \ref{lma13}.}, 
				\begin{equation*}
					\lim_{m\uparrow\infty} FDR \leq \alpha
				\end{equation*} 
				We follow similar steps for proving asymptotic control of $FNR$. First note that,
				\begin{equation}
					\begin{split}
						FNR=&  E_\theta\bigg( \frac{\sum_{i=1}^m\theta^i(1-\hat{D'}^i)}{(\sum_{i=1}^m (1-\hat{D'}^i))\vee1}\bigg) \\
						= & E_{Z_{T_d}}  \bigg( \frac{\sum_{i=1}^m (1-t_{T_d}^{*i})(1-\hat{D'}^i_{T_d})}{(\sum_{i=1}^m (1-\hat{D'}^i_{T_d}))\vee1}\bigg)\\
					\end{split}
				\end{equation}
				Due to \textbf{Lemma \ref{lma12}.}, assumption \ref{asm2}. and due to the fact that
				\begin{equation*}
					\bigg|\frac{\sum_{i=1}^m (1-t_{T_d}^{i})(1-\hat{D'}^i_{T_d})}{(\sum_{i=1}^m (1-\hat{D'}^i_{T_d}))\vee1}\bigg| \leq 1,
				\end{equation*}
				we get
				\begin{equation}
					\label{eqn4}
					\lim_{m\uparrow\infty}\bigg[FNR - E_{Z_{T_d}}  \bigg( \frac{\sum_{i=1}^m (1-t_{n_0}^{i})(1-\hat{D'}^i_{n_0})}{(\sum_{i=1}^m (1-\hat{D'}^i_{n_0}))\vee1}\bigg)\bigg]=0
				\end{equation}
				Following our previous proof we can show that for $s\in[m-\hat{a}_{n_0},\hat{r}_{n_0}]$,  
				\begin{equation*}
					\frac{1}{m}\sum_{i=1}^m\mathbbm{1}(\hat{t}_{n_0}^{i}\geq\hat{t}_{n_0}^{(s+1)})(\hat{t}_{n_0}^{i}-t_{n_0}^{i}) \stackrel{p}{\rightarrow} 0
				\end{equation*}
				and 
				\begin{equation*}
					\begin{split}
						&\frac{1}{m}\sum_{i=1}^m\mathbbm{1}(\hat{t}_{n_0}^{i}\geq\hat{t}_{n_0}^{(s+1)})\\
						\geq & \frac{1}{m}\sum_{i=1}^m\mathbbm{1}(\hat{t}_{n_0}^{i}\geq\hat{t}_{n_0}^{l}) \stackrel{p}{\rightarrow} \mathbb{P}(t_{n_0}^{i}\geq\mathcal{T}_{n_0}^{l})>0
					\end{split}
				\end{equation*}
				by weak law of large numbers. Therefore, 
				\begin{equation*}
					\begin{split}
						&\frac{\sum_{i=1}^m (1-t_{n_0}^{i})(1-\hat{D'}^i_{n_0})}{(\sum_{i=1}^m (1-\hat{D'}^i_{n_0}))\vee1}-\hat{Q}_{n_0}'(\hat{t}_{n_0}^{(s)})\\
						=& \frac{\frac{1}{m}\sum_{i=1}^m\mathbbm{1}(\hat{t}_{n_0}^{i}\geq\hat{t}_{n_0}^{(s+1)})(\hat{t}_{n_0}^{i}-t_{n_0}^{i})}{\frac{1}{m}\sum_{i=1}^m\mathbbm{1}(\hat{t}_{n_0}^{i}\geq\hat{t}_{n_0}^{(s+1)})\vee 1} \stackrel{p}{\rightarrow} 0
					\end{split}
				\end{equation*} 
				and since 
				\begin{equation*}
					\bigg| \frac{\frac{1}{m}\sum_{i=1}^m\mathbbm{1}(\hat{t}_{n_0}^{i}\geq\hat{t}_{n_0}^{(s+1)})(\hat{t}_{n_0}^{i}-t_{n_0}^{i})}{\frac{1}{m}\sum_{i=1}^m\mathbbm{1}(\hat{t}_{n_0}^{i}\geq\hat{t}_{n_0}^{(s+1)})\vee 1}\bigg|\leq 1
				\end{equation*}
				we conclude
				\begin{equation}\label{eqn5}
					\lim_{m\uparrow\infty}E\bigg(\frac{\sum_{i=1}^m (1-t_{n_0}^{i})(1-\hat{D'}^i_{n_0})}{(\sum_{i=1}^m (1-\hat{D'}^i_{n_0}))\vee1}-\hat{Q}_{n_0}'(\hat{t}_{n_0}^{(s)})\bigg)=0
				\end{equation}
				Finally, 
				\begin{equation}
					\begin{split}
						\lim_{m\uparrow\infty} (FNR-\beta) = & \lim_{m\uparrow\infty}\bigg[FNR - E_{Z_{T_d}}  \bigg( \frac{\sum_{i=1}^m (1-t_{n_0}^{i})(1-\hat{D'}^i_{n_0})}{(\sum_{i=1}^m (1-\hat{D'}^i_{n_0}))\vee1}\bigg)+ \\
						&E\bigg(\frac{\sum_{i=1}^m (1-t_{n_0}^{i})(1-\hat{D'}^i_{n_0})}{(\sum_{i=1}^m (1-\hat{D'}^i_{n_0}))\vee1}-\hat{Q}_{n_0}'(\hat{t}_{n_0}^{(s)})\bigg)+\\ &E(\hat{Q}_{n_0}'(\hat{t}_{n_0}^{(s+1)})-\beta)\bigg]\\
						\leq & \lim_{m\uparrow\infty}\bigg[ FNR - E_{Z_{T_d}}  \bigg( \frac{\sum_{i=1}^m (1-t_{n_0}^{i})(1-\hat{D'}^i_{n_0})}{(\sum_{i=1}^m (1-\hat{D'}^i_{n_0}))\vee1}\bigg)\bigg] +\\
						&  \lim_{m\uparrow\infty} E\bigg(\frac{\sum_{i=1}^m (1-t_{n_0}^{i})(1-\hat{D'}^i_{n_0})}{(\sum_{i=1}^m (1-\hat{D'}^i_{n_0}))\vee1}-\hat{Q}_{n_0}'(\hat{t}_{n_0}^{(s)})\bigg)\\
						=& \ 0
					\end{split}
				\end{equation}
				Last inequality follows since, for $s\geq m-\hat{a}_{n_0}$, \begin{equation*}
					\hat{Q}_{n_0}'(\hat{t}_{n_0}^{(s+1)}) \leq \hat{Q}_{n_0}(\hat{t}_{n_0}^{(m-\hat{a}_{n_0}+1)}) \leq \beta.
				\end{equation*} The limits tend to 0 due to \ref{eqn4} and \ref{eqn5}.
				We therefore observe that, for such a test $(T_d,\hat{D'})$, defined in \textbf{Lemma \ref{lma13}.}, 
				\begin{equation*}
					\lim_{m\uparrow\infty} FNR \leq \beta
				\end{equation*} 
				So $(T_d,\hat{D'})\in\Delta'(\alpha,\beta)$. i.e., \textbf{Lemma \ref{lma13}.} is proved.
				
			\end{proof}

			 \bibliographystyle{apalike}      

\end{document}